\definecolor{tablegreen}{RGB}{170,251,170}
\definecolor{tablered}{RGB}{255,150,150}
\definecolor{tableyellow}{RGB}{255,255,170}
\newcommand{\greencell}{\cellcolor{tablegreen}}
\newcommand{\redcell}{\cellcolor{tablered}}
\newcommand{\yellowcell}{\cellcolor{tableyellow}}
\definecolor{orange}{RGB}{255,121,0}
\definecolor{yellow}{RGB}{255,220,0}
\definecolor{purple}{RGB}{145,30,255}
\definecolor{pink}{RGB}{255,100,200}
\definecolor{green}{RGB}{80,255,135}
\definecolor{blue}{RGB}{80,200,255}
\definecolor{grey}{RGB}{143,143,143}
\definecolor{red}{RGB}{255,0,0}
\newcommand{\BigO}{\mathcal{O}}
\newcommand{\Tree}{\mathcal{T}}
\newcommand{\Cycle}{\mathcal{C}}
\newcommand{\Wheel}{\mathcal{W}}
\newcommand{\Clique}{\mathcal{K}}
\newcommand{\Path}{\mathcal{P}}
\newcommand{\Star}{\mathcal{S}}
\newcommand{\Graph}{\mathcal{G}}
\newcommand{\ubar}{\overline{u}}
\newcommand{\vbar}{\overline{v}}
\newcommand{\ebar}{\overline{e}}
\newcommand{\sbar}{\overline{s}}
\newcommand{\tbar}{\overline{t}}
\newcommand{\np}{$\mathsf{NP}$}
\newcommand{\p}{$\mathsf{P}$\xspace}
\newcommand{\vne}{\textsc{VNE}\xspace}
\newcommand{\univne}{\textsc{uniVNE}\xspace}
\newcommand{\vnelong}{\textsc{Virtual Network Embedding Problem}\xspace}
\newcommand{\univnelong}{\textsc{Uniform Demand VNE}\xspace}
\newcommand{\flow}{\textsc{SIFP}\xspace}
\newcommand{\flowlong}{\textsc{Minimum Cost Single Integer Flow Problem}\xspace}
\newcommand{\edpp}{\textsc{EDPP}\xspace}
\newcommand{\hgp}{\textsc{HGP}\xspace}
\newcommand{\hgplong}{\textsc{Hamiltonian Graph Problem}\xspace}
\newcommand{\tsp}{\textsc{TSP}\xspace}
\newcommand{\tsplong}{\textsc{Traveling Salesman Problem}\xspace}
\newcommand{\bpp}{\textsc{BPP}\xspace}
\newcommand{\bpplong}{\textsc{Bin Packing Problem}\xspace}
\newcommand{\mcla}{\textsc{MCLA}\xspace}
\newcommand{\mclalong}{\textsc{Minimum Cut Linear Arrangement}\xspace}
\newtheorem{theorem}{Theorem}
\newtheorem{proposition}[theorem]{Proposition}
\newtheorem{lemma}[theorem]{Lemma}
\newtheorem{corollary}[theorem]{Corollary}
\newtheorem{open}[theorem]{Open Problem}
\tikzstyle{node}=[
\tikzstyle{vnode}=[
\tikzstyle{snode}=[
\tikzstyle{networkedge}=[thick,color=black,draw]
\tikzstyle{routingedge}=[thick,color=black,draw]
\tikzstyle{placementedge}=[thick,->,dashed,black!50,draw]
\newcommand{\problemtitle}[1]{\gdef\@problemtitle{#1}}
\newcommand{\probleminput}[1]{\gdef\@probleminput{#1}}
\newcommand{\problemquestion}[1]{\gdef\@problemquestion{#1}}
    \par\addvspace{.5\baselineskip}
    \par\addvspace{.5\baselineskip}
\begin{document}

\title{Complexity of the Virtual Network Embedding\\ with uniform demands}

\author[1]{Amal Benhamiche}
\author[2]{Pierre Fouilhoux}
\author[2]{Lucas Létocart}
\author[1]{Nancy Perrot}
\author[1,2]{Alexis Schneider}

\affil[1]{Orange Innovation, F-92320 Châtillon, France}
\affil[2]{Université Sorbonne Paris Nord, CNRS, Laboratoire d'Informatique de Paris Nord, LIPN, F-93430 Villetaneuse, France\footnote{amal.benhamiche@orange.com, pierre.fouilhoux@lipn.univ-paris13.fr, lucas.letocart@lipn.univ-paris13.fr, nancy.perrot@orange.com, alexis.schneider@orange.com}}

\maketitle

\abstract{We study the complexity of the Virtual Network Embedding Problem (VNE), which is the combinatorial core of several telecommunication problems related to the implementation of virtualization technologies, such as Network Slicing.
VNE is to find an optimal assignment of virtual demands to physical resources, encompassing simultaneous placement and routing decisions. 
The problem is known to be strongly NP-hard, even when the virtual network is a uniform path, but is polynomial in some practical cases. This article aims to draw a cohesive frontier between easy and hard instances for VNE.
For this purpose, we consider uniform demands to focus on structural aspects, rather than packing ones. To this end, specific topologies are studied for both virtual and physical networks that arise in practice, such as trees, cycles, wheels and cliques.
Some polynomial greedy or dynamic programming algorithms are proposed, when the physical network is a tree or a cycle, whereas other close cases are shown NP-hard.\\
~\\
{\bf Keywords: } Virtual Network Embedding, Complexity, Graph topologies, Dynamic Programming, Polynomial cases
}

\section{Introduction}

In the near future, network operators will rely on virtualization to deploy on-demand services, such as slices in 5G/6G networks \cite{wesley20205G, wesley2021slicing}, sd-wan overlay topologies \cite{tootaghaj2020sdwan}, or distributed AI topologies \cite{addis2025federatedlearning}. Each service will have a dedicated virtual network, specifically designed to meet its requirements \cite{anderson2005virtualization}. A key challenge related to the deployment of such services is the assignment of virtual network elements to substrate (i.e. physical) network resources. 

The underlying combinatorial core problem is the \vnelong (\vne), that requires simultaneous allocation and routing decisions and can be introduced as follows.

In the virtual network, nodes correspond to virtual network functions and containers with computational demands (e.g. CPU, RAM or storage), and are connected by edges having bandwidth requirements.
Nodes of the substrate network proposes computational resources while edges have a fixed bandwidth capacity.
Virtual nodes must then be placed on substrate nodes and virtual edges routed on substrate paths linking those nodes.
Both substrate and virtual networks might have specific structures encountered in real cases, such as rings for optical fiber networks \cite{cosares1994sonet} or trees for data centers \cite{ballani2011datacenter}. \\

Solving the \vne problem and its variants is well-studied in the literature (a survey is done by \cite{fischer2013survey}). 
In contrast, underlying structural aspects are less considered.
In particular, few articles deal with \vne complexity. 
The first valid proof of \np-hardness for the problem was proposed by \cite{tieves2016complexity}, correcting an earlier reference.
These results were strengthen by \cite{rost2020hardness}, where \vne is proven to remain \np-hard, when considering capacity as well as latency and placement restrictions constraints. 
\vne is shown polynomial when the virtual network is a star with uniform demands on nodes and edges by \cite{rost2015stars}.
For a uniform-demand path virtual network, however, the problem is known to be already \np-hard \cite{wu2020path}. 
Finally, the complexity of \vne for several tree topologies on virtual and substrate networks, when both networks have equal size, is studied by \cite{pankratov2023tree}. 

In this paper, we investigate the natural question of drawing a clearer boundary between \np-hard cases and polynomial ones. This allows for a better understanding of the inherent complexity of the problem.
To this end, we consider the specific case of \univnelong (\univne). Indeed, when the demands are non-uniform, the virtual node assignment contains Bin Packing or Knapsack aspects (which are the key of some proofs of \cite{tieves2016complexity, pankratov2023tree}), leading to \np-hardness for even simple topologies. Note that \univne retains the simultaneous placement and routing aspects of the initial problem.  \\

The \univne can be formally defined as follows. 
The virtual network (resp. substrate) is an undirected simple connected graph $\Graph_r = (V_r, E_r)$  (resp. $\Graph_s = (V_s, E_s)$) with $n_r$ nodes (resp. $n_s$ nodes). Substrate node (resp. edge) have a computational (resp. bandwidth) capacity $c(u) \in \mathbb{N}^+$ for $u \in V_s$ (resp. $c(e) \in \mathbb{N}^+$ for $e \in E_s$).

Allocating the substrate resources to the virtual demands is to find a \textit{mapping} $m = (m_V, m_E)$ of $\Graph_r$ on $\Graph_s$, which is a pair of functions, such that $m_V: V_r \rightarrow V_s$ is called the \textit{node placement}; and $m_E: E_r \rightarrow P_s$, is called the \textit{edge routing}, where $P_s$ is the set of loop-free paths of $\Graph_s$ and for each $\ebar = (\ubar,\vbar) \in E_r$, $m_E(\ebar)$ is a path of $\Graph_s$ connecting $m_V(\ubar)$ and $m_V(\vbar)$.

A substrate node (resp. edge) is said to \textit{host} the virtual nodes that are placed on it (resp. virtual edges that are routed on it). A mapping (also called an embedding) is said \textit{feasible} when the capacity constraints are satisfied, i.e. the number of virtual nodes (resp. edges) being hosted by a substrate node (resp. edge) is less or equal than the capacity of this node (resp. edge).

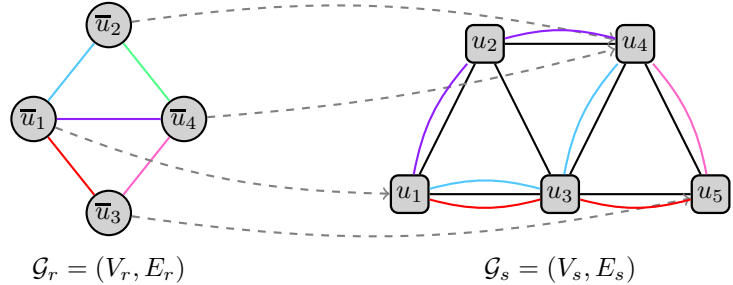
\begin{figure}
    \centering
    \begin{tikzpicture}
    \begin{scope}
        \node at (2,-2) {$\Graph_r = (V_r, E_r)$};
        \node[vnode] (v1) at (1,0){$\ubar_1$};
        \node[vnode] (v2) at (2,1.25){$\ubar_2$};
        \node[vnode] (v3) at (2,-1.25){$\ubar_3$};
        \node[vnode] (v4) at (3,0){$\ubar_4$};
    \end{scope}
    \draw[networkedge, blue] (v1) -- (v2);
    \draw[networkedge, purple] (v1) -- (v4);
    \draw[networkedge, green] (v2) -- (v4);
    \draw[networkedge, red] (v1) -- (v3);
    \draw[networkedge, pink] (v3) -- (v4);

    \begin{scope}[xshift=6cm]
        \node at (2,-2) {$\Graph_s = (V_s, E_s)$};
        \node[snode] (u1) at (0,-1){$u_1$};
        \node[snode] (u2) at (1,1){$u_2$};
        \node[snode] (u3) at (2,-1){$u_3$};
        \node[snode] (u4) at (3,1){$u_4$};
        \node[snode] (u5) at (4,-1){$u_5$};
    \end{scope}
    \draw[networkedge] (u1) -- (u2);
    \draw[networkedge] (u1) -- (u3);
    \draw[networkedge] (u2) -- (u4);
    \draw[networkedge] (u2) -- (u3);
    \draw[networkedge] (u3) -- (u4);
    \draw[networkedge] (u4) -- (u5);
    \draw[networkedge] (u3) -- (u5);

    \draw[placementedge] (v1) to[bend right=10] (u1);
    \draw[placementedge] (v2) to[bend left = 10] (u4);
    \draw[placementedge] (v3) to[bend right= 10] (u5);
    \draw[placementedge] (v4) to[bend right = 5] (u4);

    \draw[networkedge, blue] (u4) to[bend right=15] (u3);
    \draw[networkedge, blue] (u3) to[bend right=15] (u1);
    \draw[networkedge, pink] (u4) to[bend left=15] (u5);
    \draw[networkedge, red] (u1) to[bend right=15] (u3);
    \draw[networkedge, red] (u3) to[bend right=15] (u5);
    \draw[networkedge, purple] (u4) to[bend right=15] (u2);
    \draw[networkedge, purple] (u2) to[bend right=15] (u1);
        
    \end{tikzpicture}
    \caption{Example of a mapping}
    \label{fig:section1:example-vne}
\end{figure}

Figure \ref{fig:section1:example-vne} illustrates the embedding of a simple virtual graph on the left, over a substrate graph on the right. The dotted arrows show the placement of virtual nodes. A virtual edge of a given color is routed using a substrate path of the same color.
Note that one substrate node (resp. edge), as $u_4$ (resp. $(u_1, u_3)$), can host several virtual nodes (edges). 
The edge $(\ubar_2, \ubar_4)$ is then routed using an empty path.

\begin{decproblem}
    \problemtitle{Existence-\univne (E-\univne)}
    \probleminput{$\Graph_r = (V_r, E_r)$, $\Graph_s = (V_s, E_s)$ with capacities $c: V_s \times E_s \rightarrow \mathbb{N}^+$.}
    \problemquestion{Is there a feasible mapping of $\Graph_r$ on $\Graph_s$ ?}
\end{decproblem}

Assigning a virtual node (resp. edge) on a substrate node $u \in V_s$ (resp. edge $e \in E_s$) infers a cost $w_u \in \mathbb{N}_+$ (resp. $w_e \in \mathbb{N}_+$).
The cost of a mapping is then the sum of the placement and routing costs. The Cost-\univne (C-\univne) asks to find the minimum cost embedding of $\Graph_r$ on $\Graph_s$, and is equivalent to the decision problem:

\begin{decproblem}
    \problemtitle{Cost-\univne (C-\univne)}
    \probleminput{$\Graph_r = (V_r, E_r)$, $\Graph_s = (V_s, E_s)$ with capacities $c: V \times E \rightarrow \mathbb{N}_+$, and costs $w: V \times E \rightarrow \mathbb{N}_+$ and a budget $\Gamma \in \mathbb{N}$.}
    \problemquestion{Is there a feasible mapping of $\Graph_r$ on $\Graph_s$ of cost at most $\Gamma$ ?}
\end{decproblem}

As mentioned, E-\univne is known to be \np-complete even when the virtual graph is a path. In this article, we will consider particular cases on both the virtual and substrate graphs. The considered topologies are trees (denoted by $\Tree$), paths ($\Path$), wheel ($\Wheel$), cycles ($\Cycle$), stars ($\Star$) and cliques ($\Clique$). 

We propose a taxonomy $\alpha$-$\langle \beta_r \rightarrow \gamma_s \rangle$ corresponding to \univne cases where $\alpha$ is the variant (E or C), $\beta$ is the virtual graph topology, and $\gamma$ is the substrate graph topology.
For example, E-$\langle \Path_r \rightarrow \Cycle_s \rangle$ corresponds to the case of E-\univne with a virtual path and a substrate cycle.
Note that, for a $n$ nodes path or cycle, the nodes are numbered $u_1, \ldots, u_n$, according to their position. The same is true for a wheel of $n+1$ nodes, with $u_c$ denoting the center node.
Additionally, for a cycle (and for the outer cycle of a wheel), an index denoted by $\in\{1,\ldots,n\}$ will be given modulo $n$, e.g. $u_{n+1}$ is equal to $u_1$. \\

The paper is structured as follows. Each section corresponds to a substrate graph topology, and each subsection to one or several virtual graph topologies. 
Section 2 deals with substrate general graphs, which have already been studied for virtual stars and paths. We propose further \np-complete results for other virtual topologies.
In Section 3, we initiate the study of substrate cliques, which are close to the core of telecommunication networks.
Sections 4 and 5 focus on substrate trees and cycles, which are realistic topologies for data-centers and fiber networks. Beyond the difficult cases, we propose several polynomial cases using dynamic and greedy algorithms.
In conclusion we give a summary of the results and open questions in Table~\ref{table-complexity}. 

\section{Substrate general graph}

To the best of our knowledge, only two results can be found in the literature regarding the complexity of \univne. The C-$\langle \Star_r \rightarrow \Graph_s \rangle$ is proven to be in $\mathcal{P}$ by \cite{rost2015stars}, using integer flow algorithms. 
However E-$\langle  \Path_r \rightarrow \Graph_s \rangle$ is proven to be $\mathcal{NP}$-hard by \cite{wu2020path}, implying that \univne is \np-complete in the general case. We will now show that other basic topologies are also \np-complete to embed.

\subsection{Virtual cycle and wheel}

\paragraph{Virtual cycle}
Consider a graph $\Graph = (V, E)$ with $n$ nodes $u_1, \ldots, u_n$. A Hamiltonian cycle of $\Graph$ is a cycle that visits every node exactly once, denoted $u_{\pi(1)}, \ldots, u_{\pi(n)}$, where $\pi$ is a permutation of $\{1, \ldots, n\}$. The associated decision problem is defined as follows:

\begin{decproblem}
    \problemtitle{\hgplong (\hgp)}
    \probleminput{An undirected graph $\Graph = (V, E)$.}
    \problemquestion{Does $\Graph$ contain a Hamiltonian cycle ? }
\end{decproblem}

\noindent The \hgp is known to be $\mathcal{NP}$-hard, even when the node degree is up to three \cite{garey1974problems}.
\cite{wu2020path} show that $\langle \Path_r \rightarrow \Graph_s \rangle$ is \np-complete by using a reduction from the Supereulerian Graph Problem, which is equivalent to HGP for graphs with node degree up to three \cite{pulleyblankSGP1979}. We directly use the HGP to obtain new complexity results for cycles and wheels virtual graphs.

\vspace{0.25cm}
\begin{theorem}\label{E-CrGs} \label{theorem:section2:cyclewheel:E-CrGs}
    \textrm{E-}$\langle  \Cycle_r \rightarrow \Graph_s \rangle$ is $\mathcal{NP}$-hard
\end{theorem}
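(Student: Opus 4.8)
The plan is to reduce from \hgp, exactly as the surrounding text hints. Given an instance $\Graph = (V,E)$ of \hgp with $n$ nodes, I would build an instance of E-$\langle \Cycle_r \rightarrow \Graph_s \rangle$ by taking the substrate graph to be $\Graph_s := \Graph$ with unit capacities on all nodes and all edges, i.e. $c(u) = 1$ for every $u \in V_s$ and $c(e) = 1$ for every $e \in E_s$, and taking the virtual graph to be $\Cycle_r$, the cycle on exactly $n = n_s$ vertices $\bar u_1, \dots, \bar u_n$. The construction is clearly polynomial. The claim to prove is: $\Graph_s$ admits a feasible mapping of $\Cycle_r$ if and only if $\Graph$ has a Hamiltonian cycle.

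For the forward direction, suppose $m = (m_V, m_E)$ is a feasible mapping. Since every substrate node has capacity $1$ and the virtual cycle has $n = n_s$ nodes, $m_V$ must be a bijection $V_r \to V_s$; in particular no two virtual nodes share a substrate node, so every virtual edge $(\bar u_i, \bar u_{i+1})$ is routed on a \emph{non-empty} path. Each substrate edge has capacity $1$, so the $n$ routing paths $m_E(\bar u_1 \bar u_2), \dots, m_E(\bar u_n \bar u_1)$ are pairwise edge-disjoint. Walking around the virtual cycle and concatenating these paths yields a closed walk in $\Graph_s$ that passes through $m_V(\bar u_1), \dots, m_V(\bar u_n)$ in order, hence through all $n$ substrate nodes. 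The key point is that this closed walk has length exactly $n$: indeed it visits $n$ distinct nodes, and a closed walk visiting $k$ distinct nodes has length at least $k$, with equality only when it is a cycle; but it cannot exceed $n$ because the paths are edge-disjoint and each is a shortest-possible connection — more carefully, one argues that if some routing path had length $\ge 2$ then, by edge-disjointness and the fact that the union of the paths is a connected subgraph spanning all $n$ nodes with at most... This is the step I expect to need the most care: ruling out the possibility that the virtual edges are routed on long paths that reuse nodes. The clean way is: the union $H$ of the $n$ edge-disjoint routing paths is a subgraph of $\Graph_s$ in which every node has even degree (each $m_V(\bar u_i)$ is an endpoint of exactly two routing paths, and any internal node of a routing path gets degree $2$ from that path), $H$ is connected and spans all $n$ nodes, so $|E(H)| \ge n$; but also $H$ is a subgraph of $\Graph_s$ which itself may have more than $n$ edges, so this alone does not force $|E(H)| = n$. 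Instead I would argue directly on node-degrees in $H$: a node used as an internal vertex of some routing path, or shared as an endpoint plus appearing internally, would have degree $\ge 3$ or would be visited twice — and since $m_V$ is a bijection and capacities are $1$, I can in fact just observe that any node internal to a routing path is distinct from all placement images, contradicting that $m_V$ is onto with the node count tight. Hence every routing path has length exactly $1$, the $n$ virtual edges map to $n$ distinct substrate edges forming a Hamiltonian cycle of $\Graph$.

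For the converse, if $u_{\pi(1)}, \dots, u_{\pi(n)}$ is a Hamiltonian cycle of $\Graph$, set $m_V(\bar u_i) := u_{\pi(i)}$ and route each virtual edge $(\bar u_i, \bar u_{i+1})$ on the single substrate edge $(u_{\pi(i)}, u_{\pi(i+1)})$, which exists because consecutive nodes on a Hamiltonian cycle are adjacent. Every substrate node hosts exactly one virtual node and every substrate edge used hosts exactly one virtual edge, so the mapping is feasible. This completes both directions. Since E-$\langle \Cycle_r \rightarrow \Graph_s \rangle$ is clearly in $\np$ (a mapping is a polynomial-size certificate), the reduction gives $\np$-completeness; stated as $\np$-hardness, the theorem follows. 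I would also remark that, since \hgp remains hard for graphs of maximum degree $3$ \cite{garey1974problems}, the hardness persists even when the substrate graph has bounded degree.
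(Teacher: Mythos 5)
There is a genuine gap in your backward direction, at exactly the step you flagged as delicate. Your final justification --- that ``any node internal to a routing path is distinct from all placement images'' --- is not a constraint of the problem: node capacities bound only the number of virtual nodes \emph{placed} on a substrate node, not the number of routing paths that pass through it, so a routing path may freely traverse a node that hosts a virtual node. Without a degree restriction on the \hgp instance the equivalence you claim is in fact false. Take $\Graph$ to be two triangles sharing a vertex, say $V=\{a,b,c,d,e\}$ and $E=\{ab,bc,ca,cd,de,ec\}$: it has no Hamiltonian cycle ($c$ is a cut vertex), yet $\Cycle_5$ embeds feasibly with unit capacities by placing $\ubar_1,\dots,\ubar_5$ on $a,b,c,d,e$, routing the first four virtual edges on the single edges $ab,bc,cd,de$, and routing $(\ubar_5,\ubar_1)$ on the length-two path $e\!-\!c\!-\!a$; every edge is used once and every node hosts one virtual node. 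This is precisely the Supereulerian phenomenon the paper alludes to when discussing $\langle \Path_r \rightarrow \Graph_s\rangle$.

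The paper's proof closes this hole by reducing from \hgp restricted to graphs of maximum degree three (still \np-hard) and using that bound in the argument: each substrate node $u=m_V(\ubar_i)$ already has two of its at most three incident edges consumed as the terminal edges of the non-empty, edge-disjoint paths $m_E(\ubar_{i-1},\ubar_i)$ and $m_E(\ubar_i,\ubar_{i+1})$, so no routing path can pass through $u$ internally (that would require two further incident edges); since unit node capacities force every substrate node to be a placement image, every routing path is a single edge and the placement order yields a Hamiltonian cycle. Your closing remark treats the degree-three restriction as an optional strengthening of the result, when it is in fact the ingredient that makes the reduction correct; everything else in your proposal (the construction, the forward direction, membership in \np) matches the paper.
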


\begin{proof}
    Consider a HGP instance $\Graph$ with $n \ge 2$ nodes of degree two or three. An instance of E-\univne can be constructed as follows. The virtual graph is a cycle graph $\Cycle_r$ with $n$ nodes $\ubar_1, \ldots, \ubar_n$. The substrate graph $\Graph_s$ is $\Graph$, with unit capacities on nodes and edges. We show that $\Graph$ contains a Hamiltonian cycle if and only if there is a feasible mapping of $\Cycle_r$ on $\Graph_s$.

    ($\Rightarrow$) Suppose $\Graph$ contains a Hamiltonian cycle $u_{\pi(1)}, \ldots, u_{\pi(n)}$. The following mapping $m = (m_V, m_E)$ is clearly feasible:
    $m_V(\ubar_i) = u_{\pi(i)}$ and $m_E( \ubar_i, \ubar_{i+1}) = (u_{\pi(i)}, u_{\pi(i+1)})$, for $i \in \{1, \ldots, n\}$.

    ($\Leftarrow$) Suppose there is a feasible mapping $m = (m_V, m_E)$. Since the $n$ substrate nodes have unit capacities, every substrate node hosts exactly one of the $n$ virtual nodes. Let us show that a virtual edge is routed on a single substrate edge. Indeed, for a substrate node $u = m_V(\ubar_i)$, $i \in \{1, \ldots, n\}$, two of its adjacent edges belongs to paths $m_E(\ubar_{i-1}, \ubar_i)$ and $m_E(\ubar_{i}, \ubar_{i+1})$. Since $u$ has degree up to three, no other virtual edge can be routed through $u$. Moreover, as a mapping preserves the connectivity of $\Cycle_r$, the sequence $m_V(\ubar_1),  m_V(\ubar_2), \ldots, m_V(\ubar_n), m_V(\ubar_1) $ is thus a Hamiltonian cycle of $\Graph$. 
\end{proof}

\paragraph{Virtual wheel} 

The following result is a not so straightforward extension of the previous theorem.

\vspace{0.25cm}
\begin{corollary}\label{corollary:section2:cyclewheel:E-WrGs}
    E-$\langle  \Wheel_r \rightarrow \Graph_s \rangle$ is $\mathcal{NP}$-hard
\end{corollary}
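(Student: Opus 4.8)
The plan is to adapt the reduction from \hgp (restricted to graphs of maximum degree three) used for Theorem~\ref{theorem:section2:cyclewheel:E-CrGs}, modifying the substrate by attaching a single high‑degree ``hub'' node so that it can accommodate the center of the wheel. Given a \hgp instance $\Graph=(V,E)$ with $n=|V|$ nodes of degree two or three, I would first dispose of the trivial range $n\le 4$ by solving \hgp directly in constant time and outputting a fixed equivalent instance; so assume $n\ge 5$. The substrate $\Graph_s$ is obtained from $\Graph$ by adding one node $z$ adjacent to every node of $\Graph$, with unit capacities on all nodes and edges, and the virtual graph is the wheel $\Wheel_r$ on $n+1$ nodes with outer cycle $\ubar_1,\dots,\ubar_n$ and center $\ubar_c$. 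The claim to prove is that $\Graph$ has a Hamiltonian cycle if and only if $\Wheel_r$ embeds feasibly into $\Graph_s$.

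The forward direction is immediate: from a Hamiltonian cycle $u_{\pi(1)},\dots,u_{\pi(n)}$ of $\Graph$, set $m_V(\ubar_c)=z$ and $m_V(\ubar_i)=u_{\pi(i)}$, route each outer edge $(\ubar_i,\ubar_{i+1})$ on $(u_{\pi(i)},u_{\pi(i+1)})$ and each spoke $(\ubar_c,\ubar_i)$ on $(z,u_{\pi(i)})$; then each substrate node hosts one virtual node and each substrate edge at most one virtual edge, so the mapping is feasible. For the converse, I would start from the fact that $|V_s|=|V_r|=n+1$, so unit node capacities force $m_V$ to be a bijection. The crucial first step is to pin down the center: the $n$ spoke paths all have $m_V(\ubar_c)$ as an endpoint and, by unit edge capacities, their first edges at that node are pairwise distinct, hence $\deg_{\Graph_s}(m_V(\ubar_c))\ge n$; since every node of $\Graph$ has $\Graph_s$‑degree at most $3+1=4<n$ whereas $\deg_{\Graph_s}(z)=n$, necessarily $m_V(\ubar_c)=z$.

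Once the center is on $z$, the $n$ spokes use one distinct edge each at $z$, so all $n$ edges incident to $z$ are saturated; consequently no routing path can traverse $z$, and in particular the outer edges are routed entirely inside $\Graph$. The rest then mirrors the proof of Theorem~\ref{theorem:section2:cyclewheel:E-CrGs}: for an original node $w=m_V(\ubar_j)$, the spoke $(\ubar_c,\ubar_j)$ consumes the edge $(z,w)$ and the two outer edges at $\ubar_j$ consume two more pairwise distinct edges of $\Graph$ at $w$, leaving at most $\deg_{\Graph}(w)-2\le 1$ free; hence no routing path can pass \emph{through} any $m_V(\ubar_j)$, and since $m_V$ is onto $V$, every spoke and every outer edge is routed on a single substrate edge. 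Thus $m_E(\ubar_c,\ubar_i)=(z,m_V(\ubar_i))$ and $m_V(\ubar_1),\dots,m_V(\ubar_n)$ is a Hamiltonian cycle of $\Graph$.

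The step I expect to be the genuine obstacle — the ``not so straightforward'' part — is forcing the center onto the hub $z$: this is what needs both the maximum‑degree‑three restriction on \hgp and the harmless assumption $n\ge 5$, and it is also what makes the rest go through, since saturating all edges at $z$ is exactly what forbids a routing path from shortcutting through the hub. Everything after that is a routine transcription of the cycle argument.
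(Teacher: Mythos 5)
Your proof is correct and follows essentially the same reduction as the paper's: augment the \hgp instance with a universal hub node of unit capacities, force the wheel's center onto it via a degree/edge-capacity count, observe that the saturated hub edges confine the outer cycle's routing to $\Graph$, and then reuse the argument of Theorem~\ref{E-CrGs}. Your treatment of small instances (insisting on $n\ge 5$ so that the hub's degree $n$ strictly exceeds the maximum degree $3+1=4$ of the original nodes in $\Graph_s$) is in fact slightly more careful than the paper's stated bound $n\ge 4$, which would be ambiguous when $\Graph$ contains a degree-three node.
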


\begin{proof}
    Consider a HGP instance $\Graph$ with $n \ge 4$ nodes of degree at most three.
    An instance of E-\univne can be constructed as follows. The virtual graph is a wheel graph $\Wheel_r$, with nodes $\ubar_1, \ldots, \ubar_n, \ubar_{c}$. The substrate graph $\Graph_s$ is $\Graph$ plus an additional node $u_{c}$ connected to every node, with unit capacities on nodes and edges. 
    We show that $\Graph$ contains a Hamiltonian  cycle if and only if there is a feasible mapping of $\Wheel_r$ on $\Graph_s$.

    ($\Rightarrow$) Suppose $\Graph$ contains a Hamiltonian cycle $u_{\pi(1)}, \ldots, u_{\pi(n)}$. The following mapping $m = (m_V, m_E)$ is clearly feasible: $m_V(\ubar_{c}) = u_{c}$ and \\
    \indent $m_V(\ubar_i) = u_{\pi(i)},~ m_E(\ubar_i, \ubar_{i+1}) =\{(u_{\pi(i)}, u_{\pi(i + 1)})\}, ~ m_E(\ubar_i, \ubar_{c}) = \{(u_{\pi(i)}, u_{c})\}$, ~ for $i \in \{1, \ldots, n\}.$

    ($\Leftarrow$) Suppose that there is a feasible mapping $m = (m_V, m_E)$. Since only $u_{c}$ has $n$ adjacent edges of unit capacities, this substrate node is the only one can that can host $\ubar_c$. Moreover, edges $\{(u_{\pi(i)}, u_{c})\}$, $i\in\{1,\ldots,n\}$, belong to paths $m_E(\ubar_i, \ubar_{c})$, then the only edges of $\Graph_s$ that are available for routing the edges of the cycle of $\Wheel_r$ are those of $\Graph$. Therefore, as shown in the proof of \ref{E-CrGs} the sequence $m_V(\ubar_1), m_V(\ubar_2), \ldots, m_V(\ubar_n)$ is a Hamiltonian cycle of $\Graph$.
\end{proof}

\subsection{Virtual clique}

The case E-$\langle \Clique_r \rightarrow \Graph_s \rangle$ has been introduced inside a proof by \cite{tieves2016complexity}  to show the NP-hardness of E-\univne. We propose a counter-example for that proof together with a new one.
This reduction uses the Hadwinger Number Problem, which is a NP-hard \cite{eppstein2009clique} problem related to minors. A graph $\mathcal{H}$ is a minor of a graph $\Graph$ if $\mathcal{H}$ can be obtained from $\Graph$ by deleting edges, vertices and by contracting edges.

\begin{decproblem}
    \problemtitle{\textsc{Hadwinger Number Problem (HNP)}}
    \probleminput{An undirected graph $\Graph = (V,E)$ and an integer $h$.}
    \problemquestion{Does $\Graph$ has a clique of size $h$ as a minor ? }
\end{decproblem}

\noindent Given an instance $(\Graph, h)$ of HNP, an instance of E-\univne is constructed by \cite{tieves2016complexity} as follows. The virtual graph $\Clique_r$ is a clique of size h. The substrate graph $\Graph_s$ is $\Graph$, with unit capacities on nodes and edges. The authors claim that there exists a feasible mapping of $\Graph_r$ on $\Graph_s$ if and only if a clique of size $h$ is a minor of $\Graph$. However, a feasible mapping of $\Clique_4$ can exist even if $\Graph_s$ is a series-parallel graph, as Figure~\ref{figure:section2:clique:counter-example} depicts. Since series-parallel graphs are the graphs that do not contain $\Clique_4$ as a minor, this contradicts the claim.

\begin{figure}
\centering
\begin{tikzpicture}
    \node at (1, -2) {$\Graph_r = \Clique_4$};
    \node[vnode] (v1) at (0,1){};
    \node[vnode] (v2) at (2,1){};
    \node[vnode] (v3) at (0,-1){};
    \node[vnode] (v4) at (2,-1){};
    \draw[networkedge, red] (v1) -- (v2);
    \draw[networkedge, purple] (v1) -- (v4);
    \draw[networkedge, green] (v2) -- (v4);
    \draw[networkedge, blue] (v1) -- (v3);
    \draw[networkedge, yellow] (v3) -- (v4);
    \draw[networkedge, orange] (v3) -- (v2);

    \begin{scope}[xshift=7cm] 
        \node at (0, -2) {$\Graph_s$};
        \node[snode] (u1) at (0, 1.5){};
        \node[snode] (u2) at (0, -1.5){};
        \node[snode] (u3) at (-1.5, 0.75){};
        \node[snode] (u4) at (1.5, 0.75){};
        \node[snode] (u5) at (-0.75, 0){};
        \node[snode] (u6) at (0.75, 0){};
        \draw[networkedge] (u1) -- (u2);
        \draw[networkedge] (u1) -- (u3);
        \draw[networkedge] (u3) -- (u5);
        \draw[networkedge] (u5) -- (u1);
        \draw[networkedge] (u5) -- (u2);
        \draw[networkedge] (u1) -- (u6);
        \draw[networkedge] (u1) -- (u4);
        \draw[networkedge] (u6) -- (u4);
        \draw[networkedge] (u6) -- (u2);
    \end{scope}

    \draw[placementedge] (v1) to[bend left=15] (u1);
    \draw[placementedge] (v3) to[bend right=15] (u2);
    \draw[placementedge] (v2) to[bend right=5] (u5);
    \draw[placementedge] (v4) to[bend right=15] (u6);

    
    \draw[networkedge, blue] (u1)[bend right = 15] to (u2);
    \draw[networkedge, red] (u1) to[bend right = 15] (u5);
    \draw[networkedge, purple] (u1) to[bend right = 15] (u6);
    \draw[networkedge, green] (u5) to[bend left = 15] (u3);
    \draw[networkedge, green] (u3) to[bend left = 15] (u1);
    \draw[networkedge, green] (u1) to[bend left = 15] (u4);
    \draw[networkedge, green] (u4) to[bend left = 15] (u6);
    \draw[networkedge, yellow] (u6) to[bend left = 15] (u2);
    \draw[networkedge, orange] (u5) to[bend right = 15] (u2);
    
\end{tikzpicture}
\caption{Counter example for the reduction from HNP to E-$\langle \Clique_r \rightarrow \Graph_s \rangle$ by \cite{tieves2016complexity} }
\label{figure:section2:clique:counter-example}
\end{figure}

Consider now the Edge-Disjoint Path Problem, which is $\mathcal{NP}$-hard, even for Series Parallel Graph \cite{nishizeki2001edpp}:

\begin{decproblem}
    \problemtitle{\textsc{Edge-Disjoint Path Problem (EDPP)}}
    \probleminput{An undirected graph $\Graph$ and a set of $k$ pairs of distinct \textit{terminal} nodes $(s_i, t_i)_{i \in \{1, \ldots, k\}}$~.}
    \problemquestion{Are there $k$ edge-disjoint paths $P_1, \ldots, P_k$, such that $P_i$ connects $s_i$ and $t_i$, $i \in \{1, \ldots, k\}$~? }
\end{decproblem}

\begin{theorem}\label{theorem:section2:cyclewheel:E-KrGs}
    E-$\langle  \Clique_r \rightarrow \Graph_s \rangle$ is \np-complete
\end{theorem}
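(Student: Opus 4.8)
The plan is to reduce from EDPP, which is NP-hard even on series-parallel graphs. Given an EDPP instance consisting of a graph $\Graph$ and $k$ terminal pairs $(s_i,t_i)$, I want to build a substrate graph $\Graph_s$ and a virtual clique $\Clique_r$ such that a feasible embedding exists iff the $k$ required edge-disjoint paths exist. The natural idea: make the virtual clique have one node per terminal, so its edges must be routed as paths between the images of the terminals. The key difficulty is that a clique on $2k$ nodes has $\binom{2k}{2}$ edges, not just the $k$ "interesting" ones $(s_i,t_i)$; we must neutralize the spurious edges so they don't interfere with the edge-disjointness requirement we actually care about.

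First I would set the virtual graph to be $\Clique_r = \Clique_{2k}$ with nodes $\bar s_1,\bar t_1,\dots,\bar s_k,\bar t_k$. The substrate graph $\Graph_s$ is built from $\Graph$ as follows: keep $\Graph$ (this is where the $k$ real paths must live), and attach a "gadget" that forces the placement $m_V(\bar s_i)$, $m_V(\bar t_i)$ onto the images of $s_i,t_i$ in the copy of $\Graph$, while providing abundant free routing capacity for all $\binom{2k}{2}-k$ other clique edges. A clean way: identify each terminal vertex of $\Graph$ with a distinct substrate vertex, and add a large complete bipartite or "apex" structure with high edge capacities connecting all terminal-images to one another so that every non-$(s_i,t_i)$ virtual edge can be routed outside $\Graph$, using edges of capacity large enough to absorb all of them. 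The capacities inside the original $\Graph$ are set to $1$ so that the $k$ virtual edges $(\bar s_i,\bar t_i)$ that are forced to traverse $\Graph$ must use edge-disjoint paths; I also need to force those particular edges into $\Graph$ rather than letting them escape through the gadget, which I'd do by making the terminal-images into cut vertices separating "their own" routing region, or by a parity/degree argument as in Theorem~\ref{E-CrGs}. Care is needed so that adding the gadget does not turn a series-parallel $\Graph$ into something that trivially has a $\Clique_{2k}$ minor — but since we only need NP-hardness of E-$\langle \Clique_r \rightarrow \Graph_s\rangle$ over general substrate graphs, the substrate need not stay series-parallel; the series-parallel hypothesis is only used to invoke NP-hardness of the source EDPP instance.

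Then the correctness argument has two directions. ($\Rightarrow$) Given $k$ edge-disjoint $s_i$–$t_i$ paths in $\Graph$, place $m_V(\bar s_i)$ and $m_V(\bar t_i)$ on the terminal-images, route $(\bar s_i,\bar t_i)$ along path $P_i$ (disjointness plus unit capacities in $\Graph$ make this feasible), and route every remaining clique edge through the high-capacity gadget; a counting check shows the gadget capacities suffice. ($\Leftarrow$) Given a feasible embedding, the placement-forcing mechanism pins each $\bar s_i,\bar t_i$ to the right terminal vertex, the degree/capacity constraints force each $(\bar s_i,\bar t_i)$ to be routed entirely inside $\Graph$, and unit capacities on $E(\Graph)$ force these $k$ routing paths to be edge-disjoint, yielding a solution to EDPP. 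Membership in NP is immediate since a mapping is a polynomial-size certificate checkable in polynomial time.

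I expect the main obstacle to be the design of the forcing gadget: simultaneously (i) pinning the placement of the $2k$ virtual vertices, (ii) giving the $\binom{2k}{2}-k$ "junk" clique edges somewhere harmless to go, and (iii) guaranteeing the $k$ "real" edges cannot cheat by using that same harmless region. Balancing (ii) and (iii) is the delicate point — it likely requires the gadget edges to have capacity exactly matching the junk-edge count while being topologically unreachable for a short $s_i$–$t_i$ route, or an argument that any escaping real edge would overload a gadget edge. Once the gadget is specified, both directions of the equivalence and the NP membership are routine.
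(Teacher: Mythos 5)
Your overall strategy coincides with the paper's: reduce from \edpplong, take the virtual graph to be the clique on the $2k$ terminals $\sbar_1,\tbar_1,\ldots,\sbar_k,\tbar_k$, keep unit capacities inside $\Graph$ so the $k$ edges $(\sbar_i,\tbar_i)$ must be routed edge-disjointly, and provide direct connections between terminals to absorb the $\binom{2k}{2}-k$ spurious clique edges. However, the two components you flag as ``the delicate point'' and leave unresolved are precisely the content of the proof, so as written the proposal has a genuine gap. First, your placement-forcing mechanism is unspecified (you suggest cut vertices or a degree argument, neither of which is developed); the paper instead simply gives capacity $0$ to every non-terminal node and capacity $1$ to every terminal node, which forces the $2k$ virtual nodes onto the $2k$ terminals, and then uses the vertex-transitivity of the uniform clique to assume w.l.o.g.\ that $m_V(\sbar_l)=s_l$ and $m_V(\tbar_l)=t_l$ --- no pinning to ``the right'' terminal is needed. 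Your phrasing suggests you believe the specific assignment matters, which would make the gadget harder than necessary.

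Second, and more seriously, your idea of a gadget with ``abundant'' or ``large enough'' capacity cannot work as stated: if the gadget edges have slack, the $k$ real edges $(\sbar_i,\tbar_i)$ can escape through the gadget and the backward direction fails. The paper resolves this by making the capacities exactly tight --- each added terminal-to-terminal edge gets capacity $1$ (or $2$ if it already exists in $\Graph$, one unit reserved for the spurious edge and one left for a real path) --- and then proving a nontrivial exchange lemma in the backward direction: if some spurious edge $(\sbar_l,\tbar_{l'})$ is \emph{incorrectly} routed on a long path, one can reroute it onto its direct edge, swapping with whichever virtual edge currently occupies that direct edge, without violating feasibility; iterating this yields a mapping in which all spurious edges sit on their direct edges, exactly one unit of capacity remains on each edge of $\Graph$, and the paths $m_E(\sbar_l,\tbar_l)$ are therefore edge-disjoint in $\Graph$. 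You anticipate that tight capacities are ``likely required'' but do not supply this rerouting argument, and without it the claim that ``unit capacities on $E(\Graph)$ force these $k$ routing paths to be edge-disjoint'' does not follow, since a feasible mapping need not route the spurious edges where you intend.
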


\begin{proof}
    Consider an EDPP instance $(\Graph, (s_i, t_i)_{i \in \{1, \ldots, k\}})$. An instance of E-\univne can be constructed as follows.    
    The virtual graph is a clique graph $\Clique_r$ with $2k$ nodes $\overline{s}_1, \overline{t}_1, \ldots, \overline{s}_k, \overline{t}_k$.
    The substrate graph $\Graph_s$ is $\Graph$, plus an additional edge between every terminal nodes not in the same pair (if this edge is not already in $\Graph$).
    A substrate node has a capacity of one for terminal nodes, and zero otherwise. Substrate edges have unit capacities, plus one if the edge is both in $\Graph$ and links two terminal nodes not in the same pair.
    We show that there are $k$ edge-disjoint paths in $\Graph$ if and only if there is a feasible mapping of $\Clique_r$ on $\Graph_s$. 

    ($\Rightarrow$) Suppose there are $k$ disjoint paths $P_1, \ldots, P_k$ between the pairs of terminal nodes $(s_i, t_i)_{i \in \{1, \ldots, k\}}$. The following mapping $m = (m_V, m_E)$ respects node and edges capacities, and is feasible: \\
    \indent $m_V(\sbar_i) = s_i$,~ $m_V(\tbar_i) = t_i$,~ $m_E(\sbar_i, \tbar_i) = P_i$, ~ for $i \in \{1, \ldots, k\}$~ and \\
    \indent $m_E(\sbar_i, \sbar_j) = \{(s_i, s_j)\}$, ~$m_E(\sbar_i, \tbar_j) = \{(s_i, t_j)\}$,~ $m_E(\tbar_i, \tbar_j) = \{(t_i, t_j)\}$, ~ for $i \neq j \in \{1,\ldots,  k\}$.

    ($\Leftarrow$) Suppose that there is a feasible mapping $m = (m_V, m_E)$. Since every node is interchangeable in a uniform-demand virtual clique, we assume, w.l.o.g., that $m_V(\sbar_l) = s_l$ and $m_V(\tbar_l) = t_l$, for $l \in \{1, \ldots k\}$. \\
    First consider that $(\sbar_l, \sbar_{l'})$, $(\sbar_l, \tbar_{l'})$ and $(\tbar_l, \tbar_{l'})$, for $l \neq l' \in \{1, \ldots, k\}$, are \textit{correctly} routed, that is: $m_E(\sbar_l, \sbar_{l'}) = \{ (s_l, s_{l'}) \}$, $m_E(\sbar_l, \tbar_{l'}) = \{ (s_l, t_{l'}) \}$ and  $m_E(\tbar_l, \tbar_{l'}) = \{ (t_l, t_{l'}) \}$. Note that such partial routing lets exactly one unit of capacity on edges that exists in $\Graph$ (and none on additional edges). The paths $m_E(\sbar_l, \tbar_l)$, $l \in \{1, \ldots, k\}$, are then edge disjoint on $\Graph$, and are thus a solution of EDPP. \\
    On the contrary, let us consider there is one \textit{incorrectly} routed virtual edge, e.g. $(\sbar_{l}, \tbar_{l'})$, $l \neq l' \in \{1, \ldots, k\}$.  We will construct another mapping $m' = (m_V, m'_E)$, where $m'_E(\sbar_{l}, \tbar_{l'}) = \{(s_l, t_{l'})\}$. There might be an edge, say $(\sbar_{i}, \tbar_j)$, $i, j \in \{1, \ldots, k\}$, that is routed through $(s_l, t_{l'})$ in $m_E$. Then $m_E(\sbar_{i}, \tbar_j) = \mu_1 \cup \{(s_l, t_{l'}) \} \cup \mu_2$ where $\mu_1$ and $\mu_2$ are distinct paths in $\Graph_s$. We set $m'_E(\sbar_{i}, \tbar_j) = \mu_1 \cup m_E(\sbar_l, \tbar_{l'}) \cup \mu_2$. For all the other virtual edges, $m'_E$ is equal to $m_E$.
    It can be seen that $m'$ is feasible, and correctly routes virtual edges $(\sbar_l, \sbar_{l'})$, $(\sbar_l, \tbar_{l'})$ and $(\tbar_l, \tbar_{l'})$, for $l \neq l' \in \{1, \ldots, k\}$. Thus, as shown previously, the paths $m_E(\sbar_i, \tbar_i)$, $i \in \{1, \ldots, k\}$, are a solution of EDPP. 
    If there is more than one incorrectly routed edge, this process can be reapplied.
\end{proof}

\section{Substrate Clique}

First, remark that when the substrate graph is a clique $\Clique_s$ such that there is a $n_r \le n_s$, a feasible mapping of $\Graph_r$ on $\Clique_s$ trivially exists.
However, we will show that, when $n_r > n_s$, E-$\langle \Graph_r \rightarrow \Clique_s \rangle$ is \np-complete, and that the cost variant is also \np-complete even when $n_r \le n_s$.

\subsection{Virtual cycle and path}

For a virtual cycle or path, the previous result can be extended to the case where $n_r > n_s$:

\vspace{0.25cm}
\begin{proposition}\label{proposition:section3:cyclepath:hamilto}
    E-$\langle \Cycle_r \rightarrow \Graph_s \rangle$ and E-$\langle \Path_r \rightarrow \Graph_s \rangle$ are in \p and have a trivial solution when $\Graph_s$ is Hamiltonian.
\end{proposition}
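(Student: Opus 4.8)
The plan is to prove that, when $\Graph_s$ is Hamiltonian, a feasible mapping of $\Cycle_r$ (resp.\ $\Path_r$) on $\Graph_s$ exists \emph{if and only if} $\sum_{u \in V_s} c(u) \ge n_r$. Since this inequality is checkable in linear time, the equivalence yields at once the membership in \p (on the class of Hamiltonian substrates) and the announced trivial mapping. Necessity is immediate and needs no assumption on $\Graph_s$: each of the $n_r$ virtual nodes is sent to some substrate node, and a substrate node $u$ hosts at most $c(u)$ of them, hence $n_r = \sum_{u \in V_s} |m_V^{-1}(u)| \le \sum_{u \in V_s} c(u)$.

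For the converse I would fix a Hamiltonian cycle $u_{\pi(1)}, \dots, u_{\pi(n_s)}, u_{\pi(1)}$ of $\Graph_s$ and fill the substrate nodes greedily in this cyclic order: place $\ubar_1, \ubar_2, \dots$ on $u_{\pi(1)}$ until its capacity $c(u_{\pi(1)})$ is reached or the virtual nodes run out, then continue on $u_{\pi(2)}$, and so on. Since $\sum_u c(u) \ge n_r$, all $n_r$ virtual nodes get placed, using the first $k \le n_s$ substrate nodes, with $u_{\pi(j)}$ hosting a nonempty block $B_j$ of consecutive virtual nodes, $1 \le |B_j| \le c(u_{\pi(j)})$. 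For the routing, a virtual edge with both endpoints in the same block is routed on the empty path; the virtual edge between the last node of $B_j$ and the first node of $B_{j+1}$ (for $1 \le j \le k-1$) is routed on the single substrate edge $(u_{\pi(j)}, u_{\pi(j+1)})$; and, in the cycle case, the remaining virtual edge from the last node of $B_k$ back to $\ubar_1$ is routed on the Hamiltonian path $u_{\pi(k)}, u_{\pi(k+1)}, \dots, u_{\pi(n_s)}, u_{\pi(1)}$ when $k \ge 2$, and on the empty path when $k = 1$; in the path case this edge is absent. Node capacities hold by construction, and every substrate edge is used at most once: the between-block edges form a set of distinct Hamiltonian-cycle edges and the single wrap-around path uses precisely the complementary ones, while all other substrate edges stay empty. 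As every capacity is $\ge 1$, the mapping is feasible.

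To conclude, on Hamiltonian substrates ``a feasible mapping exists'' is equivalent to the linear-time test $\sum_{u \in V_s} c(u) \ge n_r$, so both E-$\langle \Cycle_r \rightarrow \Graph_s \rangle$ and E-$\langle \Path_r \rightarrow \Graph_s \rangle$ restricted to such instances lie in \p; note that the decision itself never requires exhibiting a Hamiltonian cycle, only the correctness proof of the construction does. The one point that demands care is the edge-capacity bookkeeping in the construction — verifying that the between-block edges together with the lone wrap-around path cover each Hamiltonian-cycle edge at most once — which is exactly why the \emph{cyclic} greedy filling (rather than an arbitrary placement) is the right move, and why the hypothesis must be that $\Graph_s$ is Hamiltonian rather than merely connected. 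This also immediately covers substrate cliques with $n_r > n_s$, since $\Clique_s$ is Hamiltonian.
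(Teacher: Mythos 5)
Your construction is essentially the paper's own proof: fill the substrate nodes greedily along a fixed Hamiltonian cycle, route intra-block virtual edges on empty paths, between-block edges on single Hamiltonian-cycle edges, and (for $\Cycle_r$) the wrap-around edge on the complementary arc. You are somewhat more careful than the paper — making the feasibility criterion $\sum_{u \in V_s} c(u) \ge n_r$ an explicit if-and-only-if and spelling out the edge-capacity bookkeeping, which the paper only asserts implicitly — but the approach is the same.
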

 
\begin{proof}
    Since $\Graph_s$ is Hamiltonian, it contains a Hamiltonian cycle $(u_{\pi(1)},\ldots, u_{\pi(n_s)})$. 
    A feasible mapping $m = (m_V, m_E)$ can be constructed with the following procedure: set $m_V(\ubar_i) = u_{\pi(1)}$, for $i \in \{1, \ldots, c(u_{\pi(1)})\}$; $m_E(\ubar_{c(u_{\pi(i)})}, \ubar_{c(u_{\pi(i)})+1}) = \emptyset$, for $i \in \{1, \ldots, c(u_{\pi(1)}) - 1 \}$; $m_E(\ubar_{c(u_{\pi(1)})}, \ubar_{c(u_{\pi(1)})+1}) = ( (u_1, u_2) )$. Repeat this for $u_2, u_3, \ldots, u_{\pi(j)}$, such that $\sum_{l \in \{1, \ldots, j\} } c(u_{\pi(l)}) \ge n_r$, i.e. until all virtual nodes are placed. If the virtual graph is a cycle, set $m_E(\ubar_{n_r}, \ubar_1) = ( (u_{\pi(j)}, u_{\pi(j)+1}), \ldots, (u_{\pi(n_s)}, u_{\pi(1)}) )$. 
    Since we suppose $n_r \le \sum_{u \in V_s} c(u)$, the procedure always terminates. 
\end{proof}

This clearly shows the polynomiality of E-$\langle \Cycle_r \rightarrow \Clique_s \rangle$ and E-$\langle \Path_r \rightarrow \Clique_s \rangle$, which is not the case for the cost variant. Consider the \textsc{Traveling Salesman Problem}, which is \np-complete \cite{garey1974problems}:

\begin{decproblem}
    \problemtitle{\tsplong (\tsp)}
    \probleminput{An undirected clique graph $\Clique = (V, E)$ with $n$ nodes, a distance function $d(u_i, u_j) \in \mathbb{N}^*$ for $u_i \neq u_j \in V$, and an integer $\Lambda \in \mathbb{N}$~.}
    \problemquestion{Does $\Clique$ contain a Hamiltonian cycle $u_{\pi(1)}, \ldots, u_{\pi(n)}$ of length, i.e. $\sum_{i \in \{1, \ldots, n-1\}} d(u_{\pi(i)}, u_{\pi(i+1)})$, at most $\Lambda$?}
\end{decproblem}

\begin{theorem}\label{theorem:section3:cyclepath:cost}
    C-$\langle \Cycle_r \rightarrow \Clique_s \rangle$ and C-$\langle \Path_r \rightarrow \Clique_s \rangle$ are \np-complete
\end{theorem}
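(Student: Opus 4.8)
Both problems are in \np: a mapping $m=(m_V,m_E)$ is a polynomial-size certificate (each routing path is loop-free, hence has at most $n_s-1$ edges), and its feasibility and total cost are checkable in polynomial time. For \np-hardness I would reduce from \tsp in both cases. The common idea is to use unit capacities so that any mapping within the budget is forced to be a bijective node placement together with single-edge routings; such a mapping is then literally a Hamiltonian structure of the substrate clique whose routing cost equals the tour length plus a fixed additive constant.

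For C-$\langle \Cycle_r \rightarrow \Clique_s \rangle$, from a \tsp instance $(\Clique,d,\Lambda)$ on $n$ nodes $u_1,\dots,u_n$ I would take $\Cycle_r$ on $n$ nodes $\ubar_1,\dots,\ubar_n$, $\Clique_s=\Clique$ with unit capacity on every node and edge, a fixed placement cost $1$ on every substrate node, and routing cost $w_{(u_i,u_j)}=d(u_i,u_j)+M$ on every substrate edge, where $M$ is a large polynomial constant with $M>\Lambda$; the budget is $\Gamma=n+nM+\Lambda$. Unit node capacities make $m_V$ a bijection, so every virtual edge joins two distinct substrate nodes and hence uses at least one substrate edge; unit edge capacities make these used edges pairwise distinct, so at least $n$ substrate edges are used. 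The choice of $M$ then forces exactly $n$ used substrate edges in a mapping of cost $\le\Gamma$, i.e.\ each virtual edge is routed on a single edge, and $m_V(\ubar_1),\dots,m_V(\ubar_n)$ is a Hamiltonian cycle of $\Clique$ of length $(\text{routing cost})-nM\le\Lambda$. Conversely, a Hamiltonian cycle of length $\le\Lambda$ gives, by placing $\ubar_i$ on its $i$-th vertex and routing each virtual edge on the corresponding substrate edge, a feasible mapping of cost $n+nM+(\text{length})\le\Gamma$.

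For C-$\langle \Path_r \rightarrow \Clique_s \rangle$ the difficulty is that a virtual path embeds onto a Hamiltonian \emph{path}, not a cycle, of $\Clique_s$. I would bridge this with a ``doubled vertex'' gadget: from the same \tsp instance, let $\Clique_s$ have $n+1$ nodes $x,y,z_2,\dots,z_n$ --- with $x,y$ two copies of $u_1$ and $z_i$ a copy of $u_i$ --- let $\Path_r$ have $n+1$ nodes, put unit capacities everywhere and a fixed placement cost $1$, and set $w_{(z_i,z_j)}=d(u_i,u_j)+M$, $w_{(x,z_i)}=w_{(y,z_i)}=d(u_1,u_i)+M+C$, and $w_{(x,y)}$ larger than the budget, where $C>\Lambda$ and $M$ is large enough (still polynomial in $\Lambda$); the budget is $\Gamma=(n+1)+nM+2C+\Lambda$. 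As before, unit capacities force $m_V$ to be a bijection and (via $M$) every virtual edge to be routed on a single substrate edge, so the image of the path is a Hamiltonian path $P$ of $\Clique_s$ avoiding $(x,y)$. The $C$-term then forces $P$ to use at most two edges incident to $\{x,y\}$; since $P$ is Hamiltonian it uses at least one edge at $x$ and at least one at $y$, hence exactly one at each, so $x$ and $y$ are the two endpoints of $P$. Writing $P=x-z_{\pi(1)}-\dots-z_{\pi(n-1)}-y$ and identifying $x\equiv y\equiv u_1$ gives the Hamiltonian cycle $u_1-u_{\pi(1)}-\dots-u_{\pi(n-1)}-u_1$ of $\Clique$, of length $(\text{routing cost})-nM-2C\le\Lambda$. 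Conversely a tour $u_1-v_1-\dots-v_{n-1}-u_1$ of length $\le\Lambda$ is realized by mapping the two endpoints of $\Path_r$ to $x$ and $y$, the remaining virtual nodes to the copies of $v_1,\dots,v_{n-1}$, and each virtual edge to the corresponding substrate edge.

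The steps I expect to require the most care --- and the reason for the two tiers of large constants --- are the two ``no-shortcut'' claims. First, that no within-budget mapping routes a virtual edge on a path of length $\ge 2$: such a mapping would use at least $n+1$ substrate edges, each costing at least $M$, and $M$ is chosen larger than all the remaining budget slack. Second, in the path case, that $x$ and $y$ are forced to be the endpoints of $P$: an internal $x$ (or $y$) would lie on two edges of $P$, and $C$ is tuned so that three edges incident to $\{x,y\}$ overshoot the budget; note that a single node of capacity $2$ in place of the doubled vertex would \emph{not} work, because two adjacent virtual path-nodes could then be co-located, and contracting them leaves a Hamiltonian path rather than a cycle. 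Finally, the routine bookkeeping has to be checked: all capacities and costs are positive integers, and $M$, $C$, $\Gamma$ are polynomially bounded in the \tsp input, so both reductions run in polynomial time.
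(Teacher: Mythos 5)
Your argument for C-$\langle \Cycle_r \rightarrow \Clique_s \rangle$ is correct and is essentially the paper's proof: a reduction from \tsp with unit capacities and a large additive constant ($M$ for you, $\Lambda$ in the paper) loaded onto every substrate edge cost, so that any within-budget mapping is forced to route each virtual edge on a single substrate edge and hence traces a Hamiltonian cycle whose length is the routing cost minus a fixed offset. Where you genuinely diverge is the path case. The paper dismisses it with ``we can similarly show,'' but the naive analogue of the cycle reduction only recovers a cheap Hamiltonian \emph{path} of $\Clique$, which does not directly answer the \tsp question as one would normally state it (closing the path could be arbitrarily expensive). Your doubled-vertex gadget --- two unit-capacity copies $x,y$ of $u_1$, a forbidden edge $(x,y)$, and a second tier of large constants $C$ on the edges incident to $\{x,y\}$ to force $x$ and $y$ to be the two endpoints of the embedded Hamiltonian path --- resolves exactly this issue, and your remark that a single capacity-$2$ node would fail (two consecutive virtual nodes could be co-located) is the right reason why the gadget is needed. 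So your write-up is, if anything, more complete than the paper's on the second half of the statement; the only cost is the extra bookkeeping of two tiers of constants, all of which you correctly bound polynomially.
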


\begin{proof}
    Consider a \tsp instance $\Clique, d, \Lambda$. An instance of C-\univne can be constructed as follows.
    The virtual graph is a cycle graph $\Cycle_r$ with $n$ nodes.
    The substrate graph is a clique graph $\Clique_s$ with $n$ nodes. It has unit capacity on every node and edge. The costs are: $w(u_i) = 0$, for $i \in \{1, \ldots, n\}$, and $w(u_i, u_j) = \Lambda + d(c_i, c_j)$, for $i, j \in \{1, \ldots, n\}$, $i \neq j$. The bound is $\Gamma = (n + 1)\Lambda $.
    We show that $\Clique$ contains a Hamiltonian cycle of length at most $\Lambda$ if and only if there is a feasible mapping of $\Cycle_r$ on $\Clique_s$ of cost at most $\Gamma$.

    ($\Rightarrow$) Suppose $\Clique$ contains a Hamiltonian cycle $u_{\pi(1)}, \ldots, u_{\pi(n)}$ of length $\Lambda' \le \Lambda$. The following mapping $m = (m_V, m_E)$ is clearly feasible, and $W(m) = n \Lambda + \Lambda' < (n+1)\Lambda = \Gamma$: \\
    $m_V(\ubar_i) = u_{\pi(i)}$ and $m_E(\ubar_i, \ubar_{i+1}) = \{(u_{\pi(i)}, u_{\pi(i+1)})\}$, for $i \in \{1, \ldots, n\}$  

    ($\Leftarrow$) Suppose that there is a feasible mapping $m = (m_V, m_E)$ of cost $\Gamma' \le \Gamma$.
    Note that a virtual edge is routed on exactly a single substrate edge: indeed nodes have unit capacities, each path $m_E(\ubar_i, \ubar_{i+1})$, $i \in \{1, \ldots, n\}$ uses at least one edge, thus $\Gamma' > n \Lambda$. Moreover if a path of $m_E$ has two edges or more, then $\Gamma' > (n+1) \Lambda = \Gamma$. This implies that $m_V(\ubar_1), m_V(\ubar_2), \ldots, m_V(\ubar_n), m_V(\ubar_1)$ is a Hamiltonian cycle of $\Clique$. This cycle is of length $\Gamma' - n \Lambda \le (n+1) \Lambda - n \Lambda = \Lambda $.
    
    We can similarly show that C-$\langle \Path_r \rightarrow \Clique_s \rangle$ is \np-complete.
\end{proof}

\subsection{Virtual wheel and tree}

The Proposition~\ref{proposition:section3:cyclepath:hamilto} cannot be extended to the virtual wheel and tree cases, for which there does not necessarily exists a feasible mapping when $n_r > n_s$.

\vspace{0.25cm}
\begin{open}\label{theorem:section3:wheeltree:open}
    Are E-$\langle \Wheel_r \rightarrow \Clique_s \rangle$ and  E-$\langle \Tree_r \rightarrow \Clique_s \rangle$ \np-complete ?
\end{open}
\vspace{0.25cm}

For the cost variant, C-$\langle \Tree_r \rightarrow \Clique_s \rangle$ is \np-complete, even if $n_r \le n_s$, as a consequence of Theorem~\ref{theorem:section3:cyclepath:cost}.  

\vspace{0.25cm}
\begin{theorem}\label{theorem:section3:wheeltree:cost}
    C-$\langle \Wheel_r \rightarrow \Clique_s \rangle$ is \np-complete
\end{theorem}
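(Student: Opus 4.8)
The plan is to reduce from \tsp, following the same template as Theorem~\ref{theorem:section3:cyclepath:cost}, but adapting the construction so that the virtual wheel's hub edges contribute a fixed, predictable cost while the rim edges are forced to encode a Hamiltonian cycle of the \tsp clique. Given a \tsp instance $(\Clique, d, \Lambda)$ on $n$ nodes, I would build a virtual wheel $\Wheel_r$ with rim nodes $\ubar_1,\dots,\ubar_n$ and center $\ubar_c$, and take the substrate graph to be a clique $\Clique_s$ on $n+1$ nodes, so that $n_r = n+1 = n_s$. Put unit capacities on all substrate nodes and edges. The node costs are all $0$. For the edge costs I would set $w(u_i,u_j) = \Lambda + d(u_i,u_j)$ for the edges among the $n$ ``rim-candidate'' substrate nodes $u_1,\dots,u_n$, and give the $n$ edges incident to the ``center-candidate'' node $u_{n+1}$ some uniform cost, say $w(u_i,u_{n+1}) = \Lambda$. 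The budget would be chosen so that: every rim edge must be routed on exactly one substrate edge among $u_1,\dots,u_n$ (cost $> n\Lambda$ for the rim alone), every hub edge must be routed on exactly one substrate edge (forcing the center placement and contributing exactly $n\Lambda$), and the slack left for the rim's $d$-contribution is exactly $\Lambda$. Concretely $\Gamma = (2n+1)\Lambda$ should work: $n$ rim edges contribute $\ge n\Lambda$ plus total $d$-length, $n$ hub edges contribute $\ge n\Lambda$, so a feasible mapping costs $\ge 2n\Lambda + (\text{rim length})$, and $\le \Gamma$ forces rim length $\le \Lambda$.

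The two directions then go as follows. For ($\Rightarrow$), given a Hamiltonian cycle $u_{\pi(1)},\dots,u_{\pi(n)}$ of $\Clique$ of length $\Lambda' \le \Lambda$, map $\ubar_i \mapsto u_{\pi(i)}$, $\ubar_c \mapsto u_{n+1}$, route each rim edge $(\ubar_i,\ubar_{i+1})$ on the single substrate edge $(u_{\pi(i)},u_{\pi(i+1)})$ and each hub edge $(\ubar_i,\ubar_c)$ on $(u_{\pi(i)},u_{n+1})$; all capacities are respected (each substrate edge used at most once) and the cost is $n\Lambda + \Lambda' + n\Lambda \le (2n+1)\Lambda = \Gamma$. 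For ($\Leftarrow$), given a feasible mapping of cost $\le \Gamma$: since substrate nodes have unit capacity and $n_r = n_s$, the placement $m_V$ is a bijection; each of the $2n$ virtual edges is routed on a path using $\ge 1$ substrate edge, so the cost is $\ge$ the sum of the cheapest edges, and any path of length $\ge 2$ would push the cost strictly above $\Gamma$ (here one uses that every substrate edge costs $\ge \Lambda$, and $2n+1$ edges at cost $\ge \Lambda$ already leaves no room — I would verify the arithmetic carefully). Hence every virtual edge uses exactly one substrate edge, so $m_E$ maps $\Wheel_r$'s edge set injectively onto $\Wheel_r$-structured substrate edges; in particular the rim of $\Wheel_r$ maps to a Hamiltonian cycle of $\Clique_s$ restricted to the image nodes, and the center vertex of $\Wheel_r$ (degree $n$) must map to a substrate node adjacent to all $n$ rim images. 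One then argues this Hamiltonian cycle lies among $u_1,\dots,u_n$ (or, more simply, that whichever $n$ nodes carry the rim, the rim cost is $n\Lambda$ plus the sum of the corresponding $d$-values, and the hub cost is exactly $n\Lambda$), so the $d$-length of the rim cycle is $\le \Gamma - 2n\Lambda = \Lambda$, giving a \tsp tour.

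The main obstacle I anticipate is making the ``center must go somewhere specific'' argument clean when $\Clique_s$ is a clique: unlike the general-graph corollary (Corollary~\ref{corollary:section2:cyclewheel:E-WrGs}), here every substrate node has degree $n$, so the center vertex of the wheel is \emph{not} pinned down by degree alone. The cost structure must do that work instead. The cleanest fix is to observe that the argument does not actually need to pin the center: once every virtual edge is forced onto a single substrate edge, the image of $\Wheel_r$ under $m$ is a subgraph of $\Clique_s$ isomorphic to $\Wheel_r$ (a wheel on $n+1$ vertices), whose rim is an $n$-cycle $C$ through $n$ of the substrate nodes, and whose hub is the remaining node. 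Whatever these $n$ rim nodes are, they form a Hamiltonian cycle of the $n$-node clique they induce, and relabeling by the original \tsp node set, its $d$-length is $\le \Lambda$. The remaining care is purely in choosing costs and the budget so that (i) a single-edge routing of all $2n$ edges is forced and (ii) the leftover slack is exactly $\Lambda$; I would write these out explicitly and double-check that the strict inequalities in the ($\Leftarrow$) direction (as in the proof of Theorem~\ref{theorem:section3:cyclepath:cost}) still hold with the extra $n$ hub edges. A final remark would note that E-$\langle \Wheel_r \rightarrow \Clique_s \rangle$ for $n_r \le n_s$ is left open (Open Problem~\ref{theorem:section3:wheeltree:open}), so this hardness is genuinely a feature of the cost variant.
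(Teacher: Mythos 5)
Your reduction is the same one the paper uses (a wheel on $n+1$ virtual nodes into a clique on $n+1$ substrate nodes, with all substrate edge costs shifted by $\Lambda$ so that a budget of the form $c\cdot\Lambda$ forces every virtual edge onto a single substrate edge), and your forward direction and your single-edge-routing argument are sound; the only substantive divergence is exactly the step you flag, namely how to handle the placement of the wheel's center. The paper resolves it by \emph{pinning} the center: the $n$ edges incident to $u_{n+1}$ are given cost $0$ and the budget is $(n+1)\Lambda$, so a feasible mapping must route exactly $n$ virtual edges on single cheap edges; if $u_{n+1}$ hosted a rim node, only its three incident virtual edges could be routed on cheap edges, a contradiction for $n\ge 4$. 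Hence $\ubar_c$ sits on $u_{n+1}$, the rim sits on $u_1,\dots,u_n$, and the rim edges directly form a tour of length $\le\Gamma-n\Lambda$. Your alternative --- do not pin the center, and extract a tour from whatever wheel subgraph of $\Clique_s$ the mapping induces --- can be made to work, but not quite as you state it. If the hub lands on some $u_j$ with $j\le n$, then (i) the rim passes through $u_{n+1}$, so it is \emph{not} a cycle of the original \tsp instance and has no well-defined $d$-length on its two edges incident to $u_{n+1}$, and (ii) your parenthetical claim that ``the hub cost is exactly $n\Lambda$'' is false, since $n-1$ of the hub edges are then expensive and each contributes a positive $d$-term. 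The correct completion of your route is a shortcut argument: the $2n-3$ expensive substrate edges used by the mapping have total $d$-value at most $\Gamma-2n\Lambda=\Lambda$; replacing $u_{n+1}$ in the rim by the hub node $u_j$ yields a Hamiltonian cycle of $\Clique$ whose edges --- the $n-2$ expensive rim edges together with $(u_a,u_j)$ and $(u_j,u_b)$, which are among the expensive hub edges already paid for --- have total $d$-length at most that same sum, hence at most $\Lambda$. With that step written out (no triangle inequality is needed), your version is a complete and arguably more robust proof than the paper's, since it does not rely on the degree-counting argument or on $n\ge 4$; without it, the ($\Leftarrow$) direction has a hole precisely in the case the cost structure fails to pin the center.
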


\begin{proof}
    Consider a \tsp instance $(\Clique, d, \Lambda)$. An instance of C-\univne can be constructed as follows.
    The virtual graph is a wheel $\Wheel_r$ of $n+1$ nodes.
    The substrate graph $\Clique_s$ is $\Clique$ with an additional node $u_{n+1}$ connected to every other node, hence it is a clique with $n+1$ nodes. It has unit capacity on every node and edge. Its costs are: $w(u_i) = 0$, for $i \in \{1, \ldots, n\}$, $w(u_i, u_j) = \Lambda + d(u_i, u_j)$ for $i \neq j \in \{1, \ldots, n\}$ (called \textit{expensive edges}), and $w(u_i, u_{n+1}) = 0$ for $i \in \{1, \ldots, n\}$ (called \textit{cheap edges}). 
    The bound is $\Gamma = (n+1) \Lambda$.

    We show that $\Clique$ contains a Hamiltonian cycle of length at most $\Lambda$ if and only if there is a feasible mapping of $\Wheel_r$ on $\Clique_s$ of cost at most $\Gamma'$.

    ($\Rightarrow$) Consider $\Clique$ contains a Hamiltonian cycle $u_{\pi(1)}, u_{\pi(2)}, \ldots, u_{\pi(n)}$ of length $\Lambda' \le \Lambda$. The following mapping $m = (m_V, m_E)$ is clearly feasible, and $W(m) = n \Lambda + \Lambda' \le (n+1)\Lambda = \Gamma$:
    
    \noindent $m_V(\ubar_i) = u_{\pi(i)}$ and $m_E(\ubar_i, \ubar_{i+1}) = \{(u_{\pi(i)}, u_{\pi(i+1)})\}$, for $i \in \{1, \ldots, n\}$ \\
    $m_V(\ubar_c) = u_{n+1}$ and $m_E(\ubar_i, \ubar_c) = \{(u_{\pi(i)}, u_{n+1})\}$, for $i \in \{1, \ldots, n\}$.
    
    ($\Leftarrow$) Suppose that there is a feasible mapping $m = (m_V, m_E)$ of cost $\Gamma' \le \Gamma$.
    It can be shown, as done in the previous proof, that $n$ virtual edges are routed on a single expensive edge, the remaining $n$ virtual edges being routed on a single cheap edge.
    Note that, only $u_{n+1}$ can host $\ubar_c$. Indeed, suppose otherwise, i.e. $u_{n+1}$ hosts $\ubar_i$, with $i \in \{1, \ldots, n\}$. Then, only $(\ubar_{i-1}, \ubar_i)$, $(\ubar_{i-1}, \ubar_i)$ and $(\ubar_{i-1}, \ubar_i)$ are routed on a single cheap edge, which contradicts that $n$ edges must be routed on a single cheap edge. Thus $u_{n+1}$ hosts $\ubar_c$, and it follows that the routing of the virtual edges $(\ubar_i, \ubar_{n+1})$, $i \in \{1, \ldots, n\}$, uses the $n$ cheap edges. Thus each edge of the outer cycle of the wheel is routed on a single expensive edge. This implies that $m_V(\ubar_1), m_V(\ubar_2), \ldots, m_V(\ubar_n)$ is a Hamiltonian cycle of $\Clique$ of length $\Gamma' - n \Lambda \le (n+1) \Lambda = \Lambda$.
\end{proof}

\subsection{Virtual clique}

The following result, which is shown extending the proof of Theorem~\ref{theorem:section2:cyclewheel:E-KrGs}, implies that E-$\langle \Graph_r \rightarrow K_s \rangle$ is \np-complete when $n_r > n_s$:

\vspace{0.25cm}
\begin{theorem}\label{theorem:section3:clique:exist}
    E-$\langle  \Clique_r \rightarrow \Clique_s \rangle$ is $\mathcal{NP}$-hard.
\end{theorem}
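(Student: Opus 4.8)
The plan is to adapt the reduction from \edpp on series-parallel graphs used in the proof of Theorem~\ref{theorem:section2:cyclewheel:E-KrGs}. Two things force the construction to change. The substrate must now be a clique, so $\Graph$ cannot be kept almost untouched; and since a feasible mapping of a virtual clique onto a substrate clique exists trivially whenever $n_r \le n_s$ (send the virtual nodes to distinct substrate nodes and each virtual edge to the corresponding substrate edge), the reduction is bound to produce an instance with $n_r > n_s$. The idea is therefore to complete $\Graph$ into a clique and to add extra virtual nodes whose role is to \emph{saturate} the spurious substrate edges so created, forcing, for each terminal pair, one virtual edge to be re-routed inside $\Graph$.

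Given an \edpp instance $(\Graph, (s_i,t_i)_{i \in \{1,\dots,k\}})$, as in the proof of Theorem~\ref{theorem:section2:cyclewheel:E-KrGs} I may assume the $2k$ terminals to be distinct, and moreover that $(s_i, t_i) \notin E(\Graph)$ for every $i$: if $s_i t_i \in E(\Graph)$, this pair and this edge can be removed without changing solvability (in any solution the path of pair $i$ may be taken to be the edge $s_i t_i$, after splicing its former route into the path that used that edge). Let $\{s_1, t_1, \dots, s_k, t_k\}$ be the terminals and $\{w_1, \dots, w_p\}$ the remaining (Steiner) vertices, so $n_s = 2k + p$. The virtual graph $\Clique_r$ is the clique on the $3k + p$ nodes $\sbar_1, \tbar_1, \dots, \sbar_k, \tbar_k$, $\overline{z}_1, \dots, \overline{z}_k$, $\overline{y}_1, \dots, \overline{y}_p$; thus $n_r = 3k + p > n_s$. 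The substrate graph $\Clique_s$ is the complete graph on $V(\Graph)$, with node capacities $c(s_i) = 2$, $c(t_i) = 1$, $c(w_l) = 1$ (so $\sum_{u \in V_s} c(u) = 3k + p = n_r$), and edge capacities $c(u,v) = c(u)\,c(v) + \mathbf{1}\big[(u,v) \in E(\Graph)\big]$ for all edges except the same-pair edges, for which $c(s_i, t_i) = 1$.

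For the forward direction, from edge-disjoint paths $P_1, \dots, P_k$ in $\Graph$ I place $\sbar_i, \overline{z}_i$ on $s_i$, $\tbar_i$ on $t_i$, $\overline{y}_l$ on $w_l$; route $(\overline{z}_i, \tbar_i)$ on the edge $(s_i, t_i)$, route $(\sbar_i, \tbar_i)$ on $P_i$, route $(\sbar_i, \overline{z}_i)$ on the empty path, and route every remaining virtual edge $(\ubar, \vbar)$ on the substrate edge $(m_V(\ubar), m_V(\vbar))$: every substrate edge $(u,v)$ then carries at most $c(u)c(v)$ such ``direct'' virtual edges, plus at most one of the edge-disjoint $P_i$ when $(u,v) \in E(\Graph)$ (and no $P_i$ ever uses a same-pair edge), so the mapping is feasible. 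For the converse, $\sum_u c(u) = n_r$ forces each substrate node $u$ to host exactly $c(u)$ virtual nodes; after relabelling, $m_V^{-1}(s_i) = \{\sbar_i, \overline{z}_i\}$, $m_V^{-1}(t_i) = \{\tbar_i\}$, $m_V^{-1}(w_l) = \{\overline{y}_l\}$. Since $c(s_i, t_i) = 1$ whereas both $(\sbar_i, \tbar_i)$ and $(\overline{z}_i, \tbar_i)$ run from $s_i$ to $t_i$, at most one of them uses the edge $(s_i, t_i)$; renaming $\sbar_i$ and $\overline{z}_i$ if needed, $m_E(\sbar_i, \tbar_i)$ avoids $(s_i, t_i)$ and hence has length at least two.

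The remaining --- and main --- difficulty is to show these $k$ paths can be taken inside $\Graph$ and pairwise edge-disjoint. I would argue exactly as in the proof of Theorem~\ref{theorem:section2:cyclewheel:E-KrGs}: by repeatedly applying the capacity-neutral ``incorrectly routed edge'' swap, any feasible mapping can be turned into one in which every virtual edge other than the $(\sbar_i, \tbar_i)$ is routed on its single corresponding substrate edge. Such a mapping already uses $c(u)c(v)$ units on every substrate edge $(u,v)$, so the only capacity left is one unit on each edge of $E(\Graph)$ and none on the same-pair edges nor on the edges of $\Clique_s$ outside $E(\Graph)$; hence every $m_E(\sbar_i, \tbar_i)$ is a path of $\Graph$ and the $k$ of them are pairwise edge-disjoint, i.e.\ solve the \edpp instance. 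Making this rerouting rigorous is the delicate part; the key tool is the identity $\sum_{(u,v) \in E_s} c(u)c(v) = \binom{n_r}{2} - k$, from which the total substrate capacity exceeds the minimum possible total routing length by exactly $|E(\Graph)| - k$, constraining how much ``detouring'' a feasible mapping can afford. Membership of E-$\langle \Clique_r \rightarrow \Clique_s \rangle$ in \np is immediate, as for the other existence variants, so the problem is in fact \np-complete.
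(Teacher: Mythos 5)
Your proposal is correct and follows essentially the same route as the paper's own proof: a reduction from \edpp in which $\Graph$ is completed to a substrate clique, node capacities are chosen to sum to exactly $n_r$ so that the placement is pinned down (up to the interchangeability of virtual clique nodes), edge capacities equal the number of virtual edges forced to be routed "directly" plus an indicator of membership in $E(\Graph)$ (with the same-pair edges one unit short so that one virtual edge per pair must detour), and the normalization to direct routing is delegated to the same swap argument as in Theorem~\ref{theorem:section2:cyclewheel:E-KrGs}. The differences are cosmetic --- you add three virtual nodes per terminal pair plus one per Steiner vertex instead of the paper's one virtual node per substrate node plus two per pair --- and your explicit capacity formula $c(u)c(v)+\mathbf{1}\bigl[(u,v)\in E(\Graph)\bigr]$ is a cleaner bookkeeping of the same idea.
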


\begin{proof}    
    Consider an \edpp instance ($\Graph, (s_l, t_l)_{l \in \{1, \ldots, k\}}$), with nodes of $\Graph$ denoted as $u_1, \ldots, u_n$. For each terminal node $s_i$ (resp. $t_i$), $i\in \{1, \ldots, k\}$, there exists a node $u_j \in $ $j \in \{1, \ldots, n\}$, such that $s_i = u_j$ (resp. $t_i = u_j)$.  In the following, both notations might be used to represent the same node. 
    An instance of E-\univne can be constructed as follows.     
    The virtual graph is a clique $\Clique_r$ with $n + 2k$ nodes $\ubar_1, \ldots, \ubar_n, \sbar_1, \tbar_1, \ldots, \sbar_k, \tbar_k$.
    The substrate graph is a clique $\Clique_s$ with the nodes of $\Graph$. A substrate node has a capacity of two for terminal nodes, and one otherwise. Substrate edges have a capacity of one, plus one if the edge is already in $\Graph$, plus one if the edge links a pair of terminal nodes not in the same pair.

    Let us now show that there are edge disjoint paths in $\Graph$ if and only if there is a feasible mapping of $\Clique_r$ on $\Graph_s$.

    ($\Rightarrow$) Suppose there are $k$ disjoint paths $P_1, \ldots, P_k$ between the pairs of terminal nodes. The following mapping $m = (m_V, m_E)$ can be verified to be valid and feasible: \\
    \indent $m_V(\ubar_i) = u_i$, $m_E(\ubar_i, \ubar_j) = \{(u_i, u_j)\}$, for $i \neq j \in \{1, \ldots, n\}$ \\
    \indent $m_V(\sbar_l) = s_l$,~ $m_V(\tbar_l) = t_l$, $m_E(\sbar_l, \tbar_l) = P_l$, ~ for $l \in \{1, \ldots, k\}$~ and \\
    \indent $m_E(\sbar_{l}, \sbar_{l'}) = \{(s_{l}, s_{l'})\}$, ~$m_E(\sbar_{l}, \tbar_{l'}) = \{(s_{l}, t_{l'})\}$,~ $m_E(\tbar_{l}, \tbar_{l'}) = \{(t_{l}, t_{l'})\}$, ~ for $l \neq l' \in \{1,\ldots,  k\}$, \\
    \indent $m_E(\ubar_i, \sbar_l) = \{(u_i, s_l)\}$, ~$m_E(\ubar_i, \tbar_l) = \{(u_i, t_l)\}$,~ for $i \in \{1, \ldots, n\}$ and $l \in \{1,\ldots,  k\}$.

    ($\Leftarrow$) Suppose that there is a feasible mapping $m = (m_V, m_E)$. Since every node is interchangeable in a uniform-demand virtual clique, we assume w.l.o.g. that $m_V(\ubar_i) = u_i$, for $i \in \{1, \ldots, n\}$,  and that $m_V(\sbar_l) = s_l$ and $m_V(\tbar_l) = t_l$, for $l \in \{1, \ldots, k\}$. \\
    \noindent First consider that $(\ubar_{i}, \ubar_{j})$, $(\ubar_{i}, \sbar_{l})$, $(\ubar_{i}, \tbar_{l})$, $(\sbar_{l}, \sbar_{l'})$, $(\sbar_{l}, \tbar_{l'})$ and $(\tbar_{l}, \tbar_{l'})$, for $i \neq j \in \{1, \ldots, n\}$ and $l \neq l' \in \{1, \ldots, k\}$, are \textit{correctly} routed, that is: $m_E(\ubar_{i}, \ubar_{j}) = (u_{i}, u_{j})$, $m_E(\ubar_{i}, \sbar_{l}) = (u_{i}, s_{l})$, $m_E(\ubar_{i}, \tbar_{l}) = (u_{i}, t_{l})$, $m_E(\sbar_{l}, \sbar_{l'}) = (s_{l}, s_{l'})$, $m_E(\sbar_{l}, \tbar_{l'}) = (s_{l}, t_{l'})$, $m_E(\tbar_{l}, \sbar_{l'}) = (t_{l}, t_{l'})$. Note that such partial routing lets exactly one unit of capacity on edges that exists in $\Graph$ (and none on additional edges). The paths $m_E(\sbar_l, \tbar_l)$, $l \in  \{1, \ldots, k\}$, are then edge disjoint on $\Graph$, and are thus a solution of \edpp.
    \noindent On the contrary, let us consider there is one \textit{incorrectly} routed virtual edge. Then, as done in proof of Theorem~\ref{theorem:section2:cyclewheel:E-KrGs}, it can be shown that a new mapping, that routes correctly all edges, can be constructed. 
\end{proof}

This result implies the \np-completeness of the cost variant, in the general case. However, it does not when $n_r \le n_s$. For this case, we have not been able to adapt the reduction of \tsp.

\vspace{0.25cm}
\begin{open}
    When $n_r \le n_s$, is C-$\langle \Clique_r \rightarrow \Clique_s \rangle$ \np-complete ?
\end{open}

\section{Substrate tree}

A dynamic programming algorithm with a $\BigO(3^{n_r} n_s)$ computation time is proposed when the substrate graph is a directed tree by \cite{figiel2021tree}. Using the decomposition scheme of this exponential algorithm, we propose polynomial time dynamic programming algorithms for embedding undirected virtual paths, cycles, wheels and cliques on an undirected substrate tree. However, we show that the cases of a general virtual graph and of a virtual tree on a substrate tree remain \np-complete.

Given a substrate tree $\Tree_s = (V_s, E_s)$, we consider that $\Tree_s$ is rooted, i.e. a node $u_r \in V_s$ is its {\it root}. 
Given $u \in V_s$ and $v \in V_s$, the unique path between $u$ and $v$ is denoted $p(u, v)$.
If $v$ is on the path $p(u_r, u)$, $u$ is a descendant of $v$, moreover, if $(u,v) \in E_s$, $v$ (resp. $u$) is said to be the parent of $u$ (resp. the children of $v$). A leaf is a node that has no children. 
The subtree of $\Tree_s$ induced by $u$ and its descendants is denoted $\Tree_s(u)$.

\subsection{Virtual cycle} \label{section4:cycle}

In section~\ref{section4:cycle} and section~\ref{section4:rest}, the tree is assumed to be binary and with a null capacity on non-leaf nodes. Indeed, it is shown by \cite{figiel2021tree} that a \vne instance with an arbitrary substrate tree can be reduced to such a case in polynomial time by adding some dummy nodes.

Given a virtual cycle $\Cycle_r$, the $k$ nodes $\ubar_i, \ubar_{i+1},\ldots, \ubar_{i+k-1}$, $i \in \{1,\ldots, n_r\}$, are said to be consecutive, where $\ubar_i$ (resp. $\ubar_{i+k-1})$ is called the first (resp. last) node. We first give a technical Lemma about the specific structure of optimal solutions.

\vspace{0.25cm}
\begin{lemma}\label{lemma:section4:cycle:consec}
    There is a minimum cost feasible mapping of $\Cycle_r$ on $\Tree_s$ where any subtree hosts consecutive nodes of $\Cycle_r$.
\end{lemma}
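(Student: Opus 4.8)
The plan is to start from an arbitrary minimum cost feasible mapping $m = (m_V, m_E)$ of $\Cycle_r$ on $\Tree_s$ and show that it can be transformed, without increasing cost, into one in which every subtree $\Tree_s(u)$ hosts a set of consecutive cycle nodes. Since the tree is binary with capacity only on leaves, the virtual nodes are all placed on leaves, and the routing of each virtual edge $(\ubar_i, \ubar_{i+1})$ is forced to be the unique tree path between the two hosting leaves. So the cost of $m$ is determined entirely by $m_V$: it is $\sum_{i} \big(w\text{-cost of } p(m_V(\ubar_i), m_V(\ubar_{i+1}))\big)$ plus the (fixed) node placement cost. The key observation is that, for a fixed edge $e = (u, v) \in E_s$ with $v$ the child, the number of virtual edges routed through $e$ equals the number of ``crossings'' of the cut $(\Tree_s(v), V_s \setminus \Tree_s(v))$ by the cyclic sequence $m_V(\ubar_1), \dots, m_V(\ubar_{n_r}), m_V(\ubar_1)$, i.e. twice the number of maximal consecutive blocks of cycle nodes placed inside $\Tree_s(v)$. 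Thus the total cost is a nonnegative weighted sum, over all substrate edges $e$, of (number of blocks in $\Tree_s(e\text{'s child})$), and "every subtree hosts consecutive nodes" is exactly the statement that every such block count is $1$ (or $0$).

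**The exchange argument.** I would argue by a local exchange / uncrossing step. Suppose some subtree $\Tree_s(v)$ hosts a non-consecutive set, so the cyclic sequence enters and leaves $\Tree_s(v)$ at least twice; pick $v$ to be an inclusion-minimal such subtree (so every proper subtree of $\Tree_s(v)$ already hosts consecutive nodes). Inside $\Tree_s(v)$ the hosted cycle nodes form $t \ge 2$ maximal consecutive blocks $B_1, \dots, B_t$ (in cyclic order), each block being, by minimality, "internally well-nested". The idea is to re-assign the leaves of $\Tree_s(v)$ currently used so that the blocks are merged: relabel which physical leaf hosts which virtual node within $\Tree_s(v)$, so that the nodes of $B_1, \dots, B_t$ now occupy a contiguous run and appear in cyclic order along a single "sweep" of the subtree. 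Concretely, one keeps the multiset of leaves-with-their-loads used inside $\Tree_s(v)$ fixed (hence node capacities are still satisfied and node placement cost is unchanged), and only permutes the correspondence between these leaf-slots and the virtual nodes $\bigcup_j B_j$, choosing the permutation so that consecutive virtual nodes of the merged block map to leaf-slots that are "as close as possible" in $\Tree_s(v)$. One then checks that for every edge $e$ strictly inside $\Tree_s(v)$ the block count does not increase, and for the edge $(\text{parent}(v), v)$ the block count drops from $t$ to $1$, while for every edge outside $\Tree_s(v)$ nothing changes. Since all edge weights are nonnegative, the cost does not increase. Iterating (each step strictly decreases $\sum_e (\text{block count in } \Tree_s(v_e))$, a nonnegative integer) terminates at a mapping with the desired property.

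**The main obstacle.** The delicate point is the middle step: showing that one can permute the leaf-slot assignment inside $\Tree_s(v)$ to merge the $t$ blocks into one contiguous consecutive run \emph{without increasing the crossing count of any internal edge of $\Tree_s(v)$}. This is essentially a statement that a binary tree, read as a bracketing / laminar family of its leaves, admits a linear order of the leaves (a "DFS order") along which any set of consecutive virtual labels can be laid out contiguously; the subtlety is that the blocks $B_1,\dots,B_t$ may have different sizes and the leaf-slots have different capacities, so the merged run must be packed into leaf-slots respecting capacities while respecting the laminar structure. I would handle this by inducting on the structure of $\Tree_s(v)$: at the root of $\Tree_s(v)$ with children $v', v''$, the blocks split among $\Tree_s(v')$ and $\Tree_s(v'')$; by the minimality hypothesis each side is already well-nested, and I rearrange so that \emph{all} of the merged block's nodes that will live in $\Tree_s(v')$ come first (in cyclic order) and those in $\Tree_s(v'')$ come second, recursing into each side — this keeps each side's internal crossings at $1$ (or $0$) and makes the edge $(v,v')$ and $(v,v'')$ crossed at most twice total in the right way, giving block count $1$ for edge $(\text{parent}(v),v)$. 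I expect writing this recursion carefully — particularly the bookkeeping of capacities vs. block sizes and the precise claim about cyclic order preservation — to be the bulk of the work; everything else (reduction to a nonnegative weighted sum of block counts, termination) is routine.
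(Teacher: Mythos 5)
Your reduction of the cost to a weighted sum of block counts is a clean way to phrase what the paper uses implicitly: in a tree the routing is forced by the placement, and the load on a substrate edge $(v,\mathrm{parent}(v))$ is exactly twice the number of maximal consecutive blocks of the \emph{set} of virtual nodes hosted in $\Tree_s(v)$. The gap is in your exchange step. You pick an inclusion-minimal subtree $\Tree_s(v)$ hosting a non-consecutive set and then only ``relabel which physical leaf hosts which virtual node within $\Tree_s(v)$'', keeping the multiset of leaf loads fixed. Such an internal permutation does not change the set of virtual nodes hosted by $\Tree_s(v)$: it is still $\bigcup_j B_j$, which by hypothesis has $t\ge 2$ maximal blocks, so the load on the edge from $v$ to its parent --- a function of that set alone --- cannot drop from $2t$ to $2$ as you claim. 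Non-consecutiveness of $\Tree_s(v)$ can only be repaired by exchanging virtual labels between $\Tree_s(v)$ and the rest of the tree, which your operation forbids. Choosing an inclusion-\emph{minimal} offending subtree makes this worse: by minimality both children of $v$ already host consecutive sets, so the defect sits precisely at the one edge your move cannot touch, and the recursive ``main obstacle'' you flag is beside the point.

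The paper works in the opposite direction, top-down from the root. At a node $u$ whose own subtree already hosts a consecutive set (true at the root), with children $u_a,u_b$ hosting $A=A_1\cup\dots\cup A_{n_A}$ and $B=B_1\cup\dots\cup B_{n_B}$, it swaps a cyclically adjacent pair of blocks \emph{across} the two sibling subtrees (the virtual labels of one block of $B$ move onto the leaf-slots previously used by a block of $A$, and conversely). This is feasible because demands are uniform, so only the count of virtual nodes per substrate node matters; each swap reduces the number of blocks on both sides by one, and the explicit rerouting shows that three nonempty subpaths $p(x,y)$, $p(z,x)$, $p(y,z)$ are freed, so no substrate edge gains load and the cost does not increase. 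If you want to keep your (nice) potential-function framing, the fix is to apply the cross-sibling exchange at a \emph{topmost} node whose children are not yet consecutive; the internal-permutation lemma you identified as the bulk of the work is then not needed at all.
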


\begin{proof}
    Let $m = (m_V, m_E)$ be a feasible mapping  of $\Cycle_r$ on $\Tree_s$ where the nodes hosted on some subtrees of $\Tree_s$ are not consecutive. We will construct a feasible mapping $m^*$ from $m$, of lower or equal cost such that the property holds. 
    This construction is done through an iterative process that starts on the root of the tree, and then progress toward the leaves. On each node, a number of iterations happens as follows. 
    
    Consider a node $u \in V_s$. Let $u_a$ and $u_b$ be the children of $u$, and $A$ (resp. $B$) be the nodes placed on $\Tree_s(u_a)$ (resp. $\Tree_s(u_b)$), such that $A$ and $B$ are not consecutive nodes. Let us consider their partitions $A = A_1 \cup A_2 \cup \ldots \cup A_{n_A}$ and $B = B_1 \cup B_2 \cup \ldots \cup B_{n_B}$ whose elements are consecutive nodes. One iteration of the process, that we detail below, gives a feasible mapping $m'$, with a lower or equal cost than $m$, where the nodes $A'$ (resp. $B'$) placed on $\Tree_s(u_a)$ (resp. $\Tree_s(u_b)$) are such that their partition into subsets of consecutive nodes both contain one element less, compared to $A$ and $B$. After some iterations, a mapping such that the nodes placed on $\Tree_s(u_a)$ and $\Tree_s(u_b)$ are consecutive is obtained. Then the process move on the following node, e.g. $u_a$ or $u_b$.

    The key idea is to swap the node placement of $A_1$ and $B_2$ as follows.
    \begin{align*}
        & m'_V(\ubar_i)  = m_V(\ubar_i), && \forall i \in \{1, \ldots, |A_1|\} \\
        & m'_V(\ubar_i) = m_V(\ubar_{i+|B_1|}), && \forall i \in \{|A_1| + 1, \ldots, |A_1| + |A_2| \} \\
        & m'_V(\ubar_i) = m_V(\ubar_{i-|A_2|}), && \forall i \in \{|A_1| + |A_2| + 1, \ldots, |A_1| + |A_2| + |B_1|\} \\
        & m'_V(\ubar_i) = m_V(\ubar_i), && \forall i \in \{|A_1| + |A_2| + |B_1| + 1, \ldots, |A| + |B| \}
    \end{align*}

    Figure~\ref{fig:section4:cycle:swap} a) provides an example of a mapping for which the property does not hold. The virtual graph is the cycle shown on the left, and the substrate graph is the tree $\Tree_s$ shown on the right. The mapping is the following: a virtual node is shown below the substrate node it is placed on, and the routing of the edges is shown on $\Tree_s$. The virtual nodes colored in green (resp. blue) are placed on $\Tree_s(u_a)$ (resp. $\Tree_s(u_b)$).
    Figure~\ref{fig:section4:cycle:swap} b) provides analogously the mapping obtained after one iteration of the process. Note that, in this case, the overall algorithm is completed after that iteration.
    
    \begin{figure}
    \centering
    \begin{tikzpicture}
        \begin{scope}[yshift = 2.75cm]
            
            \node at (-5.25, 0){$a)$};

            \begin{scope}[xshift = -3.1cm]
                \node[vnode, fill=white!40!green, minimum size=0.4cm] (v1) at (2*360/7:1.4){$\ubar_1$};
                \node[vnode, fill=white!40!green, minimum size=0.4cm] (v2) at (1*360/7:1.4){$\ubar_2$};
                \node[vnode, fill=white!40!blue, minimum size=0.4cm] (v3) at (7*360/7:1.4){$\ubar_3$};
                \node[vnode, fill=white!40!blue, minimum size=0.4cm] (v4) at (6*360/7:1.4){$\ubar_4$};
                \node[vnode, fill=white!40!green, minimum size=0.4cm] (v5) at (5*360/7:1.4){$\ubar_5$};
                \node[vnode, fill=white!40!blue, minimum size=0.4cm] (v6) at (4*360/7:1.4){$\ubar_6$};
                \node[vnode, fill=white!40!blue, minimum size=0.4cm] (v7) at (3*360/7:1.4){$\ubar_7$};
                \draw[networkedge] (v1) -- (v2);
                \draw[networkedge] (v2) -- (v3);
                \draw[networkedge] (v3) -- (v4);
                \draw[networkedge] (v4) -- (v5);
                \draw[networkedge] (v5) -- (v6);
                \draw[networkedge] (v6) -- (v7);
                 \draw[networkedge, red] (v7) to (v1);
                \draw[networkedge, teal] (v2) to (v3);
                \draw[networkedge, yellow] (v4) to (v5);
                \draw[networkedge, pink] (v5) to (v6);
            \end{scope}
        
            \begin{scope}[xshift = 3.1cm]
                \node[snode] (ur) at (0,2){$u_r$};
                \node[snode] (ua) at (-2.,0.5){$u_a$};
                \node[snode] (ub) at (2.,0.5){$u_b$};
                \node[snode] (u1) at (-3.,-1){$u_1$};
                \node[snode] (u2) at (-1.,-1){$u_2$};
                \node[snode] (u3) at (1.,-1){$u_3$};
                \node[snode] (u4) at (3.,-1){$u_4$};
                \draw[networkedge] (ur) to (ua);
                \draw[networkedge] (ur) to (ub);
                \draw[networkedge] (ua) to (u1);
                \draw[networkedge] (ua) to (u2);
                \draw[networkedge] (ub) to (u3);
                \draw[networkedge] (ub) to (u4);
    
                \node[vnode, fill=white!40!green, minimum size=0.4cm] (v1) at (-3., -1.75){$\ubar_1$};
                \node[vnode, fill=white!40!green, minimum size=0.4cm] (v2) at (-3., -2.5){$\ubar_2$};
                \node[vnode, fill=white!40!green, minimum size=0.4cm] (v3) at (-1., -1.75){$\ubar_5$};
                \node[vnode, fill=white!40!blue, minimum size=0.4cm] (v4) at (1., -1.75){$\ubar_3$};
                \node[vnode, fill=white!40!blue, minimum size=0.4cm] (v5) at (1., -2.5){$\ubar_4$};
                \node[vnode, fill=white!40!blue, minimum size=0.4cm] (v6) at (3., -1.75){$\ubar_6$};
                \node[vnode, fill=white!40!blue, minimum size=0.4cm] (v7) at (3., -2.5){$\ubar_7$};
                
                \draw[networkedge] (v1) to (v2);
                \draw[routingedge] (v4) to (v5);
                \draw[routingedge] (v6) to (v7);
                
                \draw[routingedge, teal] (v2) to[bend right = 40] (u1);
                \draw[routingedge, teal] (u1) to[bend right = 12] (ua);
                \draw[routingedge, teal] (ua) to[bend right = 12] (ur);
                \draw[routingedge, teal] (ur) to[bend right = 12] (ub);
                \draw[routingedge, teal] (ub) to[bend left = 12] (u3);
                \draw[routingedge, teal] (u3) to (v4);
            
                \draw[routingedge, yellow] (v5) to[bend right = 40] (u3);
                \draw[routingedge, yellow] (u3) to[bend right = 24] (ub);
                \draw[routingedge, yellow] (ub) to[bend left = 24] (ur);
                \draw[routingedge, yellow] (ur) to[bend left = 24] (ua);
                \draw[routingedge, yellow] (ua) to[bend right = 12] (u2);
                \draw[routingedge, yellow] (u2) to[bend right = 12] (v3);
    
                \draw[routingedge, pink] (v3) to[bend right = 12] (u2);
                \draw[routingedge, pink] (u2) to[bend left = 24] (ua);
                \draw[routingedge, pink] (ua) to[bend right = 36] (ur);
                \draw[routingedge, pink] (ur) to[bend right = 36] (ub);
                \draw[routingedge, pink] (ub) to[bend right = 12] (u4);
                \draw[routingedge, pink] (u4) to[bend left = 15] (v6);
    
                \draw[routingedge, red] (v7) to[bend right = 40] (u4);
                \draw[routingedge, red] (u4) to[bend right = 12] (ub);
                \draw[routingedge, red] (ub) to[bend right = 12] (ur);
                \draw[routingedge, red] (ur) to[bend right = 12] (ua);
                \draw[routingedge, red] (ua) to[bend right = 12] (u1);
                \draw[routingedge, red] (u1) to[bend right = 10] (v1);
            \end{scope}

        \end{scope}
    
        \begin{scope}[yshift = -2.75cm]
        
            \node at (-5.25, 0){$b)$};
            
            \begin{scope}[xshift = -3.1cm]
                \node[vnode, fill=white!40!green, minimum size=0.4cm] (v1) at (2*360/7:1.4){$\ubar_1$};
                \node[vnode, fill=white!40!green, minimum size=0.4cm] (v2) at (1*360/7:1.4){$\ubar_2$};
                \node[vnode, fill=white!40!green, minimum size=0.4cm] (v3) at (7*360/7:1.4){$\ubar_3$};
                \node[vnode, fill=white!40!blue, minimum size=0.4cm] (v4) at (6*360/7:1.4){$\ubar_4$};
                \node[vnode, fill=white!40!blue, minimum size=0.4cm] (v5) at (5*360/7:1.4){$\ubar_5$};
                \node[vnode, fill=white!40!blue, minimum size=0.4cm] (v6) at (4*360/7:1.4){$\ubar_6$};
                \node[vnode, fill=white!40!blue, minimum size=0.4cm] (v7) at (3*360/7:1.4){$\ubar_7$};
                \draw[networkedge] (v1) -- (v2);
                \draw[networkedge] (v2) -- (v3);
                \draw[networkedge] (v3) -- (v4);
                \draw[networkedge] (v4) -- (v5);
                \draw[networkedge] (v5) -- (v6);
                \draw[networkedge] (v6) -- (v7);
                \draw[networkedge, red] (v7) to (v1);                
                \draw[networkedge, teal] (v2) to (v3);
                \draw[networkedge, yellow] (v3) to (v4);
                \draw[networkedge, pink] (v5) to (v6);
            \end{scope}
        
            \begin{scope}[xshift =3.1cm]
                \node[snode] (ur) at (0,2){$u_r$};
                \node[snode] (ua) at (-2.,0.5){$u_a$};
                \node[snode] (ub) at (2.,0.5){$u_b$};
                \node[snode] (u1) at (-3.,-1){$u_1$};
                \node[snode] (u2) at (-1.,-1){$u_2$};
                \node[snode] (u3) at (1.,-1){$u_3$};
                \node[snode] (u4) at (3.,-1){$u_4$};
                \draw[networkedge] (ur) to (ua);
                \draw[networkedge] (ur) to (ub);
                \draw[networkedge] (ua) to (u1);
                \draw[networkedge] (ua) to (u2);
                \draw[networkedge] (ub) to (u3);
                \draw[networkedge] (ub) to (u4);

                \node[vnode, fill=white!40!green, minimum size=0.4cm] (v1) at (-3., -1.75){$\ubar_1$};
                \node[vnode, fill=white!40!green, minimum size=0.4cm] (v2) at (-3., -2.5){$\ubar_2$};
                \node[vnode, fill=white!40!green, minimum size=0.4cm] (v3) at (-1., -1.75){$\ubar_3$};
                \node[vnode, fill=white!40!blue, minimum size=0.4cm] (v4) at (1., -1.75){$\ubar_4$};
                \node[vnode, fill=white!40!blue, minimum size=0.4cm] (v5) at (1., -2.5){$\ubar_5$};
                \node[vnode, fill=white!40!blue, minimum size=0.4cm] (v6) at (3., -1.75){$\ubar_6$};
                \node[vnode, fill=white!40!blue, minimum size=0.4cm] (v7) at (3., -2.5){$\ubar_7$};
                
                \draw[networkedge] (v1) to (v2);
                \draw[networkedge] (v4) to (v5);
                \draw[networkedge] (v6) to (v7);
                
                \draw[networkedge, teal] (v2) to[bend right = 40] (u1);
                \draw[networkedge, teal] (u1) to[bend right = 12] (ua);
                \draw[networkedge, teal] (ua) to[bend right = 12] (u2);
                \draw[networkedge, teal] (u2) to (v3);

                \draw[networkedge, yellow] (v3) to[bend right = 40] (u2);
                \draw[networkedge, yellow] (u2) to[bend left = 24] (ua);
                \draw[networkedge, yellow] (ua) to[bend right = 12] (ur);
                \draw[networkedge, yellow] (ur) to[bend right = 12] (ub);
                \draw[networkedge, yellow] (ub) to[bend left = 12] (u3);
                \draw[networkedge, yellow] (u3) to[bend right = 12] (v4);
    
                \draw[networkedge, pink] (v5) to[bend right = 40] (u3);
                \draw[networkedge, pink] (u3) to[bend right = 24] (ub);
                \draw[networkedge, pink] (ub) to[bend right = 12] (u4);
                \draw[networkedge, pink] (u4) to[bend right = 15] (v6);
    
                \draw[networkedge, red] (v7) to[bend right = 40] (u4);
                \draw[networkedge, red] (u4) to[bend right = 12] (ub);
                \draw[networkedge, red] (ub) to[bend right = 12] (ur);
                \draw[networkedge, red] (ur) to[bend right = 12] (ua);
                \draw[networkedge, red] (ua) to[bend right = 12] (u1);
                \draw[networkedge, red] (u1) to[bend right = 10] (v1);
            
            \end{scope}

        \end{scope}

    \end{tikzpicture}
    \caption{Example of a node placement swap from $a)$ to $b)$ from proof of Lemma~\ref{lemma:section4:cycle:consec}}
    \label{fig:section4:cycle:swap}
    \end{figure}
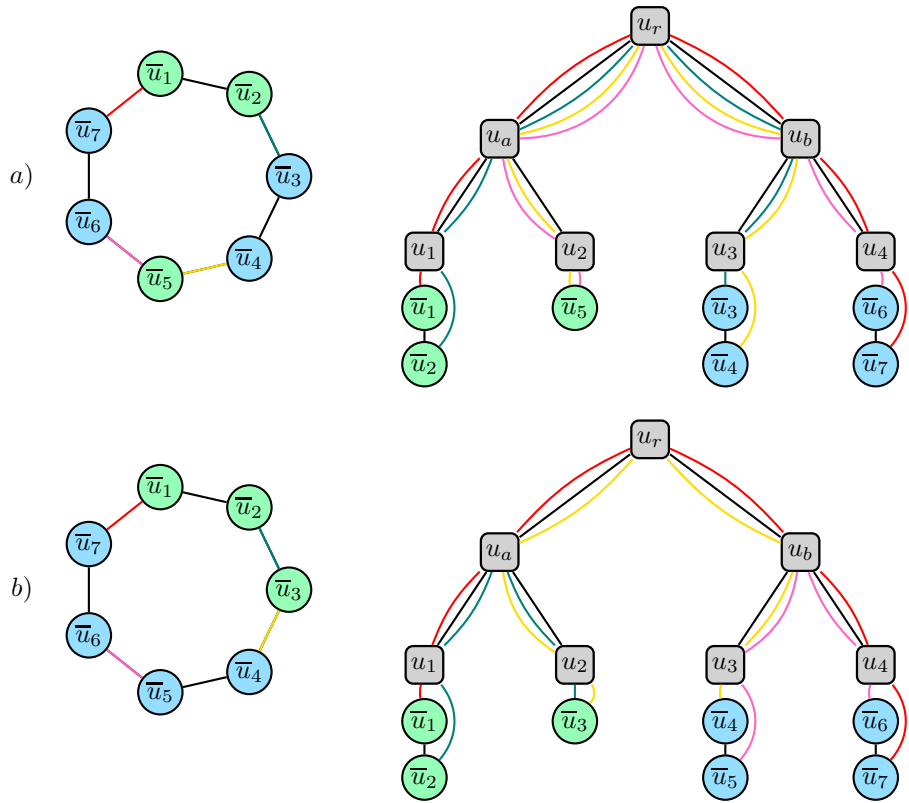

    In the new mapping $m'$, every substrate node will host the same number of virtual nodes as in $m$, thus the node placement is feasible and has the same cost.

    The edge routing $m'_E$ is directly deduced from $m'_V$ as there is only one path between two substrate nodes. It can be described completely from the paths of $m_E$, as follows.
    
    Let $x$ denote the last node to appear in both paths $p(u, m_V(\ubar_{|A_1|}))$ and $p(u, m_V(\ubar_{|A_1| + |B_1| + 1}))$; $y$ the last node to appear in both paths $p(u, m_V(\ubar_{|A_1|}))$ and $p(u, m_V(\ubar_{|A_1| + |B_1| + |A_2|}))$; $z$ the last node to appear in both paths $p(u, m_V(\ubar_{|A_1| + |B_1|}))$ and $p(u, m_V(\ubar_{|A_1| + |B_1| + |A_2| + 1}))$. 
    In Figure~\ref{fig:section4:cycle:swap}~a), $x = y = u_a$ and $z = u_b$.
    Note that $x, y \in T_s(u_a)$ and $z \in T_s(u_b)$. We will consider in the following the case where $y$ in on the path $p(x, u)$, with the proof in the other case being analogous. 

    These nodes appear notably in the original routing paths of the three virtual edges:
    \begin{align*}
        & m_E(\ubar_{|A_1|}, \ubar_{|A_1|+1}) && =  && p(m_V(\ubar_{|A_1|}), x) \cup p(x, y) \cup p(y, m_V(\ubar_{|A_1|+1})) \\
        & m_E(\ubar_{|A_1| + |B_1|}, \ubar_{|A_1| + |B_1| +1}) && = &&p(m_V(\ubar_{|A_1| + |B_1|}, z) \cup p(z, x) \cup p(x, m_V(\ubar_{|A_1| + |B_1|+1})) \\
        & m_E(\ubar_{|A_1| + |B_1| + |A_2|}, \ubar_{|A_1| + |B_1| + |A_2| +1}) && = &&  p(m_V(\ubar_{|A_1| + |B_1| + |A_2|}, y) \cup p(y, z) \\
        & && &&\cup  p(y, m_V(\ubar_{|A_1| + |B_1| + |A_2|+1}))
    \end{align*}

    The routing, in $m'_E$, of edges $(\ubar_{|A_1|}, \ubar_{|A_1|+1})$, $( \ubar_{|A_1|+|A_2|},\ubar_{|A_1|+|A_2|+1})$, and $(\ubar_{|A_1|+|A_2|+|B_1|}, \ubar_{|A_1|+|A_2|+|B_1|+1})$ uses combinations of the subpaths defined above as follows:
    \begin{align*}
        & m'_E(\ubar_{|A_1|}, \ubar_{|A_1|+1}) && = &&p(m_V(\ubar_{|A_1|}), x) \cup p(x, m_V(\ubar_{|A_1| + |B_1|+1})) \\
        & m'_E(\ubar_{|A_1| + |A_2|}, \ubar_{|A_1| + |A_2| + 1}) && = && p(m_V(\ubar_{|A_1| + |B_1| + |A_2|}, y) \cup p(y, m_V(\ubar_{|A_1|+1})) \\
        & m'_E(\ubar_{|A_1| + |A_2| + |B_1|}, \ubar_{|A_1| + |A_2| + |B_1| + 1}) && = && p(m_V(\ubar_{|A_1| + |B_1|}, z) \cup p(z, m_V(\ubar_{|A_1| + |A_2| + |B_1|})
    \end{align*}
 
    Remark that, in these paths, the three subpaths $p(x,y)$, $p(z, x)$ and $p(y,z)$ do not appear.
    \noindent For the rest of the virtual edges, their routings in $m'_E$ use exactly the paths of $m_E$, as follows:
    \begin{align*}
        & m'_E(\ubar_{i}, \ubar_{i+1}) = m_E(\ubar_{i}, \ubar_{i+1}), &\quad& \forall i \in \{1, \ldots, |A_1| -1\} \\
        & m'_E(\ubar_{i}, \ubar_{i+1}) = m_E(\ubar_{i+|B_1|}, \ubar_{i+1+|B_1|}), && \forall i \in \{ |A_1|+ 1, \ldots, |A_1| + |A_2| -1\} \\
        & m'_E(\ubar_{i}, \ubar_{i+1}) = m_E(\ubar_{i-|A_2|}, \ubar_{i+1-|A_2|}), && \forall i \in \{|A_1| + |A_2| + 1, \ldots, |A_1| + |A_2| + |B_1| -1\} \\
        & m'_E(\ubar_{i}, \ubar_{i+1}) = m_E(\ubar_{i}, \ubar_{i+1}), && \forall i \in \{|A_1| + |A_2| + |B_1| + 1, \ldots, n_r\}
    \end{align*}

    In the new mapping $m'$, every substrate edge will host the same number of virtual edges as in $m$, except the substrate edges of $p(x,y)$, $p(z,x)$ and $p(y,z)$ which will all host fewer virtual edges. Note that these paths are not empty, since $x, y \in T_s(u_a)$ and $z \in T_s(u_b)$. 
    Thus the edge routing is feasible and has a lower cost: $W(m'_E) = W(m_E) - \sum_{e \in p(x, y)} w(e) - \sum_{e \in p(z, x)} w(e) - \sum_{e \in p(y, z)} w(e) \le W(m_E)$.
\end{proof}

We now provide a dynamic algorithm to solve the problem. From Lemma~\ref{lemma:section4:cycle:consec}, only mappings that place consecutive nodes on every subtrees require to be considered. The key idea of the algorithm is to relax the knowledge of which consecutive nodes of $\Cycle_r$ are placed on a subtree, considering instead only their number. Let us now introduce some notations.

Let $W_E(u, v, k)$ denotes the part of a mapping cost induced by the utilization for a substrate edge $(u,v) \in E_s$ such that $v$ is the parent of $u$, when $k$ consecutive nodes are placed on $\Tree_s(u)$. If $ 0 < k < n_r$, $(u,v)$ hosts exactly two edges, thus $W_E(u, v, k) = 2 * w(u,v)$ (if $(u,v)$ has a unit capacity, such a mapping would be unfeasible and the cost is set to $+\infty$). Otherwise, if  $k = n_r$ (resp. $k=0$) then the cycle is entirely placed on (resp. outside of) $\Tree_s(u)$, and $(u,v)$ hosts no virtual edge.

Let $W^*(u, k)$ denote the minimum mapping costs induced by the utilization of subtree $\Tree_s(u)$, when $k$ consecutive nodes are placed on it. It includes the (placement and routing) costs of using all the substrate (node and edge) resources in the subtree. 
Note that this mapping cost is the same for any $k$ consecutive nodes considered. Indeed, given a mapping that places nodes $\ubar_i, \ubar_{i+1}, \ldots, \ubar_{i+k-1}$, $i \in \{1, \ldots, n_r\}$, on $\Tree_s(u)$, then a mapping of same restricted cost, and overall cost, that places $\ubar_j, \ubar_{j+1}, \ldots, \ubar_{j+k-1}$, $j \in \{1, \ldots, n_r\}$, on $\Tree_s(u)$ can be obtained simply by shifting the indices of the virtual node in the placement by $j-i$.
This minimum restricted cost is denoted $W^*(u, k)$.
If the capacities are insufficient, the cost is again set to $+\infty$.

If $u$ is a leaf node, clearly $W^*(u, k) = k * w(u)$ (if $k \le c(u)$). Otherwise, $u$ has a children $u_a$ and $u_b$, the following recurrence formula holds:

\begin{equation} \label{equat:section4:cycle:recur}
    W^*(u, k) = \min_{l_a + l_b = k} W^*(u_a, l_a) + W_E(u_a, u, l_a) + W^*(u_b, l_b) + W_E(u_b, u, l_b) 
\end{equation}

\noindent Indeed, consider the mapping $m = (m_V, m_E)$ that has minimum restricted costs $W^*(u, k)$ on $\Tree_s(u)$. Let us denote $l_a$ (resp. $l_b$) the number of virtual nodes placed on $\Tree(u_a)$ (resp. $\Tree_s(u_b)$). Recall that only leaf nodes have capacity, thus $u$ host no nodes and $l_b = k - l_a$. We can consider, from Lemma~\ref{lemma:section4:cycle:consec}, that both these set of nodes are consecutive. Moreover, the mapping of the $l_a$ (resp. $l_b$) consecutive nodes placed on $\Tree_s(u_a)$ (resp. $\Tree_s(u_b)$) must have minimum restricted costs on $\Tree_s(u_a)$ (resp. $\Tree_s(u_b)$). Otherwise, there is another mapping $m' = (m'_V, m'_E)$ with better restricted costs for $l_a$ nodes on $\Tree_s(u_a)$. Then, we could construct another mapping that is $m$, except for the placement on $\Tree_s(u_a)$, that is from $m'$. That mapping would then have a better restricted cost over $\Tree_s(u)$, which is a contradiction.

\vspace{0.25cm}
\begin{theorem}\label{theorem:section4:cycle:cost}
    C-$\langle \Cycle_r \rightarrow \Tree_s \rangle$ is in \p and is solvable in $\BigO({n_r}^2 n_s)$
\end{theorem}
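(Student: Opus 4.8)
The plan is to turn the recurrence~\eqref{equat:section4:cycle:recur} into a bottom-up dynamic program over the rooted tree. Recall that, by the assumptions of Section~\ref{section4:cycle} (and the reduction of \cite{figiel2021tree}), we may take $\Tree_s$ to be binary, rooted at some $u_r$, with null capacity on every non-leaf node; this preprocessing adds only $\BigO(n_s)$ dummy nodes and is done in polynomial time, so it does not affect the claimed bound. I would maintain a table with entries $W^*(u,k)$ for every $u \in V_s$ and every $k \in \{0,1,\dots,n_r\}$, using the value $+\infty$ whenever $\Tree_s(u)$ cannot host $k$ consecutive nodes of $\Cycle_r$. The entries are computed in post-order: for a leaf $u$ set $W^*(u,k) = k\,w(u)$ if $k \le c(u)$ and $+\infty$ otherwise; for an internal node $u$ with children $u_a,u_b$ compute the whole row from~\eqref{equat:section4:cycle:recur}, where each term $W_E(\cdot,\cdot,\ell)$ is obtained in $\BigO(1)$ from its definition ($0$ if $\ell\in\{0,n_r\}$; $2\,w(u_a,u)$ if $0<\ell<n_r$ and $c(u_a,u)\ge 2$; $+\infty$ if $0<\ell<n_r$ and $c(u_a,u)=1$). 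Finally I would answer ``yes'' iff $W^*(u_r,n_r) \le \Gamma$.

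For correctness I would argue in two steps. By Lemma~\ref{lemma:section4:cycle:consec} it suffices to optimize over feasible mappings in which every subtree hosts consecutive nodes of $\Cycle_r$; at the root this restriction is vacuous, since all $n_r$ virtual nodes lie in $\Tree_s(u_r)$, so the optimum equals $W^*(u_r,n_r)$ provided the recurrence is exact. Exactness of~\eqref{equat:section4:cycle:recur} (the ``$\ge$'' direction) is the exchange argument already given just before the theorem: an internal node carries no capacity, hence for some split $l_a+l_b=k$ the $l_a$ (resp.\ $l_b$) nodes in $\Tree_s(u_a)$ (resp.\ $\Tree_s(u_b)$) are consecutive, the edge $(u_a,u)$ (resp.\ $(u_b,u)$) is crossed by exactly two virtual-cycle edges and contributes $W_E(u_a,u,l_a)$ (resp.\ $W_E(u_b,u,l_b)$), and each child's sub-mapping must be optimal by substitution. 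For the ``$\le$'' direction I would exhibit a mapping attaining the right-hand side by gluing the two children's optimal sub-mappings: here one uses that $W^*(u,k)$ does not depend on \emph{which} $k$ consecutive nodes are chosen (shift-invariance), so the blocks of the two children can be re-indexed into two genuinely adjacent arcs of the cycle, after which the straddling edges are routed through $u$ consistently.

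For the running time: the (transformed) tree has $\BigO(n_s)$ nodes; each leaf row costs $\BigO(n_r)$; each internal node $u$ requires, for every $k\in\{0,\dots,n_r\}$, a minimization over the $\le k+1$ splits $l_a+l_b=k$ with $\BigO(1)$ work per split (four table look-ups plus the $\BigO(1)$ edge terms), i.e.\ $\BigO(n_r^2)$ in total. Summing over all nodes yields $\BigO(n_r^2 n_s)$, and the final comparison with $\Gamma$ is $\BigO(1)$; an optimal mapping can be recovered within the same bound by storing the minimizing splits.

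I expect the main obstacle to be not the algorithm but the bookkeeping in the correctness proof: making precise that the routing cost of each virtual-cycle edge is counted exactly once over the whole tree -- each crossing edge is charged to the unique tree edge where the two blocks containing its endpoints separate -- and carefully justifying the shift-invariance that allows the children's consecutive blocks to be assembled into a single globally consistent cyclic ordering. The remaining verifications (feasibility of the assembled placement and routing, correct handling of the boundary cases $k=0$ and $k=n_r$) are routine.
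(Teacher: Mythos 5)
Your proposal is correct and follows essentially the same route as the paper: a leaf-to-root dynamic program filling the table $W^*(u,k)$ via the recurrence~\eqref{equat:section4:cycle:recur}, with optimality guaranteed by Lemma~\ref{lemma:section4:cycle:consec} and the shift-invariance of $W^*$, yielding $\BigO(n_r)$ work per leaf and $\BigO(n_r^2)$ per internal node for a total of $\BigO(n_r^2 n_s)$. Your write-up is in fact more explicit than the paper's own (terse) proof about the two directions of exactness of the recurrence and the gluing/re-indexing step, but the underlying argument is the same.
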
 

\begin{proof}
    Consider the following leaf-to-root algorithm. First, the values of $W^*$ are computed for each leaf node, which takes $\BigO(n_r n_s)$. Then, for a non-leaf node, the values of $W^*$ are computed in $\BigO({n_r}^2)$, using formula~\ref{equat:section4:cycle:recur}. When $W^*(u_r, n_r)$ is computed, a solution is found. From Lemma~\ref{lemma:section4:cycle:consec}, this solution is optimal. Since the root node can only be visited by the dynamic algorithm at the last iteration, the overall computational time is $\BigO({n_r}^2 n_s)$.
\end{proof}

\subsection{Virtual path, wheel and clique}  \label{section4:rest}

The dynamic algorithm of Theorem~\ref{theorem:section4:cycle:cost} can be adapted for other virtual topologies than a cycle.

\paragraph{Virtual path}

The property given in Lemma~\ref{lemma:section4:cycle:consec} holds for a virtual path and we can construct by dynamic programming a minimum cost feasible mapping so that each subtree contains consecutive nodes. 

However, a given mapping, the cost of using an substrate edge $(u, v)$, with $v$ the parent of $u$, depends on whether $\Tree_s(u)$ hosts an extremity of $\Path_r$, then $(u, v)$ hosts exactly one edge; or none, then $(u, v)$ hosts exactly two edges.

The functions $W_E$ and $W^*$ thus require an additional parameter that indicates how many extremities are hosted on the considered subtree. The recurrence formula~\ref{equat:section4:cycle:recur} can be slightly modified accordingly. 

\vspace{0.25cm}
\begin{theorem}\label{theorem:section4:rest:pathcost}    
     C-$\langle \Path_r \rightarrow \Tree_s \rangle$ is in \p, and is solvable in $\BigO({n_r}^2 n_s)$
\end{theorem}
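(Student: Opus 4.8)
The plan is to mirror the dynamic programming scheme developed for the virtual cycle in Theorem~\ref{theorem:section4:cycle:cost}, adding one bookkeeping parameter to track the endpoints of $\Path_r$. First I would establish the structural analogue of Lemma~\ref{lemma:section4:cycle:consec}: there is a minimum cost feasible mapping of $\Path_r$ on $\Tree_s$ in which every subtree hosts a set of consecutive path nodes $\ubar_i, \ubar_{i+1}, \ldots, \ubar_{i+k-1}$. The swap argument from the cycle case carries over almost verbatim; the only subtlety is that in a path the ``wrap-around'' edge $(\ubar_{n_r}, \ubar_1)$ is absent, so when comparing the cost of using the substrate edge $(u,v)$ for a subtree $\Tree_s(u)$ hosting $k$ consecutive nodes, one must distinguish whether the hosted block is an \emph{interior} block (both neighbours $\ubar_{i-1}$ and $\ubar_{i+k}$ lie outside, so $(u,v)$ carries exactly two virtual edges) or an \emph{end} block (exactly one of $\ubar_1, \ubar_{n_r}$ is inside, so $(u,v)$ carries exactly one virtual edge), or the full path ($(u,v)$ carries none). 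This is where the extra parameter $t \in \{0,1,2\}$ counting how many of the two path extremities are placed inside $\Tree_s(u)$ enters.

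Next I would redefine the cost quantities accordingly. Let $W_E(u, v, k, t)$ be the contribution of substrate edge $(u,v)$, with $v$ the parent of $u$, when $k$ consecutive path nodes and $t$ extremities are placed on $\Tree_s(u)$: it equals $2\,w(u,v)$ if $t = 0$ and $0 < k < n_r$, equals $w(u,v)$ if $t = 1$, equals $0$ if $t = 2$ (whence $k = n_r$) or $k = 0$, with the value set to $+\infty$ whenever the implied number of hosted virtual edges exceeds $c(u,v)$. Let $W^*(u, k, t)$ be the minimum cost induced by the resources inside $\Tree_s(u)$ when it hosts $k$ consecutive nodes, $t$ of which are extremities; as in the cycle case this value is independent of \emph{which} consecutive block is chosen, by a shift argument (here the shift must respect the endpoints, but since $t$ already fixes whether the block abuts $\ubar_1$ and/or $\ubar_{n_r}$ the argument still goes through). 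For a leaf, $W^*(u, k, t) = k\,w(u)$ when $k \le c(u)$ and $t \le \min(k,2)$, and $+\infty$ otherwise. For a non-leaf node $u$ with children $u_a, u_b$ the recurrence becomes
\begin{equation*}
    W^*(u, k, t) = \min_{\substack{l_a + l_b = k \\ t_a + t_b = t}} \Bigl( W^*(u_a, l_a, t_a) + W_E(u_a, u, l_a, t_a) + W^*(u_b, l_b, t_b) + W_E(u_b, u, l_b, t_b) \Bigr),
\end{equation*}
justified exactly as formula~\ref{equat:section4:cycle:recur}: a block of consecutive path nodes splits across the two subtrees into two sub-blocks, and each of $\ubar_1, \ubar_{n_r}$ (if hosted) lands in exactly one of them, so the $(k,t)$ split is exhaustive and non-overlapping, and an exchange argument forces each restricted part to be optimal.

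Finally I would run the leaf-to-root evaluation: compute $W^*$ at the leaves in $\BigO(n_r n_s)$, then at each internal node evaluate the recurrence; the minimisation ranges over $l_a \in \{0, \ldots, k\}$ and the finitely many $(t_a, t_b)$ with $t_a + t_b = t$, so each node costs $\BigO(n_r^2)$ (the constant from the nine $(t_a,t_b,t)$ combinations is absorbed), and the answer is $\min_{t} W^*(u_r, n_r, t)$, which by the structural lemma is the optimum. Totalling over the $\BigO(n_s)$ nodes gives $\BigO(n_r^2 n_s)$, and correctness of the reconstructed mapping follows by tracing back the minimising choices exactly as in the cycle proof. The main obstacle is purely organisational rather than conceptual: getting the endpoint-tracking parameter to interact correctly with the consecutivity swap in the lemma — in particular checking that the block reshuffling in the proof of the path analogue of Lemma~\ref{lemma:section4:cycle:consec} never separates $\ubar_1$ from being an endpoint of its block — and making sure the base cases and the $+\infty$ conventions are consistent so that infeasible $(k,t)$ configurations are correctly suppressed.
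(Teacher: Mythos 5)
Your proposal matches the paper's approach exactly: the paper likewise observes that Lemma~\ref{lemma:section4:cycle:consec} carries over to a virtual path and then augments $W_E$ and $W^*$ with a parameter counting how many extremities of $\Path_r$ lie in the subtree (one edge crossing if one extremity, two if none), reusing the recurrence and the $\BigO({n_r}^2 n_s)$ leaf-to-root evaluation. Your write-up is in fact more explicit than the paper's two-paragraph sketch; the only point to tidy is the one you already flag, namely forcing $W^*(u,k,2)=+\infty$ whenever $k<n_r$ so that the $t=2$ configurations cannot be used to zero out edge costs on proper sub-blocks.
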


\paragraph{Virtual wheel}

The property given in Lemma~\ref{lemma:section4:cycle:consec} holds for the outer cycle of a wheel, and we can construct by dynamic programming a minimum cost feasible mapping so that each subtree contains consecutive nodes of the outer cycle. 

However, in a given mapping where $k$ (consecutive) nodes of the outer cycle are placed on a subtree $\Tree_s(u)$, the cost of using a substrate edge $(u, v)$, with $v$ the parent of $u$, depends on whether $\Tree_s(u)$ hosts the center of $\Wheel_r$, then $(u, v)$ hosts exactly $2 + n_r - k$ edges; or not, then $(u, v)$ hosts exactly  $2 + k$ edges.

The functions $W_E$ and $W^*$ thus require an additional parameter that indicates if the center of the wheel is hosted on the considered subtree. The recurrence formula~\ref{equat:section4:cycle:recur} can be slightly modified accordingly. 

\vspace{0.25cm}
\begin{theorem}\label{theorem:section4:rest:wheelcost}    
     C-$\langle \Wheel_r \rightarrow \Tree_s \rangle$ is in \p, and is solvable in $\BigO({n_r}^2 n_s)$
\end{theorem}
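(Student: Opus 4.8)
The plan is to reuse, almost line for line, the leaf-to-root dynamic program behind Theorem~\ref{theorem:section4:cycle:cost}, enriching every table with one extra bit $\delta\in\{0,1\}$ that records whether the centre $\ubar_c$ of $\Wheel_r$ is placed inside the subtree under consideration. As in Section~\ref{section4:cycle}, the tree may be taken binary with capacity only on its leaves. First I would verify that Lemma~\ref{lemma:section4:cycle:consec} still applies to the outer cycle $\ubar_1,\dots,\ubar_{n_r}$ of $\Wheel_r$: the swap in its proof only reindexes which outer virtual node sits on which substrate node and never moves $m_V(\ubar_c)$, so each substrate node keeps hosting the same number of outer nodes and the multiset of forced spoke paths $\{\,p(m_V(\ubar_i),m_V(\ubar_c)):i\in\{1,\dots,n_r\}\,\}$ is unchanged, leaving the spoke load on every substrate edge untouched while the three removed subpaths still strictly lower the cost of the cycle edges. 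Hence there is an optimal mapping in which every subtree hosts consecutive nodes of the outer cycle.

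Next I would set up the tables. For $(u,v)\in E_s$ with $v$ the parent of $u$, for $k\in\{0,\dots,n_r\}$ and $\delta\in\{0,1\}$, let $t(k,\delta)$ count the virtual edges forced across $(u,v)$: the outer cycle contributes $2$ when $0<k<n_r$ and $0$ when $k\in\{0,n_r\}$, while the spokes contribute $n_r-k$ when $\delta=1$ and $k$ when $\delta=0$ (so $t=2+n_r-k$ resp.\ $t=2+k$ for $0<k<n_r$, matching the preceding discussion). Put $W_E(u,v,k,\delta)=t(k,\delta)\,w(u,v)$ if $t(k,\delta)\le c(u,v)$ and $+\infty$ otherwise. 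Let $W^*(u,k,\delta)$ be the minimum cost of using all node and edge resources strictly inside $\Tree_s(u)$ when $k$ consecutive outer-cycle nodes, together with $\ubar_c$ exactly when $\delta=1$, are placed inside $\Tree_s(u)$; the index-shift argument of Section~\ref{section4:cycle} shows this value is independent of which $k$ consecutive nodes are chosen. For a leaf $u$, $W^*(u,k,\delta)=(k+\delta)\,w(u)$ if $k+\delta\le c(u)$ and $+\infty$ otherwise. For a non-leaf $u$ (capacity $0$) with children $u_a,u_b$, the centre goes to exactly one side, so
\begin{equation*}
    W^*(u,k,\delta)=\min_{\substack{l_a+l_b=k\\\delta_a+\delta_b=\delta}} W^*(u_a,l_a,\delta_a)+W_E(u_a,u,l_a,\delta_a)+W^*(u_b,l_b,\delta_b)+W_E(u_b,u,l_b,\delta_b),
\end{equation*}
the obvious analogue of formula~\ref{equat:section4:cycle:recur}. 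The optimum is $W^*(u_r,n_r,1)$, compared with the budget $\Gamma$.

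Correctness then follows exactly as for the cycle: the extended Lemma~\ref{lemma:section4:cycle:consec} lets us assume the optimal mapping splits at every internal node into consecutive left and right blocks, with $\ubar_c$ on precisely one side; the forced crossing count on $(u_a,u)$ and $(u_b,u)$ is exactly $t(\cdot,\cdot)$; and each side must be optimal on its own subtree, else substituting the cheaper sub-mapping contradicts optimality. For the running time there are $\BigO(n_r n_s)$ states; every leaf state costs $\BigO(1)$, hence $\BigO(n_r n_s)$ in total, and every internal state is a minimum over the $\BigO(n_r)$ splits $l_a+l_b=k$ and the $\BigO(1)$ splits of $\delta$, each $W_E$ read in $\BigO(1)$, giving $\BigO(n_r^2)$ per internal node and $\BigO(n_r^2 n_s)$ overall.

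The only genuinely non-mechanical point I expect is the combination of the first step with the crossing-count bookkeeping: one must check that dragging the spoke routings along the node swap really leaves the spoke load on every edge unchanged, and that the two regimes $\delta=0,1$ together with the degenerate values $k\in\{0,n_r\}$ produce exactly the crossing counts $2+k$, $2+n_r-k$, $n_r$ and $0$ as appropriate. Everything else is a line-by-line copy of the proofs of Lemma~\ref{lemma:section4:cycle:consec} and Theorem~\ref{theorem:section4:cycle:cost}.
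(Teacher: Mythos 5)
Your proposal is correct and follows essentially the same route as the paper, which itself only sketches the argument: extend Lemma~\ref{lemma:section4:cycle:consec} to the outer cycle, add one parameter to $W_E$ and $W^*$ recording whether the centre lies in the subtree, note the crossing counts $2+k$ versus $2+n_r-k$, and adapt the recurrence of formula~\ref{equat:section4:cycle:recur}. You simply spell out the details (the invariance of the spoke load under the swap, the explicit recurrence over $l_a+l_b=k$ and $\delta_a+\delta_b=\delta$, the degenerate cases $k\in\{0,n_r\}$) that the paper leaves implicit.
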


\paragraph{Virtual clique}

In the case of a virtual clique $\Clique_r$, for a given mapping that places exactly $k$ nodes on a subtree $\Tree_s(u)$, the edge $(u,v)$ hosts exactly $k (n_r-k)$ virtual edges. With function $W_E$ modified accordingly, the recurrence formula~\ref{equat:section4:cycle:recur} holds.

\vspace{0.25cm}
\begin{theorem}\label{theorem:section4:rest:cliquecost}
    C-$\langle \Clique_r \rightarrow \Tree_s \rangle$ is in \p, and can be solved in  $\BigO({n_r}^2 n_s)$.
\end{theorem}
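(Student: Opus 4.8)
The plan is to mirror the structure used for the virtual cycle case (Theorem~\ref{theorem:section4:cycle:cost}) and its path and wheel adaptations. First I would establish the analogue of Lemma~\ref{lemma:section4:cycle:consec}: there is a minimum cost feasible mapping of $\Clique_r$ on $\Tree_s$ in which every subtree hosts a \emph{contiguous block} of virtual nodes under some fixed cyclic ordering. Since $\Clique_r$ is vertex-transitive and all its edges are symmetric, the relevant structural fact is even simpler than for the cycle: the cost of a mapping depends only on, for each subtree $\Tree_s(u)$, the \emph{number} $k$ of virtual nodes placed inside it (because $(u,v)$ then carries exactly $k(n_r-k)$ virtual edges, as noted in the paragraph preceding the theorem), together with the node placement costs inside the subtree. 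The ``swap'' argument of Lemma~\ref{lemma:section4:cycle:consec} still applies to make the hosted sets contiguous; in fact for a clique the rerouting bookkeeping is lighter, since after a swap each substrate edge on the three ``saved'' subpaths $p(x,y)$, $p(z,x)$, $p(y,z)$ can only lose load, and no edge gains, so feasibility and the cost inequality follow as before.

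Next I would set up the dynamic program exactly as in Section~\ref{section4:cycle}, again assuming (by the reduction of \cite{figiel2021tree}) that $\Tree_s$ is binary with zero capacity on internal nodes. Define $W_E(u,v,k) = k(n_r-k)\cdot w(u,v)$ when $k(n_r-k) \le c(u,v)$ and $+\infty$ otherwise, where $v$ is the parent of $u$; and define $W^*(u,k)$ as the minimum cost contributed by the resources strictly inside $\Tree_s(u)$ when $k$ virtual nodes are placed there. For a leaf, $W^*(u,k) = k\cdot w(u)$ if $k \le c(u)$ and $+\infty$ otherwise. For an internal node $u$ with children $u_a, u_b$, the recurrence
\begin{equation*}
    W^*(u,k) = \min_{l_a + l_b = k}\; W^*(u_a,l_a) + W_E(u_a,u,l_a) + W^*(u_b,l_b) + W_E(u_b,u,l_b)
\end{equation*}
holds by the same optimal-substructure argument: an optimal mapping restricted to $\Tree_s(u_a)$ must itself be optimal for its number of hosted nodes, else a local replacement would improve $W^*(u,k)$. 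The answer is $W^*(u_r,n_r)$, and an actual embedding is recovered by backtracking the arg-min choices.

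The running time analysis is identical to Theorem~\ref{theorem:section4:cycle:cost}: computing $W^*$ at each leaf costs $\BigO(n_r)$, and at each internal node the min over $l_a + l_b = k$ for all $k \le n_r$ costs $\BigO(n_r^2)$, giving $\BigO(n_r^2 n_s)$ overall. The one point needing slight care is the feasibility/contiguity lemma: I must check that the swap construction from Lemma~\ref{lemma:section4:cycle:consec} genuinely reduces the number of contiguous blocks in both children while never violating capacities --- but for a clique this is cleaner than for a cycle, because the total number of virtual edges crossing any subtree boundary is determined solely by the split $(k, n_r-k)$ and is therefore invariant under the swap; only edges on the internal subpaths $p(x,y), p(z,x), p(y,z)$ change load, and they strictly decrease. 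Hence I expect the main (minor) obstacle to be merely writing the contiguity lemma carefully enough that the edge-routing accounting is unambiguous; the dynamic program and its complexity then go through verbatim.
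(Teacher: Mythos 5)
Your proposal is correct and follows essentially the same route as the paper: the paper simply redefines $W_E(u,v,k)$ to charge $k(n_r-k)$ virtual edges on the parent edge and reuses the recurrence and $\BigO({n_r}^2 n_s)$ analysis of the cycle case verbatim. The only difference is that you spend effort on a contiguity lemma that is unnecessary here --- as you yourself observe, for a clique the load on every subtree boundary depends only on the count $k$, so no analogue of Lemma~\ref{lemma:section4:cycle:consec} is needed and the paper invokes none.
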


\subsection{Virtual tree}

Recall that C-$\langle \Star_r \rightarrow \Star_s \rangle$ \cite{rost2015stars} and C-$\langle \Path_r \rightarrow \Star_s \rangle$ (as a corollary of Theorem~\ref{theorem:section4:rest:pathcost}) are in \p. We will prove the opposite result for virtual trees. Consider the following problem.

\begin{decproblem}
    \problemtitle{\bpplong (\bpp)}
    \probleminput{A set of $n$ integers $A = a_1, \ldots, a_n$, two integers $K$ (the number of bins) and $B$ (the size of each bin)}
    \problemquestion{Is there a partition of $A$ into $K$ disjoint sets $A_1, A_2, \ldots, A_K$, such that, for each subset $A_i$, $i \in \{1, \ldots, K\}$, $\sum_{a_l \in A_i} a_l \le B$ ?}
\end{decproblem}

In what follows, we reduce \bpp, which is $\mathcal{NP}$-hard \cite{garey1979guide}, to E-\univne. 
This reduction use similar gadgets that those used in proofs done by \cite{pankratov2023tree}, where \bpp is also reduced to a \vne case with a virtual tree (more specifically a \textit{tied-star}), and a substrate tree. However, their work consider that edge demands are non-uniform, and that $n_r = n_s$. Thus it needs to be adapted.

\vspace{0.25cm}
\begin{theorem}\label{theorem:section4:tree:exist}
    E-$\langle \Tree_r \rightarrow \Star_s \rangle$ is \np-complete.
\end{theorem}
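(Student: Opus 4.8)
The plan is to reduce \bpp to \mbox{E-}$\langle \Tree_r \rightarrow \Star_s \rangle$, in the spirit of the separation gadgets of \cite{pankratov2023tree}: their non-uniform reduction represents an item of size $a_l$ by a single virtual edge of demand $a_l$, which is unsplittable by definition, whereas with uniform demands I must replace it by a whole subtree and \emph{recreate} its unsplittability through a tight capacity on the edges of the star. Membership in \np is immediate (a mapping is a polynomial-size certificate), so only \np-hardness has to be shown. It is convenient to use \bpp instances in which every item satisfies $B/4 < a_l < B/2$ and $\sum_l a_l = KB$ (this restriction of \bpp is essentially \textsc{3-Partition}, still \np-hard \cite{garey1979guide}); the gain is that then \emph{every} feasible packing places \emph{exactly three} items per bin, so the per-bin occupancies are known in advance. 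Write $n = 3K$ for the number of items and assume $K$ (and hence $B$, after scaling) large enough.

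\textbf{Construction.} The virtual graph $\Tree_r$ is a ``tied star'': a root $\rho$, together with, for each item $l$, a gadget $G_l$ which is a star centred at $c_l$ with $a_l - 1$ leaves, plus an edge $(\rho, c_l)$; thus $\Tree_r$ is a tree on $1 + \sum_l a_l$ nodes. The substrate star $\Star_s$ has centre $s_0$ and $K$ bin-leaves $s_1, \dots, s_K$, with $c(s_0) = 1$, $c(s_i) = B$ and, crucially, $c(s_0 s_i) = 3$ for every $i$.

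For the forward direction, from a packing I send $\rho$ to $s_0$ and the whole gadget $G_l$ to the bin-leaf of its bin, routing $(\rho, c_l)$ on the corresponding star edge and all gadget-internal edges on empty paths; each $s_i$ then hosts $B$ virtual nodes and routes exactly three virtual edges, so the mapping is feasible. For the converse, starting from a feasible mapping $m = (m_V, m_E)$: (i) a capacity count on the tiny edges $s_0 s_i$ forces $\rho$ onto $s_0$ (placing $\rho$ on a bin-leaf would force almost all gadget centres onto that leaf, overloading its capacity $B$), hence, since $c(s_0) = 1$, all other virtual nodes lie on bin-leaves; (ii) writing $S_i$ for the virtual nodes placed on $s_i$, the load of $s_0 s_i$ equals $|\partial_{\Tree_r}(S_i)|$ since in a star all routings are forced, and a direct count yields $\sum_i |\partial_{\Tree_r}(S_i)| = n + 2\sigma$, where $\sigma$ counts gadget-leaves placed on a bin-leaf other than their centre's, so $\sum_i c(s_0 s_i) = 3K = n$ forces $\sigma = 0$, i.e.\ each $G_l$ lies entirely on one bin-leaf; (iii) since $a_l > B/4$ and $c(s_i) = B$, every bin-leaf carries at most three, hence (counting) exactly three, gadgets, and $c(s_i) = B$ makes the corresponding three item sizes sum to at most $B$, which is a valid packing.

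The main obstacle is step (ii) together with the choice $c(s_0 s_i) = 3$: the star edges must be made just tight enough to forbid \emph{any} splitting of a gadget across two leaves while still leaving room for the honest mapping, which is only possible because the \textsc{3-Partition} normalization pins the per-bin occupancy to the constant $3$. This is exactly where the separation gadgets of \cite{pankratov2023tree} get replayed, reworked purely through capacities, and it is also where the parameters ($B/4 < a_l < B/2$, the value $3$ of $c(s_0 s_i)$, and the size of $K$) have to be chosen in concert; step (i) is then routine, once one observes how small $c(s_0 s_i)$ is.
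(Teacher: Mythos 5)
Your reduction is correct and follows the same blueprint as the paper's proof: a tied-star virtual tree whose gadgets encode the items, a substrate star whose leaves encode the bins, and edge capacities made just tight enough that the root-to-centre edges saturate them, which is what pins each gadget to a single bin-leaf. The differences are in the parametrisation and in how that saturation is exploited. The paper reduces from general \bpp, pads the root with $KB-|A|$ singleton children so that the root has exactly $KB$ children, and sets edge capacities to $B$ and leaf capacities to $2B$; the root is then forced onto the centre by a degree count (a leaf can host a node with at most $3B-1$ children, while $\ubar_r$ has $KB\ge 3B$), and the exact saturation of the $K$ edges by the $KB$ root-children edges leaves no capacity for a gadget leaf to stray from its centre. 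You instead normalise to \textsc{3-Partition} so that every bin holds exactly three items, which lets you take the constant edge capacity $3$ and drop the singleton padding; your cut identity $\sum_i|\partial(S_i)|=n+2\sigma\le 3K=n$ is a clean substitute for the saturation argument, and your step (i) replaces the degree count by the observation that a capacity-$3$ cut around a bin-leaf hosting $\rho$ would drag at least $3K-3$ essentially whole gadgets (more than $B$ nodes once $K\ge 3$ and $B$ is large) onto that leaf — the parenthetical ``overloading its capacity'' should really say that it is the gadgets dragged along with their centres, not the centres alone, that overflow $c(s_j)=B$, but the count goes through. Both versions need $K\ge 3$; yours additionally needs the harmless scaling of $B$ and, like the paper, implicitly relies on the strong \np-hardness of the source problem so that the unary-sized virtual tree stays polynomial — a point worth making explicit.
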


\begin{proof}
    Consider a \bpp instance $(A, K, B)$ with $K \ge 3$. We assume w.l.o.g. that $\sum_{a_l \in A} a_l = K * B$, as adding some elements of unit weight to $A$ in any instance of \bpp makes it true. 
    In this proof, we will use a gadget which is a star $\Star(k)$, $k\in \mathbb{N}$, of $k$ leafs.
    An instance of E-\univne can be constructed as follows.
    The virtual graph is a tree $\Tree_r$, that we construct starting from a single node, the root $\ubar_r$. 
    For each element $l \in \{1, \ldots, n\}$, the center of a star gadget $\Star^l_r = \Star(a_l)$ is connected to $\ubar_r$. 
    Furthermore, $K * B - |A|$ \textit{singleton nodes} are connected to $\ubar_r$.
    The substrate graph is the star $\Star_s = \Star(K)$. The center of $\Star_s$ is denoted $u_c$ and has a unit capacity. The leafs of $\Star_s$, denoted $u_1, u_2, \ldots, u_K$, have $2*B$ capacity, and the edges $(u_i, u_c)$, $i \in \{1, \ldots, K\}$, have a $B$.

    We now show that there exists a solution to the \bpp instance if and only if there exists a feasible mapping of $\Tree_r$ on $\Star_s$:

    $(\Rightarrow)$: Consider a solution $A_1, \ldots, A_K$ of the \bpp instance.
    Consider the following mapping $m = (m_V, m_E)$: $\ubar_r$ is placed on $u_c$; all the nodes of $\Star^l_r$, for $a_l \in A_i$, are placed on $u_i$, for $i \in \{1, \ldots, K\}$. Furthermore $B - |A_i|$ singleton nodes are placed on $u_i$, for $i \in \{1, \ldots, K\}$. The routing of a virtual edge is the unique path between the two substrate nodes hosting the two virtual nodes connected by the edge. 
    This mapping is feasible, as $u_c$ hosts exactly one node, each substrate leaf hosts exactly $2 B$ virtual nodes, and each substrate edge hosts $B$ virtual edges.

    $(\Leftarrow)$: Suppose there is a feasible mapping $m = (m_V, m_E)$ of $\Tree_r$ on $\Star_s$. \\
    In a feasible mapping, only $u_c$ can host $\ubar_r$. Indeed, a leaf $u_i$, $i \in \{1, \ldots, K\}$, can only host a virtual node with at most $3 B - 1$ children, as $c(u_i) = 2 B$ and $c(u_i, u_c) = B$. However $\ubar_r$ has $K B \ge 3 B$ children, thus it can only be placed on $u_c$. \\
    Furthermore, each leaf $u_i$, $i \in \{1, \ldots, K\}$, hosts exactly $B$ children of $\ubar_r$. Indeed, $\ubar_r$, which is placed on $u_c$, has $K * B$ children and $\sum_{i \in \{1, \ldots, K\}} c(u_i, u_c) = K B$. 
    This indicates that, if the center of $\Star^l_r$, for $l \in A$, is placed on $u_i$, the leafs of $\Star^l_r$ are also placed on $u_i$, since there remains no capacity on edge $(u_i, u_c)$. 
    There remains, on $u_i$, a capacity of $B$ for those leafs. Thus the partition of $A$ into $A_i$, where $A_i$ are the elements whose stars are placed on $u_i$, $i \in \{1, \ldots, K\}$, is a valid solution for the \bpp instance.
\end{proof}

This shows that, even in graphs with uniform demands, packing aspects remain important for \vne problems, because of nodes connectivity in the virtual graph. 

The previous proof can be modified to show the result in the case where $n_r \le n_s$ for a general substrate tree. Given an instance of \bpp, the instance of \vne constructed has the same virtual tree, but the substrate graph is a tree $\Tree_s$, constructed from a root node $u_r$. For each of the $K$ bins, the center of a star $\Star(B-1)$ is connected to $u_r$. All nodes have a unit capacity, and all edges have a capacity of $B$. The rest of the proof is analogous to the above one.

\vspace{0.25cm}
\begin{theorem} \label{theorem:section4:tree:exist2}
    E-$\langle \Tree_r \rightarrow \Tree_s \rangle$ is \np-complete, even when $n_r \le n_s$
\end{theorem}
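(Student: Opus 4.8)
The plan is to follow the blueprint already laid out in the remark after Theorem~\ref{theorem:section4:tree:exist}, but to write out the reduction and both directions of the equivalence in full. I would reduce from \bpp exactly as before: given an instance $(A,K,B)$ with $K\ge 3$ and (w.l.o.g.) $\sum_{a_l\in A}a_l = K\cdot B$, keep the \emph{same} virtual tree $\Tree_r$ as in the proof of Theorem~\ref{theorem:section4:tree:exist} (root $\ubar_r$; for each $l$, the center of a star gadget $\Star^l_r=\Star(a_l)$ attached to $\ubar_r$; and $K\cdot B-|A|$ singleton nodes attached to $\ubar_r$). The substrate graph is now a tree $\Tree_s$ built from a root $u_r$ to which, for each of the $K$ bins, the center of a star $\Star(B-1)$ is attached; every substrate node has unit capacity and every substrate edge has capacity $B$. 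Note $n_s = 1 + K\cdot B$ (one root, plus $K$ centers, plus $K(B-1)$ leaves) while $n_r = 1 + |A| + (KB-|A|) + \sum_l a_l = 1 + KB + KB$... so I would double-check the node count and, if $n_r>n_s$, pad $\Tree_s$ with extra unit-capacity leaves hanging off $u_r$ until $n_r\le n_s$; this padding does not change feasibility since those leaves can always absorb surplus singleton nodes. (The cleanest fix is to make each bin-star $\Star(2B-1)$ or to give the leaves capacity so counts match; I would settle the exact parameters when writing the details, keeping the invariant "one edge of capacity $B$ separates each bin-subtree from the root.")

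The forward direction $(\Rightarrow)$ is the easy one: from a \bpp solution $A_1,\dots,A_K$, place $\ubar_r$ on $u_r$; for each $i$, route the subtree of $\ubar_r$'s children associated with $A_i$ (the stars $\Star^l_r$ for $a_l\in A_i$, together with $B-|A_i|$ of the singleton nodes) entirely inside the $i$-th bin-subtree $\Tree_s(\text{center}_i)$, which has exactly $B$ node-slots available at the level needed and whose connecting edge has capacity $B$ to carry the $B$ edges from $\ubar_r$ to those children. Since the leaves of each $\Star^l_r$ are placed in the same subtree as its center, no extra load crosses the bin-edge. Checking node capacities (each substrate node hosts at most one virtual node) and edge capacities (the bin-edge hosts exactly $B$ virtual edges, internal star-edges host one each) is routine.

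The reverse direction $(\Leftarrow)$ is the heart of the argument and should be adapted essentially verbatim from Theorem~\ref{theorem:section4:tree:exist}. First, $\ubar_r$ must be mapped to $u_r$: any other substrate node lies inside some bin-subtree whose separating edge has capacity only $B$, so it can host a virtual node with at most $\approx B$ descendants routed outward, whereas $\ubar_r$ has $KB\ge 3B$ children — contradiction. Second, a counting argument forces exactly $B$ children of $\ubar_r$ into each bin-subtree: the $K$ bin-edges each carry at most $B$ virtual edges, there are $KB$ edges incident to $\ubar_r$ that must be routed, so each bin-edge carries exactly $B$ and the rest of the bin-subtree has just enough room. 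Third, because each bin-edge is saturated, whenever the center of a gadget $\Star^l_r$ is placed in a given bin-subtree, all $a_l$ of its leaves must also be placed there (there is no residual capacity on the bin-edge to route a leaf's edge back across). Hence the sets $A_i := \{\,a_l : \text{center of }\Star^l_r\text{ placed in bin } i\,\}$ partition $A$ with $\sum_{a_l\in A_i}a_l \le B$, giving a \bpp solution. Membership in \np is immediate since a mapping is a polynomial-size certificate.

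The main obstacle I anticipate is \emph{purely bookkeeping}: getting the substrate-tree parameters ($K$, the star sizes $\Star(B-1)$ vs.\ larger, node vs.\ edge capacities, and any padding leaves) to \emph{simultaneously} enforce "$\ubar_r\mapsto u_r$", "exactly $B$ children per bin", "stars stay intact", and "$n_r\le n_s$". All four constraints are individually easy, but they pull the capacity values in slightly different directions, so I would fix one consistent choice early and verify all four against it, rather than adjust iteratively. Everything else is a direct transcription of the $K\ge 3$ degree argument and the saturation/counting argument from the previous proof.
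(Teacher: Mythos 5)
Your reduction follows the same route the paper intends (same virtual tree; a substrate tree in which a single capacity-$B$ edge separates each bin gadget from the root), and your count $n_r = 1+2KB$ versus $n_s = 1+KB$ is correct, so the construction genuinely needs repairing before either direction can be verified. The gap is that you leave the repair unsettled, and one of the two fixes you float is unsound. Padding $\Tree_s$ with extra unit-capacity leaves hanging off $u_r$ does \emph{not} merely "absorb surplus singleton nodes": a padding leaf $w$ can host the \emph{center} of a gadget $\Star^l_r$ with $a_l \le B-1$, since the edge $(u_r,w)$ then carries only $1+a_l \le B$ virtual edges, and the $a_l$ leaves of that gadget may then be scattered across several different bin subtrees. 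Concretely, for $K=3$, $B=4$ and items $3,3,3,3$ the \bpp instance is infeasible, yet the padded \univne instance admits a feasible mapping (all four gadget centers on padding leaves, each padding edge carrying $1+3=B$ virtual edges, and the twelve gadget leaves exactly filling the three bin subtrees), so feasibility of the mapping no longer certifies a packing.

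The other fix you mention --- making each bin gadget a star $\Star(2B-1)$, so that $n_s = 1+2KB = n_r$ --- does work, but it requires exactly the checks you deferred. In the forward direction you must place on the bin's center node $v_i$ the center of the largest item assigned to that bin: otherwise an item with $a_l = B$ sitting on a bin leaf $w$ forces $a_l+1 > B$ virtual edges onto the edge $(v_i,w)$. In the reverse direction, your argument that $\ubar_r$ lands on $u_r$ must count children of $\ubar_r$ placed \emph{inside} the candidate subtree as well as those routed outward: a bin node can serve at most $(2B-1)+B = 3B-1 < KB$ children (co-located slots plus the capacity of the separating edge), which is precisely the $K\ge 3$ bound used in the proof of Theorem~\ref{theorem:section4:tree:exist}. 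With that one construction fixed and these two checks written out, the saturation and counting argument you describe goes through as in the substrate-star case.
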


However, the reduction technique does not work for a substrate star when $n_r \le n_s$. Indeed, the \bpp case where there are more bins than items, that would construct such \univne instance, has always a trivial valid solution.

\vspace{0.25cm}
\begin{open}
    Is E-$\langle \Tree_r \rightarrow \Star_s \rangle$ \np-complete when $n_r \le n_s$ ?
\end{open}

\subsection{General virtual graph on a substrate path}

For a substrate path, the polynomial results obtained above holds, but not the \np-completness proof. However a reduction can be done using the following problem, which is \np-complete for planar and series-parallel graphs \cite{monien1988mincut}.

\begin{decproblem}
    \problemtitle{\mclalong (\mcla)}
    \probleminput{ An undirected graph $\Graph = (V, E)$, a positive integer $K$}
    \problemquestion{Is there a linear ordering of $\Graph$, i.e. a one-to-one function $f: V \rightarrow \{1, 2, \ldots, |V|\}$ such that, for $i \in \{1, \ldots, |V| - 1\}$, $|\{(\ubar, \vbar) \in E: f(\ubar) \le i < f(\vbar)\}| \le K$ ?}
\end{decproblem}

\begin{theorem} \label{theorem:section4:gen:exist}
    E-$\langle \Graph_r \rightarrow \Path_s \rangle$ is \np-complete
\end{theorem}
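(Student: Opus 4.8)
The plan is to reduce \mcla to E-$\langle \Graph_r \rightarrow \Path_s \rangle$, exploiting the fact that, on a substrate path, a bijective node placement is exactly a linear ordering, and the routing of every virtual edge is forced, since a path graph has a unique loop-free path between any two of its nodes. Given an \mcla instance $(\Graph = (V, E), K)$ with $|V| = n$, which we may take (restricting to series-parallel instances, so that $\Graph$ is connected, and assuming $K \ge 1$ since \mcla is trivial when $K = 0$) to be $\mathcal{NP}$-hard, I would build the following E-\univne instance: the virtual graph is $\Graph_r = \Graph$ (recall demands are uniformly one); the substrate graph is a path $\Path_s$ on $n$ nodes $u_1, \ldots, u_n$, with unit capacity on every node and capacity $K$ on every edge $(u_i, u_{i+1})$, $i \in \{1, \ldots, n-1\}$. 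In particular $n_r = n_s = n$, and the construction is clearly polynomial.

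For correctness, first observe that since the $n$ substrate nodes have unit capacity and there are $n$ virtual nodes, in any feasible mapping $m = (m_V, m_E)$ the placement $m_V$ is a bijection onto $\{u_1, \ldots, u_n\}$; setting $f(\vbar) = i$ whenever $m_V(\vbar) = u_i$ gives a linear ordering of $\Graph$. Moreover, $\Path_s$ has a unique loop-free path between $u_i$ and $u_j$, so $m_E$ is entirely determined by $m_V$: a virtual edge $(\ubar, \vbar)$ with $f(\ubar) < f(\vbar)$ is hosted by exactly the substrate edges $(u_i, u_{i+1})$ with $f(\ubar) \le i < f(\vbar)$. Hence the number of virtual edges hosted by $(u_i, u_{i+1})$ equals $|\{(\ubar, \vbar) \in E : f(\ubar) \le i < f(\vbar)\}|$, i.e. the size of the $i$-th cut of $f$. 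Thus $m$ is feasible if and only if every cut of $f$ has size at most $K$. Conversely, any linear ordering $f$ with all cuts $\le K$ yields, via $m_V(\vbar) = u_{f(\vbar)}$ and the forced routing, a feasible mapping. This establishes the equivalence of the two instances.

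Finally, E-$\langle \Graph_r \rightarrow \Path_s \rangle$ is in \np, since a mapping can be encoded and its node/edge capacity feasibility checked in polynomial time; combined with the reduction above, this gives \np-completeness, in fact already when $n_r = n_s$ and $\Graph_r$ is series-parallel. I do not expect a genuine obstacle: the only points needing care are the forced, unique nature of the routing on a path — which is precisely what makes the per-edge load coincide with the cut function of the induced ordering — and the harmless normalizations ($K \ge 1$ so that substrate edge capacities lie in $\mathbb{N}^+$, and restricting to connected, e.g. series-parallel, \mcla instances so that $\Graph_r$ satisfies the connectivity requirement of \univne). The hardness is entirely inherited from \mcla, the reduction being essentially a dictionary between cutwidth-type linear arrangements and embeddings into a uniformly capacitated path.
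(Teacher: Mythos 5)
Your reduction is exactly the one the paper uses: virtual graph $\Graph$, substrate path on $|V|$ nodes with unit node capacities and edge capacity $K$, with the bijection between linear orderings and placements and the forced routing giving the cut-size/edge-load correspondence. The extra normalizations you note (connectivity via series-parallel instances, $K \ge 1$, membership in \np) are sensible but the argument is essentially identical to the paper's proof.
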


\begin{proof}
    Consider a \mcla instance $(\Graph, K)$. An instance of E-\univne can be constructed as follows. The virtual graph is $\Graph$. The substrate graph $\Path_r$ is a path with $|V|$ nodes $u_1, \ldots, u_{|V|}$. Every node has a capacity of one, and every edge has a capacity of $K$. 
    
    Let us show that there is a valid linear ordering of $\Graph$ if and only if there is a feasible mapping of $\Graph$ on $\Path_s$.

    $(\Rightarrow)$: Suppose that there is a linear ordering $f: V \rightarrow \{1, 2, \ldots, |V|\}$ of $\Graph$. A mapping $m = (m_V, m_E)$ can be constructed as follows: \\
    $m_V(\ubar) = u_{f(\ubar)}$, for $\ubar \in V$ and $m_E(\ubar, \vbar) = p(u_{f(\ubar)}, u_{f(\vbar)})$, for $(\ubar, \vbar) \in E$.
    
    We now show that this mapping is feasible. Since every substrate node host exactly one virtual node, $m_V$ is feasible. Note that a substrate edge $\{u_i, u_{i+1}\}$, $i \in \{1, \ldots, |V|-1|\}$, hosts an edge $\{\ubar, \vbar\}$ if and only if $f(\ubar) \le i$ and $f(\vbar) \ge i+1$. Hence the edge $\{u_i, u_{i+1}\}$ hosts\\
    $|(\ubar, \vbar) \in E : \{u_i, u_{i+1}\} \in m_E(\ubar, \vbar)| = |\{(\ubar, \vbar) \in E: f(\ubar) \le i < f(\vbar)\}| \le K$ virtual edges. Thus $m$ is feasible.

    $(\Leftarrow)$: Suppose that there is a feasible mapping $m = (m_V, m_E)$ of $\Graph_r$ on $\Path_s$. Consider the following linear ordering $f: V \rightarrow \{1, 2, \ldots, |V|\}$ where $f(\ubar)$ is the index of the substrate node that hosts $\ubar$.
    Since the substrate nodes have unit capacities, $f$ is a one-to-one function. Moreover, with the same argument on edge capacity as above,  $f$ is a valid linear ordering.
\end{proof}

It has been shown that when $\Graph$ is a tree, \mcla is in \p \cite{yannakakis1985mincut}. However, the algorithm does not extend straightforwardly to the analog \univne case, since a substrate node might host several virtual nodes. Thus the result for \univne is an open problem.

\vspace{0.25cm}
\begin{open}\label{theorem:section4:gen:open}
    Is E-$\langle \Tree_r \rightarrow \Path_s \rangle$ \np-complete ?
\end{open}

\section{Substrate cycle}

Unlike to a tree, in a substrate cycle two paths connect two substrate nodes $u_i$ and $u_j$. Hence the algorithm for tree does not adapt for this case. 
In what follows, the path $u_i, u_{i+1}, \ldots, u_{j-1}, u_j$ (reps. $u_i, u_{i-1}, \ldots, u_{j+1}, u_j$) is said to be the \textit{clockwise} (resp. counterclockwise) path from $u_i$ to $u_j$, and is denoted $p^+(u_i, u_j)$ (resp. $p^-(u_i, u_j)$).

\subsection{Virtual path and cycle}

As a cycle is a Hamiltonian graph, from Proposition~\ref{proposition:section3:cyclepath:hamilto}, we have that E-$\langle \Path_r \rightarrow \Cycle_s \rangle$ and E-$\langle \Cycle_r \rightarrow \Cycle_s \rangle$ are in \p. 
We prove in the following that the cost variants is also in \p.

\paragraph{Virtual path}

A mapping of a virtual path $\Path_r$ on a substrate cycle $\Cycle_s$ is said to be an \textit{elementary path} if the concatenation of the routing of edges, $m_E(\ubar_1, \ubar_2) \cup m_E(\ubar_2, \ubar_3) \cup \ldots \cup m_E(\ubar_{n-1}, \ubar_n)$, is an elementary path of $\Cycle_s$, i.e. substrate nodes appear at most once in it. This case happens only when substrate edges host at most one virtual edge and (at least) one substrate edge hosts none (i.e. the mapping does not go around the whole cycle).
The following lemma shows that elementary path mappings are the structure of optimal solutions.

\vspace{0.25cm}
\begin{lemma} \label{lemma:section5:path:elementary}
    There is a minimum cost feasible mapping of $\Path_r$ on $\Cycle_s$ that is an elementary path.
\end{lemma}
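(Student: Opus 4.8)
The plan is to start from an arbitrary minimum cost feasible mapping $m=(m_V,m_E)$ of $\Path_r$ on $\Cycle_s$ and transform it, without increasing the cost, into one that is an elementary path. First I would observe that in \emph{any} feasible mapping the set of substrate edges hosting at least one virtual edge, together with the substrate nodes hosting virtual nodes, forms a connected subgraph of $\Cycle_s$ (because the image of the connected path $\Path_r$ under a mapping is connected). Hence the ``used'' substrate edges form either the whole cycle $\Cycle_s$ or a single sub-path of it. If they form a proper sub-path, I would then argue that one can reroute every virtual edge $m_E(\ubar_i,\ubar_{i+1})$ along the unique sub-path of $\Cycle_s$ joining $m_V(\ubar_i)$ and $m_V(\ubar_{i+1})$ inside that used sub-path; this can only decrease the number of virtual edges carried by each substrate edge (no virtual edge now ``wraps around''), so the routing cost does not increase, and the node placement is untouched — giving an elementary path mapping.

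The remaining case is when the used substrate edges cover the entire cycle. Here the idea is to show there is always a substrate edge that is ``wasted'' and can be freed. Fix the edge $(u_j,u_{j+1})$ of $\Cycle_s$ carrying the largest number of virtual edges, say $t\ge 1$ of them. I would cut the cycle at this edge: conceptually remove $(u_j,u_{j+1})$ and reroute, one at a time, the $t$ virtual edges that used it so that they instead take the complementary path around the cycle. Each such reroute changes, for every other substrate edge, its virtual-edge load by $\pm 1$, and by a counting/parity argument over the path $\Path_r$ (the virtual edges using $(u_j,u_{j+1})$ alternate in ``orientation'' as one traverses $\Path_r$), the net effect after handling all $t$ of them is that $(u_j,u_{j+1})$ now carries none while no other edge's load strictly increases beyond what a maximum-load edge had. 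A cleaner way to phrase the same thing: among the optimal mappings, pick one minimizing $\sum_{e\in E_s}(\text{load of }e)$, i.e. the total routing length; if it still covers the whole cycle then some virtual edge $m_E(\ubar_i,\ubar_{i+1})$ is routed the ``long way'', and rerouting it the short way strictly decreases the total length without violating capacities (its load decreases on a nonempty set of edges and increases on none that matter), contradicting minimality — so the minimizer already uses a proper sub-path, and the previous paragraph applies.

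The key steps in order are: (1) the image of $\Path_r$ is connected in $\Cycle_s$, so the used edges form a sub-path or the whole cycle; (2) in the sub-path case, rerouting each virtual edge within the used sub-path yields an elementary path and does not increase cost; (3) in the whole-cycle case, pass to an optimal mapping of minimum total routing length and show it cannot cover the whole cycle, reducing to case (2). The main obstacle I anticipate is step (3): making precise, via a potential-function or exchange argument, that ``covering the whole cycle'' is incompatible with minimum total routing length — one must rule out pathological situations where every substrate edge is forced to capacity and no single reroute helps, which is where the parity structure of how virtual edges of $\Path_r$ traverse a given substrate edge (each contiguous block of $\Path_r$ mapped to one side contributes edges in alternating directions) does the real work.
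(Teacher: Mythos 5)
There is a genuine gap, and it is not confined to step (3) where you anticipated trouble. Your transformation never changes the node placement, but an elementary path mapping cannot in general be reached by rerouting alone. Concretely, in your step (2) suppose the used edges form the arc $u_1,u_2,u_3$ and $m_V(\ubar_1)=u_1$, $m_V(\ubar_2)=u_3$, $m_V(\ubar_3)=u_2$: every routing path already lies inside that arc, so your rerouting changes nothing, yet the concatenation of the routings visits $u_2$ twice and the substrate edge $(u_2,u_3)$ hosts two virtual edges --- by the paper's definition this is not an elementary path mapping. The missing idea, which is the heart of the paper's proof, is to retain only the \emph{number} of virtual nodes hosted by each substrate node and to re-place the virtual nodes so that $\ubar_1,\ldots,\ubar_{n_r}$ occupy the substrate nodes in consecutive clockwise order, starting from a node $u_a$ chosen at a boundary of the used region (a node $u_j$ with $(u_{j-1},u_j)$ unused and $(u_j,u_{j+1})$ used, when one exists). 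With all virtual edges then routed clockwise, the node cost is unchanged and each substrate edge hosts at most as many virtual edges as before, which is the cost comparison you need.

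Your step (3) is also unsound as stated: a mapping can cover the whole cycle with every virtual edge routed the short way (e.g.\ four virtual nodes placed on $u_1,u_2,u_3,u_1$ of a triangle, each virtual edge occupying a single substrate edge), so ``some virtual edge is routed the long way'' does not follow, no single reroute frees an edge, and only re-placing the nodes helps. The paper needs no such case distinction nor any parity argument: the same consecutive re-placement procedure is applied in both configurations, with $u_a$ taken to be an arbitrary node hosting virtual nodes when all substrate edges are used.
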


\begin{proof}
    Let $m = (m_V, m_E)$ be a feasible mapping of $\Path_r$ on $\Cycle_s$ that is not an elementary path. We will construct an elementary path mapping $m' = (m'_V, m'_E)$ from $m$ of lower or equal cost, which will prove the lemma.
    
    We introduce a procedure that constructs an elementary clockwise path mapping, given the imput parameters $u_a \in V_s$ and $k:V_s \rightarrow \mathbb{N}$, such that, for $u_i \in V_s$, $k(u_i) \le c(u_i)$.
    The construction starts at a substrate node $u_a$ where it places the first $k(u_a)$ nodes. It then moves in a clockwise manner on the cycle, and places on each node $u_i$ the next $k(u_i)$ virtual nodes.
    The routing of $(\ubar_j, \ubar_{j+1})$, $j \in \{1, \ldots, n_r-1\}$, is set to the clockwise path between the placement of $\ubar_j$ and that of $\ubar_{j+1}$ (it is an empty path when they are placed on the same node). By construction, this mapping is an elementary path and is feasible.

    We now show that the cost of $m'$ is lower or equal than the one of $m$. The mapping $m' = (m'_V, m'_E)$ is obtained with parameters $k(u_i)$,  $u_i \in V_s$ set to the number of nodes hosted by $u_i$ in $m$, while the choice of $u_a$ depends on the configuration of $m$ as follows. If $m$ uses all substrate edges, $u_a$ can be set to any nodes that hosts some virtual nodes, say $u_1$. Otherwise, there is a node $u_j \in V_s$ such that, in $m$, $(u_{j-1}, u_j)$ is not used, and $(u_j, u_{j+1})$ is used. We set $u_a = u_j$. 
    
    The mapping $m'$ has the same node placement cost than $m$, since both mapping place the same number of virtual nodes on any substrate nodes. 
    It can also be shown that a substrate edge in $m'$ hosts at most the number of edges that were hosted in $m$. Hence the routing cost (and hence the overall cost) of $m'$ is lower or equal than the one of $m$. 
\end{proof}

The following lemma proposes a polynomial algorithm to construct an elementary path mapping. 

\vspace{0.25cm}
\begin{lemma}\label{lemma:section5:path:greedy}
    For $u_s, u_t \in V_s$, a minimum cost feasible mapping that is a clockwise elementary path from $u_s$ to $u_t$ can be computed in $\BigO(n_s \log(n_s) + n_r)$
\end{lemma}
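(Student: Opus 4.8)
The plan is to reduce the construction of an optimal clockwise elementary path mapping from $u_s$ to $u_t$ to a one-dimensional assignment problem solvable greedily, after noticing that the routing cost of any such mapping is a fixed quantity.

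First I would describe the structure of an arbitrary feasible clockwise elementary path mapping $m = (m_V, m_E)$ from $u_s$ to $u_t$. Write $S = (u_s, u_{s+1}, \ldots, u_t)$ for the clockwise path $p^+(u_s, u_t)$, and for $u_i \in S$ let $k(u_i)$ be the number of virtual nodes $m$ places on $u_i$. Since $m$ is clockwise, the virtual nodes land in blocks along $S$: $\ubar_1, \ldots, \ubar_{k(u_s)}$ on $u_s$, then the next block on the next node of $S$ with positive $k$, and so on; in particular $m_V(\ubar_1) = u_s$ and $m_V(\ubar_{n_r}) = u_t$. If $u_s = u_t$ this forces $k(u_s) = n_r$, all routings empty, and $m$ feasible iff $c(u_s) \ge n_r$ with cost $n_r\, w(u_s)$; this degenerate case is handled directly. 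If $u_s \ne u_t$, then $k(u_s) \ge 1$ and $k(u_t) \ge 1$, and the concatenation $m_E(\ubar_1,\ubar_2) \cup \dots \cup m_E(\ubar_{n_r-1},\ubar_{n_r})$ is exactly the path $p^+(u_s, u_t)$ traversed once (it moves only clockwise, starts at $u_s$, ends at $u_t$, and, being an elementary path, visits no substrate node twice). Hence every substrate edge of $p^+(u_s, u_t)$ hosts exactly one virtual edge, every other substrate edge hosts none, and this is true \emph{regardless} of the values $k(u_i)$. Consequently the routing cost of $m$ equals the constant $C_E := \sum_{(u_i, u_{i+1}) \in p^+(u_s,u_t)} w(u_i, u_{i+1})$, and $m$ is feasible on edges iff $c(u_i, u_{i+1}) \ge 1$ along $p^+(u_s,u_t)$.

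It follows that minimizing the total cost amounts to minimizing the placement cost $\sum_{u_i \in S} k(u_i)\, w(u_i)$ over integer vectors with $\sum_{u_i \in S} k(u_i) = n_r$, $0 \le k(u_i) \le c(u_i)$, and $k(u_s), k(u_t) \ge 1$. I would first check feasibility ($\sum_{u_i \in S} c(u_i) \ge n_r$, $c(u_s) \ge 1$, $c(u_t) \ge 1$, and every edge of $p^+(u_s,u_t)$ of capacity at least one), returning cost $+\infty$ otherwise. Then I would solve the assignment greedily: set $k(u_s) = k(u_t) = 1$, sort the nodes of $S$ by nondecreasing cost $w(u_i)$, and allocate the remaining $n_r - 2$ units by filling the cheapest nodes up to their residual capacity ($c(u_s)-1$ and $c(u_t)-1$ for the two endpoints, $c(u_i)$ for the others). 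Optimality is a standard exchange argument: if an optimal vector places a unit on a node $u_i$ while a strictly cheaper node $u_j$ still has residual capacity, moving that unit from $u_i$ to $u_j$ strictly decreases the cost, so some optimal solution fills cheapest nodes first. Finally I would materialize the chosen $k$ as a mapping by scanning $S$ clockwise, assigning consecutive virtual nodes to the nodes of $S$ in blocks of sizes $k(u_i)$, and routing each virtual edge $(\ubar_j, \ubar_{j+1})$ along the clockwise subpath between $m_V(\ubar_j)$ and $m_V(\ubar_{j+1})$ (the empty path if they coincide).

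For the complexity: extracting $S$, computing $C_E$, and checking edge capacities is $\BigO(n_s)$; sorting the at most $n_s$ nodes of $S$ is $\BigO(n_s \log n_s)$; the greedy allocation is $\BigO(n_s)$; and emitting the placement of the $n_r$ virtual nodes together with the routings, whose total length is at most $n_s$, is $\BigO(n_r + n_s)$ --- giving the claimed $\BigO(n_s \log(n_s) + n_r)$ bound. The main obstacle is the first observation: that the routing cost is independent of how the virtual nodes are distributed, so that the joint placement-and-routing optimization collapses to a purely one-dimensional greedy; once that is in place, the feasibility bookkeeping and the $u_s = u_t$ corner case are routine.
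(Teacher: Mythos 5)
Your proposal is correct and follows essentially the same route as the paper: both observe that every clockwise elementary path mapping from $u_s$ to $u_t$ incurs the same fixed routing cost (each edge of $p^+(u_s,u_t)$ hosted exactly once), reserve one unit of placement at $u_s$ and $u_t$, and then greedily fill the cheapest nodes of the subpath up to capacity, yielding the $\BigO(n_s \log(n_s) + n_r)$ bound. Your added feasibility checks, the explicit exchange argument, and the $u_s = u_t$ corner case are welcome elaborations but do not change the underlying argument.
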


\begin{proof}
    We will use the procedure detailed in proof of Lemma \ref{lemma:section5:path:elementary} with $u_a$ set to $u_s$, and the parameters $k(u_i)$, $u_i \in V_s$, computed greedily as follows.
    First, all $k(u_i)$, $u_i \in V_s$, are set to zero, except for $k(u_s)$ and $k(u_t)$ which are set to one. Then, the nodes of $u_s, \ldots, u_t$ are sorted by increasing costs. Finally, substrate nodes are chosen greedily to host the remaining $n_r-2$ virtual nodes. A unit is iteratively added to $k(u_i)$ where $u_i$ is the cheapest node with some capacity left (i.e. such that $k(u_i) < c(u_i)$). 
    Overall, producing the mapping takes $\BigO(n_r + n_s log(n_s))$.
    
    The mapping obtained has minimum cost among those that are clockwise elementary paths from $u_s$ to $u_t$. Indeed, any mapping that is a clockwise elementary path from $u_s$ to $u_t$ has the same routing cost, using edges $(u_s, u_{s+1}), \ldots, (u_{t-1}, u_t)$ exactly once. However, the resulting mapping minimizes placement costs, as it uses, with respect to their capacities, the minimum cost substrate nodes of the subpath.
\end{proof}

Applying Lemma~\ref{lemma:section5:path:greedy}, for any $u_s, u_t \in V_s$, we obtain the final result:

\vspace{0.25cm}
\begin{corollary}\label{theorem:section5:path:cost}
    C-$\langle \Path_r \rightarrow \Cycle_s \rangle$ is in $\mathcal{P}$ and can be solved in $\BigO({n_s}^2 (n_s \log(n_s) + n_r))$
\end{corollary}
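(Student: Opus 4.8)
The plan is to combine the structural result of Lemma~\ref{lemma:section5:path:elementary} with the greedy subroutine of Lemma~\ref{lemma:section5:path:greedy} in a brute-force enumeration over the two endpoints of the elementary path. First I would invoke Lemma~\ref{lemma:section5:path:elementary}: there is a minimum cost feasible mapping $m^*$ of $\Path_r$ on $\Cycle_s$ that is an elementary path, so the concatenation of its routing paths is an elementary path of $\Cycle_s$. Such a path has two endpoints, $u_s = m^*_V(\ubar_1)$ and $u_t = m^*_V(\ubar_{n_r})$, and it coincides with either $p^+(u_s,u_t)$ or $p^-(u_s,u_t)$; in the first case $m^*$ is a clockwise elementary path from $u_s$ to $u_t$, and in the second case it is a clockwise elementary path from $u_t$ to $u_s$. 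In the degenerate situation where the whole path is placed on a single substrate node, both endpoints coincide. In all cases $m^*$ is a clockwise elementary path from $u'_s$ to $u'_t$ for some \emph{ordered} pair $(u'_s,u'_t)\in V_s\times V_s$.

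Next I would describe the algorithm: iterate over all $\BigO({n_s}^2)$ ordered pairs $(u_s,u_t)\in V_s\times V_s$, and for each one run the procedure of Lemma~\ref{lemma:section5:path:greedy} to compute a minimum cost feasible mapping that is a clockwise elementary path from $u_s$ to $u_t$, discarding the pair when no such feasible mapping exists (which the procedure detects, since a clockwise elementary path from $u_s$ to $u_t$ is feasible if and only if $\sum_{u\in p^+(u_s,u_t)} c(u)\ge n_r$). Then output the cheapest mapping encountered. For correctness, every mapping produced is a genuine feasible mapping of $\Path_r$ on $\Cycle_s$, so the output cost is an upper bound on the optimum; and conversely, by the first paragraph the optimal $m^*$ is a clockwise elementary path from some pair $(u'_s,u'_t)$, for which Lemma~\ref{lemma:section5:path:greedy} returns a mapping of cost at most $W(m^*)$, so the output cost equals the optimum. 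For the running time, there are ${n_s}^2$ ordered pairs and each call to the procedure of Lemma~\ref{lemma:section5:path:greedy} costs $\BigO(n_s \log(n_s) + n_r)$, giving a total of $\BigO({n_s}^2 (n_s \log(n_s) + n_r))$, which also certifies membership in \p.

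The main obstacle, I expect, is the bookkeeping of the first paragraph: being careful that enumerating \emph{ordered} pairs $(u_s,u_t)$ really captures both orientations of the elementary path around the cycle, and that the degenerate cases --- $u_s=u_t$ with all virtual nodes on one substrate node, and pairs whose clockwise subpath lacks enough node capacity --- are handled consistently with the conventions of Lemma~\ref{lemma:section5:path:greedy}. Everything else is routine, since Lemmas~\ref{lemma:section5:path:elementary} and~\ref{lemma:section5:path:greedy} already carry out the substantive work.
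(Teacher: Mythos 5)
Your proposal is correct and follows essentially the same route as the paper, which likewise obtains the corollary by applying Lemma~\ref{lemma:section5:path:greedy} to every pair $u_s, u_t \in V_s$ and taking the cheapest resulting mapping, with optimality guaranteed by Lemma~\ref{lemma:section5:path:elementary}. Your extra care about orientations and degenerate cases is sound but not a departure from the paper's argument.
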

\vspace{0.25cm}

For a virtual tree, Lemma~\ref{lemma:section5:path:elementary} does not seem to be adaptable. The complexity of this case is an open problem.

\vspace{0.25cm}
\begin{open}\label{open:section5:path:tree}
    Is C-$\langle \Tree_r \rightarrow \Cycle_s \rangle$  \np-complete ?
\end{open}

\paragraph{Virtual cycle}

In the case of a virtual cycle $\Cycle_r$, there is a minimum cost feasible mapping of $\Cycle_r$ on $\Cycle_s$ where the partial mapping of the virtual subpath $\ubar_1, \ldots, \ubar_{n_r}$ is a clockwise elementary path. To show this, one can consider a mapping $m$ for which this is not true. The procedure of Lemma~\ref{lemma:section5:path:elementary} is used with input parameters derived as previously from $m$ to produce a partial mapping of the subpath $\ubar_1, \ldots, \ubar_{n_r}$, which is then completed by routing $(\ubar_1, \ubar_{n_r})$ as in $m$. The mapping obtained has a lower or equal cost.

Similarly, for $u_s, u_t \in V_s$, a minimum cost feasible mapping for which the partial mapping of $\ubar_1, \ldots, \ubar_{n_r}$ is an elementary path from $u_s$ to $u_t$ can be computed in polynomial time through the greedy algorithm of Lemma~\ref{lemma:section5:path:greedy}. Thus:

\vspace{0.25cm}
\begin{corollary} \label{theorem:section5:cycle:cost}
    C-$\langle \Cycle_r \rightarrow \Cycle_s \rangle$ is in \p and can be solved in $\BigO({n_s}^2 (n_s \log(n_s) + n_r))$.
\end{corollary}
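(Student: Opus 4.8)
The plan is to mirror the strategy used for the virtual path in Corollary~\ref{theorem:section5:path:cost}: establish a structural normal form, enumerate a polynomial family of candidate mappings, build each with the greedy routine of Lemma~\ref{lemma:section5:path:greedy}, and return the cheapest. The structural normal form is the following: deleting the single virtual edge $(\ubar_{n_r}, \ubar_1)$ turns $\Cycle_r$ into the virtual path $\ubar_1, \ldots, \ubar_{n_r}$, so a feasible mapping of $\Cycle_r$ is a feasible mapping of this subpath together with a routing of $(\ubar_{n_r}, \ubar_1)$ between the images of $\ubar_1$ and $\ubar_{n_r}$. I would prove that some minimum cost feasible mapping $m$ of $\Cycle_r$ on $\Cycle_s$ has the property that its restriction to that subpath is a clockwise elementary path of $\Cycle_s$. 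Starting from an arbitrary optimal mapping violating this, I would apply the straightening procedure from the proof of Lemma~\ref{lemma:section5:path:elementary} to the subpath, fed with the vertex counts $k(u_i) = |\{\,j : m_V(\ubar_j) = u_i\,\}|$ and with the start node $u_a$ chosen exactly as there (an arbitrary occupied node if $m$ uses every substrate edge; otherwise a node whose clockwise-incoming edge is unused while its outgoing edge is used). This yields a partial mapping of the subpath that is a clockwise elementary path, with the same node-placement cost and no larger subpath-routing cost; I would then route $(\ubar_{n_r}, \ubar_1)$ back from $m'_V(\ubar_{n_r})$ to $m'_V(\ubar_1)$ along whichever of the two arcs $p^+$, $p^-$ is feasible and cheaper, matching the winding of the closed walk of $m$.

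\emph{Algorithm.} For every ordered pair $(u_s, u_t) \in V_s \times V_s$, invoke Lemma~\ref{lemma:section5:path:greedy} to compute, in $\BigO(n_s \log n_s + n_r)$, a minimum cost feasible mapping of the subpath $\ubar_1, \ldots, \ubar_{n_r}$ that is a clockwise elementary path from $u_s$ to $u_t$; then close the cycle by routing $(\ubar_{n_r}, \ubar_1)$ along the cheaper feasible arc from $u_t$ to $u_s$ --- either $p^+(u_t, u_s)$, which loads each remaining substrate edge once (total winding one), or $p^-(u_t, u_s)$, which loads every subpath edge a second time --- and record the resulting cost (arc weight sums can be read in $\BigO(1)$ after precomputing prefix sums around $\Cycle_s$, or in $\BigO(n_s)$ which is absorbed anyway). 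Return the minimum over all $\BigO(n_s^2)$ pairs. Correctness follows from the structural step: the optimum is attained by one of the enumerated shapes, and within a fixed shape the greedy is cost-optimal by Lemma~\ref{lemma:section5:path:greedy} (its subpath routing cost is forced and it minimizes placement cost over the available substrate nodes). The running time is $\BigO(n_s^2(n_s \log n_s + n_r))$.

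\emph{Main obstacle.} The delicate part is the cost accounting in the structural step. The straightening procedure of Lemma~\ref{lemma:section5:path:elementary} relocates the endpoints $\ubar_1$ and $\ubar_{n_r}$, so the routing of the closing edge $(\ubar_{n_r}, \ubar_1)$ cannot simply be kept "as in $m$" --- it must be rebuilt, and one has to verify that the combined change (straightening the subpath and re-routing the closing edge) never increases the total routing cost. I expect this to reduce to a short case analysis on the winding number of the closed walk $m_E(\ubar_1,\ubar_2) \cup \cdots \cup m_E(\ubar_{n_r},\ubar_1)$ around $\Cycle_s$ (which one can normalize to $0$ or $\pm 1$), combined with the observation that straightening only frees capacity on substrate edges and never lengthens the traversed arc; the degenerate cases $u_s = u_t$ and the contrast between $n_r \le n_s$ and $n_r > n_s$ each need a line. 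Everything else --- optimality of the greedy, the two-arc choice for the closing edge, and the time bound --- is routine given Lemmas~\ref{lemma:section5:path:elementary} and~\ref{lemma:section5:path:greedy}.
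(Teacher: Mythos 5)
Your proposal matches the paper's argument essentially verbatim: the paper likewise straightens the subpath $\ubar_1,\ldots,\ubar_{n_r}$ via the procedure of Lemma~\ref{lemma:section5:path:elementary}, closes the cycle by routing $(\ubar_1,\ubar_{n_r})$ in the same direction as in $m$, and then enumerates all $\BigO({n_s}^2)$ pairs $(u_s,u_t)$ with the greedy of Lemma~\ref{lemma:section5:path:greedy}. The cost-accounting issue you flag as the main obstacle is real but is passed over just as quickly in the paper, which simply asserts the straightened-and-reclosed mapping has lower or equal cost.
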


\subsection{Virtual wheel}

We have shown that C-$\langle \Cycle_r \rightarrow \Cycle_s \rangle$ is in \p. It is known that C-$\langle \Star_r \rightarrow \Cycle_s \rangle$ is in \p \cite{rost2015stars}, through a reduction of the problem to a series of $n_s$ integer flow problem, which is solvable in polynomial time \cite{ahuja1988flow}.
In a graph $G = (V, E)$, let $P_{s,t}$ be the set of paths connecting $s, t \in V$. The problem can be defined as follows:  

\begin{decproblem}
    \problemtitle{\flowlong (\flow)}
    \probleminput{A graph $\Graph = (V, E)$, two nodes $s, t \in V$, capacities $c:E \rightarrow \mathbb{N}$ and cost $w:E \rightarrow \mathbb{N}$, integers $N, R \in \mathbb{N}$}
    \problemquestion{Is there a $N$-flow function $f : P_{s, t} \Rightarrow \mathbb{N}$ such that (a) $\sum_{p \in P_{s,t}} f(p) = N$, (b) for $e \in E$, $\sum_{p \in P_{s,t}, e \in p} f(p) \le c(e)$ and $(c)$ $\sum_{p \in P_{s,t}} \sum_{e \in p} w(e) f(p) \le R$ ?}
\end{decproblem}

In what follows, we will show that C-$\langle \Wheel_r \rightarrow \Cycle_s \rangle$ can be solved in polynomial time too, by combining ideas from the two algorithms. 

First, let us show that there is a minimum cost feasible mapping of $\Cycle_r$ on $\Cycle_s$ where the partial mapping of the virtual subpath $\ubar_1, \ldots, \ubar_{n_r}$ is a clockwise elementary path.
From a feasible mapping $m$ of $\Wheel_r$ on $\Cycle_s$, the procedure of the proof of Lemma~\ref{lemma:section5:path:elementary}, with input parameters derived from $m$ as previously, constructs a partial mapping $m'$ of the virtual subpath $\ubar_1, \ubar_2, \ldots, \ubar_{n_r}$ that is a clockwise elementary path. 
Then $m'$ can be completed as follows: $(\ubar_{n_r}, \ubar_1)$ is routed in the same direction as in $m$; the center of $\Wheel_r$ is placed on the same node as in $m$; the path that are used to connect $\ubar_c$ to nodes of the outer cycle hosted on a node $u_i \in V_s$ in $m$ are reused in $m'$. 
Note that, if the placement of a virtual node has changed in $m'$ w.r.t. $m$, then the path routing the corresponding virtual edge has to be changed accordingly. 
It can be verified that the mapping obtained is feasible and has a lower or equal cost than $m$.

Secondly, Lemma~\ref{lemma:section5:path:greedy} is also valid for a virtual wheel. However the previous greedy procedure does not work, as routing the extra virtual edges yields extra costs and possible capacities violation.
We propose another algorithm that solves a series of \flow, similarly to what is done by \cite{rost2015stars}.

\vspace{0.25cm}
\begin{lemma}\label{cycle:wheel:lemma:algo}
    For  $u_s, u_t, u_c \in V_s$, a minimum cost feasible mapping, such that $\ubar_c$ is placed on $u_c$ and the partial mapping of the virtual subpath $\ubar_1, \ldots, \ubar_{n_r}$ is a clockwise elementary path from $u_s$ to $u_t$, can be computed in polynomial time 
\end{lemma}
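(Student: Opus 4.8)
The plan is to reduce the problem, with $u_s$, $u_t$ and $u_c$ fixed and the shape of the solution prescribed, to a constant number of \flow instances, in the spirit of the algorithm of \cite{rost2015stars} for C-$\langle \Star_r \rightarrow \Cycle_s \rangle$. Once $\ubar_c$ is placed on $u_c$ and the partial mapping of $\ubar_1, \ldots, \ubar_{n_r}$ must be a clockwise elementary path from $u_s$ to $u_t$, a large part of the cost is already determined: the placement of $\ubar_c$ contributes $w(u_c)$, and the routing of the virtual subpath uses every edge of $p^+(u_s, u_t)$ exactly once, contributing $\sum_{e \in p^+(u_s, u_t)} w(e)$, no matter how the outer nodes are spread over $u_s, \ldots, u_t$. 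The only decisions left are: (i) the number $y_i$ of outer virtual nodes placed on each $u_i$ of $p^+(u_s, u_t)$, which together with the elementary-path requirement fixes the placement of $\ubar_1, \ldots, \ubar_{n_r}$ (and the subpath routing) uniquely, subject to $y_s, y_t \ge 1$, $\sum_i y_i = n_r$, and $y_i \le c(u_i)$ (with $c(u_c)-1$ in place of $c(u_c)$ if $u_i = u_c$); (ii) the direction, clockwise or counterclockwise, in which the closing edge $(\ubar_{n_r}, \ubar_1)$ is routed between $u_t$ and $u_s$; and (iii) for each outer node, the direction of its spoke to $\ubar_c$.

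I would first branch on the $\BigO(1)$ possibilities for (ii). Fixing the direction of $(\ubar_{n_r}, \ubar_1)$ fixes the load that the subpath routing and the closing edge together put on every substrate edge, so a residual capacity $c'(e) \ge 0$ can be computed for each $e \in E_s$ (if some $c'(e) < 0$, or $c(u_c) = 0$, this branch is discarded). What remains is to solve (i) and (iii) at minimum joint cost, which I would phrase as a \flow instance on a network $H$: replace each substrate edge $e = (u_k, u_{k+1})$ by two opposite arcs of capacity $c'(e)$ and cost $w(e)$; add a super-source $\sigma$ with an arc $\sigma \to u_i$ of cost $w(u_i)$ and capacity $c(u_i)$ (or $c(u_c)-1$ if $u_i = u_c$) for every $u_i$ on $p^+(u_s, u_t)$, imposing in addition a lower bound of $1$ on $\sigma \to u_s$ and $\sigma \to u_t$; and take $u_c$ as the sink. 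A minimum-cost integral flow of value $n_r$ from $\sigma$ to $u_c$ in $H$, if one exists (min-cost flow with lower bounds is polynomial \cite{ahuja1988flow}), is then computed.

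The heart of the proof is the equivalence: feasible mappings of the prescribed form and feasible integral flows of value $n_r$ in $H$ correspond with the same cost up to the additive constant $w(u_c) + \sum_{e \in p^+(u_s, u_t)} w(e)$. Given a mapping, the number of outer nodes on $u_i$ becomes the flow on $\sigma \to u_i$ and the chosen spoke directions become the flow on the cycle arcs; the $\sigma$-arc costs and capacities account exactly for the outer-node placement costs and node capacities, and $c'(e)$ guarantees that the spokes, together with the already-counted subpath and closing-edge routings, respect every edge capacity. Conversely, from a flow one recovers the $y_i$ from the $\sigma$-arcs and recovers one spoke per outer node by decomposing the flow into $n_r$ unit paths from $\sigma$ to $u_c$; the one delicate point is that such a unit path must be a simple path of $\Cycle_s$, i.e.\ an arc. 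This holds because any optimal flow may be assumed to carry flow in at most one direction across each edge --- if it used both, cancelling the $2$-cycle on that edge strictly decreases the cost since $w(e) \ge 1$ --- so after cancellation every unit path is $\sigma \to u_i$ followed by a clockwise or a counterclockwise arc from $u_i$ to $u_c$, which is exactly a legal spoke routing.

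Finally, I would return the minimum, over the $\BigO(1)$ branches of (ii), of the min-cost flow value plus the constant, declaring infeasibility if every branch fails. As $H$ has $\BigO(n_s)$ nodes and arcs and the target flow value is $n_r$, each min-cost flow computation (say by successive shortest paths with potentials, after the standard transformation removing the two lower bounds) takes polynomial time, $\BigO(n_r (n_s \log n_s + n_s))$, so the whole procedure is polynomial. The main obstacle is not the flow machinery but the bookkeeping around it: isolating the constant part of the cost, folding the outer-cycle and closing-edge loads into residual edge capacities, handling the unit of capacity that $\ubar_c$ occupies at $u_c$ and the forced outer nodes on $u_s$ and $u_t$, and the $2$-cycle-cancellation step that turns an arbitrary optimal flow into an honest family of spoke routings.
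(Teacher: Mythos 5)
Your proposal is correct and follows essentially the same route as the paper: fix the routing directions by branching, fold the outer-cycle and closing-edge loads into residual edge capacities, attach a super-source to the nodes of $p^+(u_s,u_t)$ with node costs/capacities on its arcs, and solve a minimum-cost integer flow to $u_c$ whose unit paths give the spoke placements and routings. The only differences are cosmetic — you branch on $2$ rather than $8$ direction combinations by keeping the spokes of $\ubar_1$ and $\ubar_{n_r}$ inside a flow of value $n_r$ with lower bounds instead of pre-routing them and sending $n_r-2$ units — and your two minor slips (the additive constant should also include the closing-edge routing cost of the chosen branch, and since costs may be zero the $2$-cycle cancellation only weakly decreases cost, which still suffices) do not affect correctness.
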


\begin{proof}
    We further consider the case where $(\ubar_{n_r}, \ubar_1)$ is routed counterclockwise, $(\ubar_1, \ubar_c)$ counterclockwise and $(\ubar_{n_r}, \ubar_c)$ clockwise. This is one of eight possible combinations for the routing direction of these edges. The other cases can be solved analogously. 
    We will show that the partial mapping given by the above node placement and edge routing can be completed from a solution of the following \flow instance.
    
    The graph $\Graph$ is constructed from $\Cycle_s$ with an additional node $S$. The original edges of $\Cycle_s$ in $\Graph$ have the same cost as in $\Cycle_s$. For capacities, the edges in $(u_a, u_{a+1}), \ldots, (u_{z-1}, u_z)$ have the same capacities as in $\Cycle_s$ minus two, and the other edges have the same capacities as in $\Cycle_s$ minus one (this is due to the configuration of routing of $(\ubar_{n_r}, \ubar_1)$, $(\ubar_1, \ubar_c)$ and $(\ubar_{n_r}, \ubar_c)$). 
    $S$ is connected to every node in $u_s, u_{s+1}, \ldots, u_t$. 
    The capacity of $(S, u_i)$ is the capacity of $u_i$, except for $(S, u_s)$ and $(S, u_t)$ where one unit of capacity is retrieved. 
    
    Consider a minimum-cost $(n_r-2)$-integer flow from $S$ to $u_c$. The partial mapping can be completed as follows. 
    The key idea is that, as done by \cite{rost2015stars}, the path of a unit of flow indicates both the node placement of a virtual node and the routing of the edge connecting this node to $\ubar_c$.
    However, contrary to the algorithm for a star, a unit of flow can not correspond to any virtual node. Indeed, the virtual nodes must be correctly ordered on the $u_s, \ldots, u_t$ such that the virtual subpath $\ubar_1, \ldots, \ubar_{n_r}$ is an elementary path. Hence, the units of flow are ranked by the first node after $S$ that they use: from the closest to $u_s$ to the furthest. 
    Consider the first path $(S, u_i, \ldots, u_c)$ obtained. Then $\ubar_2$ is placed on $u_i$, and $(\ubar_2, \ubar_c)$ is routed on  $(u_i, \ldots, u_c)$. The process is then repeated for $\ubar_3$ on the second path, etc., until all virtual nodes are placed. 
    Finally the edges $(\ubar_i, \ubar_{i+1})$, $i \in \{1, \ldots, n_r-1\}$, are routed clockwise.
    
    This mapping respects, by construction of the \flow instance, both node and edge capacities of $\Cycle_s$. Since the flow has minimum cost, the mapping is optimal for the case considered.
\end{proof}

Let $T_{\flow}(\Graph, N)$ be the time complexity of a (polynomial) \flow algorithm for finding a $N$-flow in a graph $\Graph$.
Applying Lemma~\ref{cycle:wheel:lemma:algo} for any $u_s, u_t, u_c \in V_s$, we obtain the result:

\vspace{0.25cm}
\begin{corollary}\label{cycle:wheel:theorem}
    C-$\langle W_r \rightarrow C_s \rangle $ is in $\mathcal{P}$, and solvable in $\BigO({n_s}^3 \times T_{\flow}(\Graph_s, n_r))$
\end{corollary}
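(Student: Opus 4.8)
The plan is to reduce C-$\langle \Wheel_r \rightarrow \Cycle_s \rangle$ to polynomially many calls of the algorithm of Lemma~\ref{cycle:wheel:lemma:algo} by enumerating the possible "anchors" of an optimal solution. Concretely, I would iterate over every triple $(u_s, u_t, u_c) \in V_s \times V_s \times V_s$, and for each one run the algorithm of Lemma~\ref{cycle:wheel:lemma:algo} to obtain a minimum cost feasible mapping in which $\ubar_c$ is placed on $u_c$ and the partial mapping of the virtual subpath $\ubar_1, \ldots, \ubar_{n_r}$ is a clockwise elementary path from $u_s$ to $u_t$ (the degenerate cases, such as $u_s = u_t$ or an infeasible configuration, being handled by the $+\infty$ cost convention already used in Section~\ref{section4:cycle} and above). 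Among the $\BigO({n_s}^3)$ mappings produced this way, I would output one of minimum cost. This mirrors the way Corollary~\ref{theorem:section5:cycle:cost} is derived from Lemma~\ref{lemma:section5:path:greedy}, with the inner subroutine replaced by the flow-based one.

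For correctness, one inequality is immediate: every mapping returned by Lemma~\ref{cycle:wheel:lemma:algo} is feasible, so the cost of the mapping finally returned is at least the optimum. For the reverse inequality I would invoke the structural observation stated just before Lemma~\ref{cycle:wheel:lemma:algo}: from any feasible mapping of $\Wheel_r$ on $\Cycle_s$ one can build a feasible mapping of no greater cost whose restriction to the virtual subpath $\ubar_1, \ldots, \ubar_{n_r}$ is a clockwise elementary path (the wheel analogue of Lemma~\ref{lemma:section5:path:elementary}, obtained by the same straightening procedure together with a consistent re-routing of the spokes and of $(\ubar_{n_r},\ubar_1)$). Applying this to an optimal mapping yields an optimal mapping $\hat m$ of the required form; setting $u_s = \hat m_V(\ubar_1)$, $u_t = \hat m_V(\ubar_{n_r})$, $u_c = \hat m_V(\ubar_c)$, the call of Lemma~\ref{cycle:wheel:lemma:algo} on $(u_s, u_t, u_c)$ returns a mapping of cost at most that of $\hat m$, hence at most the optimum. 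Thus the enumeration necessarily hits an optimal mapping.

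For the running time, there are ${n_s}^3$ triples, and for each fixed triple the algorithm of Lemma~\ref{cycle:wheel:lemma:algo} solves a constant number (eight, one per routing-direction combination of $(\ubar_{n_r},\ubar_1)$, $(\ubar_1,\ubar_c)$, $(\ubar_{n_r},\ubar_c)$) of \flow instances on a graph with $\BigO(n_s)$ nodes and edges and flow value $n_r-2$, plus the elementary-path and path-ranking bookkeeping, which costs only $\BigO(n_s \log n_s + n_r)$ and is dominated by $T_{\flow}(\Graph_s, n_r)$. Hence each triple costs $\BigO(T_{\flow}(\Graph_s, n_r))$ and the total is $\BigO({n_s}^3 \times T_{\flow}(\Graph_s, n_r))$; since \flow is polynomially solvable \cite{ahuja1988flow}, this places the problem in \p.

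\textbf{Main obstacle.} There is no deep difficulty here; the whole weight of the argument rests on the structural reduction to clockwise-elementary-path mappings asserted (somewhat tersely) in the paragraph preceding Lemma~\ref{cycle:wheel:lemma:algo}. The one point deserving care is checking that the re-routing of the spokes performed in that reduction preserves feasibility and does not increase cost even when the placement of some outer-cycle node is moved by the straightening procedure; once that is granted, the corollary is a routine "enumerate the polynomially many anchor configurations of an optimal solution and solve each one optimally" argument.
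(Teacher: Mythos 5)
Your proposal is correct and follows essentially the same route as the paper, which obtains the corollary precisely by applying Lemma~\ref{cycle:wheel:lemma:algo} to all $\BigO({n_s}^3)$ triples $(u_s,u_t,u_c)$ and relying on the preceding structural reduction to clockwise-elementary-path mappings. Your write-up merely makes explicit the correctness and running-time bookkeeping that the paper leaves implicit.
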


\subsection{Virtual clique}

We propose a polynomial dynamic algorithm to solve \univne when the virtual network is a clique $\Clique_r$. 
First, as done for the dynamic algorithm of Section~\ref{section4:rest}, the key idea is to relax the knowledge of which nodes are placed on a subpath of $\Cycle_s$, and to consider instead the number of nodes placed on that subpath. 
In this more complex case of a substrate cycle, there may exist an exponential number of feasible virtual edge routings for a given node placement.
To tackle this exponentiality, we similarly relax the knowledge of the routing, considering instead the number of edges routed on a substrate edge. 
Let us now introduce some notations.

Let $W_V(u_i, k)$ (resp. $W_E(u_i, u_{i+1}, t)$) denote the part of the mapping cost induced by a substrate node $u_i$ (resp. edge $(u_i, u_{i+1}))$), $i \in \{1, \ldots, n_r\}$ when $k$ virtual nodes (resp. $t$ edges) are placed (resp. routed) on that node (resp. edge). In this trivial case, $W_V(u_i, k) = k \times w(u_i)$ (resp. $W_E(u_i, u_{i+1}, t) = t \times w(u_i, u_{i+1})$) when $k \le c(u_i)$ (resp. $t \le c(u_i, u_{i+1})$); and $W_V(u_i, k) = +\infty$ (resp. $W_E(u_i, u_{i+1}, t) = +\infty$) otherwise (as the mapping is unfeasible).

Let $W^*(u_i, k, a, b, c)$ denote the minimum mapping cost induced by a subpath $\{u_1, \ldots, u_i\}$, when $k$ virtual nodes are placed on the subpath, such that there are $a$, (resp. $b$, $c$) internal (resp. cut, external) edges of those $k$ nodes routed using $(u_i, u_{i+1})$.
Here, an edge is said to be internal (resp. cut, external) w.r.t. a node subset if it has both (resp. one, none) ends in the node subset.
Clearly, $W^*(u_1, k, a, b, c) = W_V(u_1, k)$. In the sequel, we deal with the recurrence formula for others $u_i$, $i \in \{1, \ldots, n_r\}$. \\

Consider a mapping and a substrate node $u_i$, $i \in \{1, \ldots, n_r-1\}$.
Let us denote $k$ (resp. $k'$) as the number of virtual nodes placed on the subpath $u_1, \ldots, u_{i}$ (resp.  $u_1, \ldots, u_{i+1}$) ; $A$, $B$ and $C$ (resp. $A'$, $B'$ and $C'$) the set of internal, cut and external edges of those $k$ (resp. $k'$) nodes ; and $a$,  $b$ and $c$ (resp. $a'$,  $b'$ and $c'$) the number of those internal, cut and external edges that use $(u_{i}, u_{i+1})$ (resp. $(u_{i+1}, u_{i+2})$). Let $\Phi(k', a', b', c')$ denote the set of possible values that $k$, $a$, $b$ and $c$ might hold for given $k', a', b'$ and $c'$.

\begin{figure}
\centering
\begin{tikzpicture}
\begin{scope} [xshift = -3.5cm]
    
        \node[vnode, minimum size=0.4cm] (v1) at (1.5*360/4:1.4){$\ubar_1$};
        \node[vnode, minimum size=0.4cm] (v2) at (0.5*360/4:1.4){$\ubar_2$};
        \node[vnode, minimum size=0.4cm] (v3) at (3.5*360/4:1.4){$\ubar_3$};
        \node[vnode, minimum size=0.4cm] (v4) at (2.5*360/4:1.4){$\ubar_4$};
        \draw[networkedge, magenta] (v1) -- (v2);
        \draw[networkedge, blue] (v2) -- (v3);
        \draw[networkedge, red] (v3) -- (v4);
        \draw[networkedge, orange] (v4) -- (v1);
        \draw[networkedge, green] (v1) -- (v3);
        \draw[networkedge, yellow] (v2) -- (v4);

\end{scope}

\begin{scope} [xshift = 3.5cm]

    \node[snode] (u1) at (-2.5, 0.8){$u_1$};
    \node[snode] (u2) at (0, 1.5){$u_2$};
    \node[snode] (u3) at (2.5, 0.8){$u_3$};
    \node[snode] (u4) at (1.75, -1.25){$u_4$};
    \node[snode] (u5) at (-1.75, -1.25){$u_5$};
    \draw[networkedge] (u1) -- (u2);
    \draw[networkedge] (u2) -- (u3);
    \draw[networkedge] (u3) -- (u4);
    \draw[networkedge] (u4) -- (u5);
    \draw[networkedge] (u5) -- (u1);

    \node[vnode, minimum size=0.4cm] (v1) at (-3.05, 1.4){$\ubar_1$};
    \node[vnode, minimum size=0.4cm] (v2) at (0, 2.35){$\ubar_2$};
    \node[vnode, minimum size=0.4cm] (v3) at (3.05, 1.40){$\ubar_3$};
    \node[vnode, minimum size=0.4cm] (v4) at (2.35, -1.85){$\ubar_4$};

     \draw[networkedge, magenta] (v1) to [bend left = 15] (u1);
     \draw[networkedge, magenta] (u1) to [bend left = 15] (u2);
     \draw[networkedge, magenta] (u2) to [bend left = 15] (v2);

     \draw[networkedge, blue] (v2) to [bend left = 15] (u2);
     \draw[networkedge, blue] (u2) to [bend left = 20] (u3);
     \draw[networkedge, blue] (u3) to [bend left = 20] (v3);
     
     \draw[networkedge, red] (v3) to [bend left = 20] (u3);
     \draw[networkedge, red] (u3) to [bend left = 20] (u4);
     \draw[networkedge, red] (u4) to [bend left = 20] (v4);
     
     \draw[networkedge, orange] (v4) to [bend left = 15] (u4);
     \draw[networkedge, orange] (u4) to [bend left = 15] (u5);
     \draw[networkedge, orange] (u5) to [bend left = 15] (u1);
     \draw[networkedge, orange] (u1) to [bend left = 15] (v1);
     
     \draw[networkedge, green] (v1) to [bend left = 0] (u1);
     \draw[networkedge, green] (u1) to [bend right = 7] (u5);
     \draw[networkedge, green] (u5) to [bend right = 7] (u4);
     \draw[networkedge, green] (u4) to [bend right = 7] (u3);
     \draw[networkedge, green] (u3) to [bend right = 0] (v3);
     
     \draw[networkedge, yellow] (v2) to [bend left = 0] (u2);
     \draw[networkedge, yellow] (u2) to [bend left = 13] (u3);
     \draw[networkedge, yellow] (u3) to [bend left = 13] (u4);
     \draw[networkedge, yellow] (u4) to [bend left = 0] (v4);
\end{scope}

\end{tikzpicture}
\caption{Example of a mapping of $\Clique_4$ on $\Cycle_5$}
\label{cycle:clique:fig}
\end{figure}

Figure \ref{cycle:clique:fig} provides an example of a virtual clique $\Clique_4$ on the left, and its mapping on a cycle $\Cycle_5$ on the right.
The mapping is the following: a virtual node is shown next to the substrate node it is placed on, and the routing of a virtual edge is shown on the cycle with its corresponding color.
For $u_i = u_2$, we have
$k = 2$;
$A = \{(\ubar_1, \ubar_2)\}$, $B = \{(\ubar_1, \ubar_3), (\ubar_2, \ubar_3), (\ubar_1, \ubar_4), (\ubar_2, \ubar_4)\}$ and $C = \{(\ubar_3, \ubar_4)\}$;
$a = 0$, $b = 2$ and $c = 0$.
And for $u_{i+1} = u_3$, we have 
$k' = 3$; 
$A' = \{(\ubar_1, \ubar_2), (\ubar_1, \ubar_3), (\ubar_2, \ubar_3)\}$, $B' = \{(\ubar_1, \ubar_4), (\ubar_2, \ubar_4), (\ubar_3, \ubar_4)\}$ and $C' = \{\emptyset\}$;
$a' = 1$, $b' = 2$ and $c' = 0$.

\vspace{0.25cm}
\begin{lemma}\label{cycle:clique:lemmarec}
    For $i \in \{1, \ldots, n_r-1\}$, the following recurrence formula holds:
    \begin{equation*}  \label{recurrence}
    \begin{split}
        W^*(u_{i+1}, k', a', b, c) =
        \min_{k, a, b, c \in \Phi(k', a', b', c')} &   W^*(u_i, k, a, b, c) \\
        & + W_E(u_i, u_{i+1}, a, b, c) + W_V(u_{i+1}, k'-k)
    \end{split}
    \end{equation*}
\end{lemma}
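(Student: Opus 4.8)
The plan is to prove the recurrence, as is standard for a dynamic program, by establishing the two inequalities between its sides: that the left-hand side is at most the right-hand side (an optimal partial mapping on the shorter arc $u_1, \ldots, u_i$ can be extended by the node $u_{i+1}$), and at least the right-hand side (an optimal partial mapping on the longer arc $u_1, \ldots, u_{i+1}$ restricts to a partial mapping on the shorter one). The set $\Phi(k',a',b',c')$ is, by design, exactly the set of states $(k,a,b,c)$ at $u_i$ produced by this restriction, so the substance of the proof is twofold: describing exactly how the restriction transforms the state, and checking that the prefix cost splits additively into a contribution over $u_1, \ldots, u_i$, the placement cost of $u_{i+1}$, and the routing cost of the edge $(u_i, u_{i+1})$.

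For the inequality $W^*(u_{i+1},k',a',b',c') \ge$ the right-hand side, I would take a partial mapping $m$ realizing $W^*(u_{i+1},k',a',b',c')$, let $\delta$ be the number of virtual nodes it places on $u_{i+1}$, put $k = k' - \delta$, and restrict $m$ to the arc $u_1, \ldots, u_i$ --- keeping the $k$ virtual nodes placed there and the portions of the routings running on the edges $(u_1,u_2), \ldots, (u_{i-1},u_i), (u_i,u_{i+1})$ --- thereby obtaining a partial mapping $\tilde m$ with some state $(k,a,b,c)$. Three bookkeeping facts then connect $(k',a',b',c')$ to $(k,a,b,c)$. First, $k' = k + \delta$. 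Second, when the reference node set grows from $\{u_1, \ldots, u_i\}$ to $\{u_1, \ldots, u_{i+1}\}$, the $k\delta$ clique edges from a node placed on $u_1, \ldots, u_i$ to one of the $\delta$ nodes placed on $u_{i+1}$ pass from cut to internal, the $\binom{\delta}{2}$ edges between two nodes placed on $u_{i+1}$ pass from external to internal, the $\delta(n_r - k')$ edges from a node placed on $u_{i+1}$ to a not-yet-placed node pass from external to cut, and every other clique edge keeps its class. Third, a clique edge is routed on both $(u_i,u_{i+1})$ and $(u_{i+1},u_{i+2})$ exactly when its routing has $u_{i+1}$ as an interior vertex, on exactly one of the two exactly when exactly one of its endpoints is placed on $u_{i+1}$ (the routing then being non-empty and leaving $u_{i+1}$ along one of its two incident edges), and on neither otherwise; in particular the $\binom{\delta}{2}$ edges between two nodes placed on $u_{i+1}$ use neither boundary edge, while each of the $\delta(n_r - \delta)$ clique edges with exactly one endpoint placed on $u_{i+1}$ uses exactly one of $(u_i,u_{i+1})$ and $(u_{i+1},u_{i+2})$. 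Together these facts show $(k,a,b,c) \in \Phi(k',a',b',c')$; and since the prefix cost of $m$ equals the cost of $\tilde m$, plus $W_V(u_{i+1},\delta)$, plus the routing cost of the $a+b+c$ clique edges that $m$ routes on $(u_i,u_{i+1})$, and since the cost of $\tilde m$ is at least $W^*(u_i,k,a,b,c)$, the inequality follows.

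For the reverse inequality I would run this construction backwards: given $(k,a,b,c) \in \Phi(k',a',b',c')$ and a partial mapping $\tilde m$ realizing $W^*(u_i,k,a,b,c)$, place $\delta = k' - k$ further virtual nodes on $u_{i+1}$ and route each clique edge incident to one of them as dictated by the third fact above --- empty paths for the $\binom{\delta}{2}$ edges between two of the new nodes, and single-edge extensions through $(u_i,u_{i+1})$ or through $(u_{i+1},u_{i+2})$ for the others, in the multiplicities encoded in the membership $(k,a,b,c) \in \Phi(k',a',b',c')$. By construction this is a feasible partial mapping on $u_1, \ldots, u_{i+1}$ with state $(k',a',b',c')$ and prefix cost exactly $W^*(u_i,k,a,b,c) + W_E(u_i,u_{i+1},a,b,c) + W_V(u_{i+1},k'-k)$, so minimizing over $(k,a,b,c) \in \Phi(k',a',b',c')$ yields the recurrence.

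The step I expect to be the main obstacle is making the routing-usage correspondence precise, because the substrate is a cycle rather than a path: the arc $\{u_1, \ldots, u_i\}$ has \emph{two} boundary edges, namely $(u_i,u_{i+1})$ and the wrap-around edge $(u_{n_s},u_1)$, and a routing may cross either of them. One has to verify that an internal (resp.\ external) clique edge is routed on $(u_i,u_{i+1})$ if and only if it is also routed on $(u_{n_s},u_1)$, whereas a cut edge is routed on exactly one of these two, so that the triple $(a,b,c)$ really does pin down the usage of \emph{both} boundary edges of the arc, and that this invariant is preserved when passing from $u_i$ to $u_{i+1}$. Getting the precise description of $\Phi$ right --- which of the single-edge extensions of the newly incident edges are forced rather than free, and how the forced choices interact with the capacity bound $a+b+c \le c(u_i,u_{i+1})$ (the value $+\infty$ absorbing the infeasible states) --- is where the care lies; once this is settled, the additivity of the node and edge costs and the preservation of feasibility are routine.
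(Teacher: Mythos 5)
Your overall strategy (two inequalities, obtained by restricting and extending partial mappings) is the natural one, and your restriction direction together with the bookkeeping of how edge classes migrate from $(A,B,C)$ to $(A',B',C')$ essentially reproduces the content of Lemma~\ref{cycle:clique:lemmaphi}. The gap is in the extension direction, and it is exactly the part the paper's proof is built around. In the paper, $W^*(u_i,k,a,b,c)$ is the minimum, over \emph{complete} feasible mappings of $\Clique_r$ on $\Cycle_s$ with the prescribed state at $u_i$, of the cost restricted to $\{u_1,\ldots,u_i\}$; so to prove ``$\le$'' you must exhibit, for each $(k,a,b,c)\in\Phi(k',a',b',c')$, a complete feasible mapping whose prefix cost at $u_{i+1}$ equals $W^*(u_i,k,a,b,c)+W_E(u_i,u_{i+1},a,b,c)+W_V(u_{i+1},k'-k)$. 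Appending one more node to a prefix-optimal partial mapping does not produce such an object. The paper instead splices two complete mappings: $m$, optimal for the state at $u_{i+1}$, and $m'$, optimal for the state at $u_i$; internal edges are routed as in $m'$, external edges as in $m$, and the $k(n_r-k)$ cut edges are re-stitched at the boundary. The crux --- for which your proposal has no counterpart --- is that an individual cut edge may cross $(u_i,u_{i+1})$ in $m$ but $(u_{n_s},u_1)$ in $m'$ (only the aggregate counts $b$ agree), so the two half-paths cannot simply be concatenated; the paper resolves this by pairing each such edge with one routed oppositely and swapping the placements of their (interchangeable) virtual endpoints.

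There is also a local error in your extension step: the newly incident cut edges (from a node placed on $u_{i+1}$ to a not-yet-placed node) are not routed by ``single-edge extensions.'' On a cycle, such an edge that crosses $(u_i,u_{i+1})$ continues counterclockwise through the entire arc $u_1,\ldots,u_i$ and leaves via $(u_{n_s},u_1)$; its cost on the interior edges of that arc is accounted for only because, at every earlier stage, it was an external edge counted in $c$ and routed through the whole prefix. Your third bookkeeping fact states the correct usage pattern at the two edges incident to $u_{i+1}$, but the ``prefix cost exactly equals'' claim depends on this earlier charging, which you neither state nor verify; you explicitly flag the two-boundary-edge invariant as the main obstacle and then leave it unresolved. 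As written, the proposal does not establish the recurrence.
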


\begin{proof}
    Consider a mapping $m$ that has minimum cost $W^*(u_{i+1}, k', a', b', c')$ on $\{u_1, \ldots, u_{i+1}\}$, but does not have minimum cost $W^*(u_i, k, a, b, c)$ on $\{u_1, \ldots, u_{i}\}$. Then there is a mapping $m'$ with this minimum cost. We assume that the nodes are placed by $m$ and $m'$ on $u_1, \ldots, u_i$ are the same, w.l.o.g. since all nodes are equivalent. Let us construct a new mapping $m^*$, obtained from $m$ and $m'$ as follows. 
    
    The placement of a virtual node $\ubar_j$, if it is one of the $k$ (resp. $n_r - k$) nodes placed on $u_1, \ldots, u_{i}$ (resp. on $u_{i+1}, \ldots, u_{n_s}$), is its placement in $m'$ (resp. $m$): $m^*_V(\ubar_j) = m'_V(\ubar_j)$ (resp. $m^*_V(\ubar_j) = m_V(\ubar_j)$).
    
    The routing of an internal (resp. external) edge $(\ubar_{j_1}, \ubar_{j_2})$ is its routing in $m'$ (resp. $m$): $m_E^*(\ubar_{j_1}, \ubar_{j_2}) = m_E'(\ubar_{j_1}, \ubar_{j_2})$ (resp. $m_E^*(\ubar_{j_1}, \ubar_{j_2}) = m_E(\ubar_{j_1}, \ubar_{j_2})$).
    
    The routing of a cut edge $(\ubar_{j_1}, \ubar_{j_2})$ uses parts of paths from $m_E$ and $m'_E$ as follows.
    If the routing path of the edge uses $(u_i, u_{i+1})$ (resp. $(u_{n_s}, u_1)$) in both mapping, then the new mapping is simply a concatenation from parts of the two previous paths: $m^*_E(\ubar_{j_1}, \ubar_{j_2}) = p^+(m'_V(\ubar_{j_1}), u_i) \cup p^+(u_i, m_V(\ubar_{j_2}))$ (resp. $m^*_E(\ubar_{j_1}, \ubar_{j_2}) = p^-(m'_V(\ubar_{j_1}), u_1) \cup p^-(u_1, m_V(\ubar_{j_2}))$).
    
    Otherwise, several cases are possible. Consider for instance that the routing of $(\ubar_{j_1}, \ubar_{j_2})$ uses $(u_i, u_{i+1})$ in $m$ and $(u_{n_s}, u_1)$ in $m'$, then there is another virtual edge, say $(\ubar_{j'_1}, \ubar_{j'_2})$, for which the opposite is true. Indeed, there are $b$ (resp. $k \times (n_r - k) - b$) cut edges that are routed using $(u_i, u_{i+1})$ (resp. $(u_{n_s}, u_1)$) in both mappings.
    This more complicated case can be tackled as follows. The node placement of $\ubar_{j_2}$ and $\ubar_{j'_2}$ are swapped in $m^*$. 
    This is possible since the two nodes are rigorously equivalent. Then the routing paths in $m^*$ combines parts of previous routing paths as follows:
    \begin{align*}
        & m^*_E(\ubar_{j_1}, \ubar_{j_2}) && = && p^-(m'_V(\ubar_{j_1}), u_1) \cup p^-(u_1, m_V(\ubar_{j'_2})) \\
        & m^*_E(\ubar_{j'_1}, \ubar_{j'_2}) && = && p^+(m'_V(\ubar_{j'_1}), u_i) \cup p^+(u_i, m_V(\ubar_{j_2}))
    \end{align*}
    This operation can be reapplied on other such cut edges, until the edge routing of $m^*$ is complete.
    
    We now show that $m^*$ is feasible and of better cost. 
    In that aim, let us show that the substrate nodes and edges utilization, over $u_1, \ldots, u_i$, are the same in $m^*$ as in $m'$. 
    For substrate nodes, this is trivial since the node placement is the same.
    For substrate edges, note that the routing of internal and cut edges over that segment comes from $m'$. Moreover, although the routing of external edges comes from $m$, observe that for both mappings $m$ and $m'$ there are $c$ external edges whose routing use every substrate edge of $u_1, \ldots, u_i$ (the other external edges use none of those). Thus on that segment the edge utilization is the same as in $m'$.
    
    Similarly, it can be shown that the substrate nodes and edges utilization on $u_{i+1}, \ldots, u_{n_r}$ are the same in $m^*$ as in $m$. 
    Finally, for edges $(u_i, u_{i+1})$ and $(u_{n_r}, u_1)$, the utilization of $m$, $m'$ and $m^*$ are the same.
    Since $m$ and $m'$ are feasible, this shows that $m^*$ is feasible. 
    
    This also shows that the cost of $m^*$ induced by $\{u_1, \ldots, u_i\}$ is the same as $m'$, i.e. $W^*(u_i, k, a, b, c)$. Since the utilization of $(u_i, u_{i+1})$ and $u_{i+1}$ is the same in $m$ and in $m^*$, the cost of $m^*$ on $\{u_1, \ldots, u_{i+1}\}$ is better than $W^*(u_{i+1}, k', a', b', c')$, which is a contradiction.
\end{proof}

The following lemma details how to compute the "$\min$" term of the previous formula.

\vspace{0.25cm}
\begin{lemma} \label{cycle:clique:lemmaphi}
    For given values of $k'$, $a'$, $b'$ and $c'$, the size of $\Phi(k', a', b', c')$ is in $\BigO({n_r}^5)$.
\end{lemma}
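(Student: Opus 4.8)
The plan is to bound the number of candidate tuples $(k,a,b,c)$ coordinate by coordinate, and to observe that the fourth coordinate is actually forced by the first three together with $(k',a',b',c')$. First, $k$ is the number of virtual nodes placed on $\{u_1,\dots,u_i\}$ and $k'$ the number placed on $\{u_1,\dots,u_{i+1}\}$, so $0\le k\le k'\le n_r$ and there are $\BigO(n_r)$ possibilities for $k$. Fixing $k$, the integer $a$ counts a subset of the $\binom{k}{2}$ virtual edges internal to the first $k$ nodes, hence $0\le a\le\binom{k}{2}=\BigO({n_r}^2)$; likewise $b$ counts a subset of the $k(n_r-k)$ cut edges, so $0\le b\le k(n_r-k)=\BigO({n_r}^2)$. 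Thus it suffices to prove that, once $k$, $a$, $b$ are chosen, the value of $c$ is uniquely determined by $k',a',b',c'$: this yields $|\Phi(k',a',b',c')|=\BigO(n_r)\cdot\BigO({n_r}^2)\cdot\BigO({n_r}^2)=\BigO({n_r}^5)$.

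To prove that $c$ is determined, I would fix a mapping $m$ realizing the transition (state $(k',a',b',c')$ at $u_{i+1}$, state $(k,a,b,c)$ at $u_i$) and double-count the number $N$ of virtual edges whose routing path uses the fixed ``wrap-around'' substrate edge $\hat e:=(u_{n_s},u_1)$. The cut of $\Cycle_s$ separating $\{u_1,\dots,u_i\}$ from the remaining nodes consists of exactly the two edges $(u_i,u_{i+1})$ and $\hat e$. Since every routing path is loop-free, it is a single arc of the cycle, and a short case analysis on where the two endpoints of a virtual edge are hosted gives the dichotomy: a virtual edge internal to, or external to, the first $k$ nodes uses both of these cut edges or neither, whereas a cut edge of this partition uses exactly one of them. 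Hence among the edges through $\hat e$ there are $a$ internal ones, $c$ external ones, and $k(n_r-k)-b$ cut ones, so $N=a+c+k(n_r-k)-b$. Performing the identical count for the cut at $u_{i+1}$ gives $N=a'+c'+k'(n_r-k')-b'$, and equating the two expressions yields
\[
    c \;=\; b-a+a'-b'+c'+k'(n_r-k')-k(n_r-k),
\]
which determines $c$ from $k,a,b$ and $k',a',b',c'$.

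The one delicate step is this case analysis: one must verify the ``both-or-neither'' versus ``exactly one'' alternative for internal, external and cut virtual edges with respect to $(u_i,u_{i+1})$ and $\hat e$, handling the degenerate sub-cases where the two endpoints of a virtual edge lie on the same substrate node (empty routing path, using neither cut edge) and where an arc wraps around past $u_1$. Everything else is bookkeeping. Combining the displayed identity with the coordinatewise bounds of the first paragraph then gives $|\Phi(k',a',b',c')|=\BigO({n_r}^5)$; as a byproduct, the minimization in Lemma~\ref{cycle:clique:lemmarec} may be carried out by iterating over $(k,a,b)$ alone.
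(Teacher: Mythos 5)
Your proof is correct, but it reaches the bound by a genuinely different route than the paper. The paper parametrizes the transition itself: writing $l=k'-k$ for the number of virtual nodes placed on $u_{i+1}$, it tracks how edges are reclassified between $(A,B,C)$ and $(A',B',C')$, introduces $\gamma$ (resp.\ $\delta$) for the number of $A'\cap B$ (resp.\ $B'\cap C$) edges routed through $(u_i,u_{i+1})$, and recovers $(a,b,c)$ as explicit functions of $(l,\gamma,\delta)$ and the primed state; the count $\BigO(n_r)\cdot\BigO({n_r}^2)\cdot\BigO({n_r}^2)$ then comes from the ranges of $l$, $\gamma$, $\delta$. You instead bound $(k,a,b)$ coordinatewise and eliminate $c$ via a conservation law: double counting the load on the antipodal cut edge $(u_{n_s},u_1)$ through the two $2$-edge cuts of the cycle, using the (correct, and worth stating as you do) observation that internal and external virtual edges use both or neither cut edge while cut edges use exactly one. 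Your identity $c=b-a+a'-b'+c'+k'(n_r-k')-k(n_r-k)$ checks out on the paper's Figure~\ref{cycle:clique:fig} example and yields the same $\BigO({n_r}^5)$ bound. What each approach buys: the paper's $(l,\gamma,\delta)$ parametrization describes constructively how to enumerate the predecessor states (and is closer to what an implementation of the recurrence would iterate over), whereas your argument is more conceptual, gives a closed form showing the state space is really three-dimensional once $(k',a',b',c')$ is fixed, and lets the minimization in Lemma~\ref{cycle:clique:lemmarec} range over $(k,a,b)$ alone. Both are valid proofs of the stated bound.
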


\begin{proof}
    Let us note $l = k' - k$. Then:
    \begin{itemize}
        \item There are $\frac{l (l-1)}{2}$ edges of $A'$ that are in $C$: those are the edges that connect nodes that are both placed on $u_{i+1}$, thus do not use $(u_{i}, u_{i+1})$.
        \item There are $kl$ edges of $A'$ that are in $B$. 
        Let us denote  $\gamma$ the number of those edges which use $(u_i, u_{i+1})$. 
        Then $kl \ge \gamma \ge 0$
        Note that those $\gamma$ edges do not use $(u_{i+1}, u_{i+2})$.
        \item There are $(n_r-(k+l))l$ edges of $B'$ that are in $C$.
        Let us denote $\delta$ the number of those edges which use $(u_i, u_{i+1})$. Then $(n_r-(k+l))l \ge \delta \ge 0$.  
        Note that those $\delta$ edges do not use $(u_{i+1}, u_{i+2})$. 
        \item Finally, all the edges of $A'$ remain in $A$.
    \end{itemize}
    
    This allows us to derive the following equations: $c = c' + \delta$, $b = b'- ((n_r-(k+l))l - \delta) + \gamma$ and $a = a' - (kl - \gamma)$.
    
    The observation that $l$ can take at most $n_r$ different values, and both $\delta$ and $\gamma$ can take at most ${n_r}^2$ different values completes the proof.
\end{proof}

Finally, we obtain the complexity of the polynomial dynamic algorithm resulting from the formula of Lemma \ref{cycle:clique:lemmarec}.%

\vspace{0.25cm}
\begin{theorem} \label{cycle:clique:theorem}
    C-$\langle \Clique_r \rightarrow \Cycle_s \rangle$ is in \p and can be solved in $\BigO({n_r}^{12} n_s)$
\end{theorem}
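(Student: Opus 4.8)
The plan is to convert the recurrence of Lemma~\ref{cycle:clique:lemmarec} into a left-to-right dynamic program along the substrate cycle. First I would fix an arbitrary node $u_1 \in V_s$ and an orientation, so that the substrate nodes are $u_1, u_2, \ldots, u_{n_s}$ and $(u_{n_s}, u_1)$ is the distinguished \emph{closing} edge; the arc $\{u_1, \ldots, u_i\}$ then plays the role of the ``processed'' part. The table entries are $W^*(u_i, k, a, b, c)$ for $i \in \{1, \ldots, n_s\}$, $k \in \{0, \ldots, n_r\}$, and $a, b, c \in \{0, \ldots, \binom{n_r}{2}\}$, with the convention that combinations corresponding to no configuration (such as $a > \binom{k}{2}$, $b > k(n_r-k)$ or $c > \binom{n_r-k}{2}$) carry value $+\infty$. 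The base case is $W^*(u_1, k, 0, b, c) = W_V(u_1, k)$, with $W^*(u_1, k, a, b, c) = +\infty$ whenever $a \neq 0$, since an edge whose two endpoints are both placed on $u_1$ is routed on the empty path and uses no substrate edge. Each value $W_V(\cdot)$ and $W_E(\cdot)$ is obtained in $\BigO(1)$ time from the node/edge capacities and costs, and the table is filled in order of increasing $i$ using Lemma~\ref{cycle:clique:lemmarec}.

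Next I would close the cycle and read off the optimum. Since $\{u_1, \ldots, u_i\}$ is an arc, any virtual edge whose two endpoints both lie outside it is routed either entirely inside the complementary arc (using none of the substrate edges $(u_1,u_2),\ldots,(u_{i-1},u_i),(u_i,u_{i+1}),(u_{n_s},u_1)$) or entirely around it (using all of them); consequently, at $i = n_s$ every virtual edge is internal, $b = c = 0$, and $a$ counts exactly the virtual edges routed through the closing edge $(u_{n_s}, u_1)$. A minimum cost feasible mapping therefore has value $\min_{a} \big( W^*(u_{n_s}, n_r, a, 0, 0) + W_E(u_{n_s}, u_1, a) \big)$, and a mapping attaining it is reconstructed from back-pointers. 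Correctness is precisely what Lemma~\ref{cycle:clique:lemmarec} provides: its construction shows that every finite table entry is the restricted cost of an actual feasible mapping, while an optimal mapping restricted to each arc witnesses an entry no larger than its own cost; the two inequalities together give equality at the top.

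For the running time I would count $\BigO(n_s)$ choices of $u_i$, $\BigO(n_r)$ choices of $k$, and $\BigO(n_r^2)$ choices for each of $a, b, c$ (at most $\binom{n_r}{2}$ virtual edges can share a single substrate edge), giving $\BigO(n_s\, n_r^{7})$ table entries. By Lemma~\ref{cycle:clique:lemmaphi}, evaluating the minimum in the recurrence for one entry ranges over a set $\Phi(k', a', b', c')$ of size $\BigO(n_r^{5})$, and each summand is computed in $\BigO(1)$; the final minimization over $a$ costs only $\BigO(n_r^{2})$ more. Multiplying, the total time is $\BigO(n_s\, n_r^{7} \cdot n_r^{5}) = \BigO(n_r^{12} n_s)$, which also places C-$\langle \Clique_r \rightarrow \Cycle_s \rangle$ in \p.

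The step I expect to be the main obstacle is the interface between the linear sweep and the cyclic substrate: one has to make sure the closing edge $(u_{n_s}, u_1)$ is charged exactly once (here it is paid for only in the final minimization and never inside a transition) and that the base case at $u_1$ correctly forbids $a \neq 0$ while not prematurely committing the cut and external edges that leave $u_1$ counterclockwise. The genuine combinatorial content, namely the validity of the transition and the enumeration of consistent predecessor states, has already been isolated in Lemmas~\ref{cycle:clique:lemmarec} and~\ref{cycle:clique:lemmaphi}, so what remains is careful bookkeeping together with the complexity count above.
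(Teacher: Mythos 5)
Your proposal is correct and follows essentially the same route as the paper: an iterative left-to-right evaluation of the recurrence of Lemma~\ref{cycle:clique:lemmarec} over the $\BigO(n_r^{7})$ states per substrate node, each transition costing $\BigO(n_r^{5})$ by Lemma~\ref{cycle:clique:lemmaphi}, followed by a final minimization over $a$ that adds $W_E(u_{n_s}, u_1, a)$. Your explicit treatment of the base case (forcing $a = 0$ at $u_1$) and of the closing edge being charged exactly once is slightly more careful than the paper's, but the algorithm and the $\BigO(n_r^{12} n_s)$ count are identical.
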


\begin{proof}
    The values of $W^*$ are computed iteratively on the cycle, starting from $u_1$, using  formula of Lemma \ref{cycle:clique:lemmarec}.
    At each node, there are $\BigO({n_r}^7)$ values of $W^*$ to be computed. Indeed, the possible values in $k'$ are in $\BigO(n_r)$, and those of $a'$ (resp. $b'$, $c'$) are in $\BigO({n_r}^2)$, since there are $\frac{k (k-1)}{2}$ (resp. $k(n_r -k)$,  $\frac{(n_r - k)(n_r - k - 1)}{2}$) internal (resp. cut, external) edges.
    
    Moreover, from Lemma \ref{cycle:clique:lemmaphi}, each value of $W^*$ is computed in $\BigO({n_r}^5)$.
    The algorithm finds all values of $W^*(u_{n_s}, n_r, a, 0, 0)$, $a \in \{0, \ldots, |E_r|\}$, in  $\BigO({n_r}^{12} n_s)$. The last operation is to add the edge value $W_E(u_{n_s}, u_1, a)$ and compute the minimum value, which takes $\BigO({n_r}^2)$.
\end{proof}

The algorithm is clearly impracticable. The simple fact that there is two paths between two nodes in a cycle makes the dynamic algorithm approach much more expensive than on a tree. Note that this approach can be further used for any virtual topology on a substrate cycle, but the algorithm in the general case becomes exponential. In some cases, such as paths or wheels, it might be adapted to be polynomial, but at the cost of high computation complexity. Hence the hardness of the problem in the general case remain an open problem.

\vspace{0.25cm}
\begin{open}\label{cycle:clique:open}
    Is E-$\langle \Graph_r \rightarrow \Cycle_s \rangle$  \np-complete ?
\end{open}
\vspace{0.25cm}

Although we have proven several polynomial cases above, we conjecture that the problem is \np-complete for a general virtual graph on a substrate cycle, since the hardness of \univne on a cycle can be expected to be at least as hard as on a substrate path.

\section{Conclusion}

The results of this article, together with existing results, are synthesized in Table~\ref{table-complexity}, where a green (resp. yellow, red) cell indicates that the case is solvable in polynomial time (resp. is an open problem, is \np-complete). 
When the results differ over the variant, the corresponding cell is split with the left (resp. right) half representing the E-\univne (resp. C-\univne) result. An arrow indicates when a result is implied by an other.

\newcolumntype{C}[1]{>{\centering\let\newline\\\arraybackslash\hspace{0pt}}m{#1}}
\begin{table}[h!]
    \centering
    \renewcommand{\arraystretch}{1.5} 
    \begin{tabular}{C{8pt}|C{45pt}|C{36pt}|C{36pt}|C{36pt}|C{36pt}|C{30pt}|C{30pt}|C{36pt}|}
    
    \multicolumn{2}{c}{}&\multicolumn{7}{c}{\bf Substrate}\\
    \hhline{~~-------}
    \multicolumn{2}{c|}{}& Star & Tree &  Path & Cycle & \multicolumn{2}{C{60pt}|}{Clique} & Gen. \\ \hhline{~--------} 

    \multirow{7}[0]{*}{\begin{sideways}\bf Virtual ~~~\end{sideways}}  & Star & \greencell $\rightarrow$ & \greencell (\cite{rost2015stars}) &\greencell $\leftarrow$ & \greencell (\cite{rost2015stars})  & \multicolumn{2}{c|}{\greencell $\rightarrow$} & \greencell (\cite{rost2015stars}) \\ \hhline{~--------} 

    & Tree & \redcell Thm~\ref{theorem:section4:tree:exist} & \redcell Thm~\ref{theorem:section4:tree:exist2} & \yellowcell OP~\ref{theorem:section4:gen:open}  &  \yellowcell OP~\ref{open:section5:path:tree}  & \yellowcell OP~\ref{theorem:section3:wheeltree:open} & \redcell $\downarrow$ & \redcell $\downarrow$ \\ \hhline{~--------} 

    & Path &  \greencell $\rightarrow$ & \greencell Thm~\ref{theorem:section4:rest:pathcost} & \greencell $\leftarrow$ & \greencell Thm~\ref{theorem:section5:path:cost} & \greencell Prop~\ref{proposition:section3:cyclepath:hamilto} & \redcell Thm~\ref{theorem:section3:cyclepath:cost}  & \redcell (\cite{wu2020path}) \\ \hhline{~--------} 

    & Cycle & \greencell $\rightarrow$ & \greencell Thm~\ref{theorem:section4:cycle:cost} & \greencell $\leftarrow$ & \greencell Thm~\ref{theorem:section5:cycle:cost} & \greencell Prop~\ref{proposition:section3:cyclepath:hamilto} & \redcell Thm~\ref{theorem:section3:cyclepath:cost}  & \redcell Thm~\ref{E-CrGs}  \\ \hhline{~--------} 
    
    & Wheel & \greencell $\rightarrow$ & \greencell Thm~\ref{theorem:section4:rest:wheelcost} & \greencell $\leftarrow$ & \greencell Thm~\ref{cycle:wheel:theorem} & \yellowcell OP~\ref{theorem:section3:wheeltree:open} & \redcell Thm~\ref{theorem:section3:wheeltree:cost}& \redcell Cor~\ref{corollary:section2:cyclewheel:E-WrGs}  \\ \hhline{~--------} 
    
    & Clique & \greencell $\rightarrow$ & \greencell Thm~\ref{theorem:section4:rest:cliquecost} & \greencell $\leftarrow$ & \greencell Thm~\ref{cycle:clique:theorem} & \multicolumn{2}{c|}{\redcell Thm~\ref{theorem:section3:clique:exist} } & \redcell Thm~\ref{theorem:section2:cyclewheel:E-KrGs} \\ \hhline{~--------} 

    & Gen. & \redcell $\uparrow$ & \redcell $\leftarrow$ $\rightarrow$ & \redcell Thm~\ref{theorem:section4:gen:exist} & \yellowcell OP~\ref{cycle:clique:open}  & \multicolumn{2}{c|}{ \redcell $\uparrow$ } & \redcell $\uparrow$ \\ \hhline{~--------} 

    \end{tabular}
    \caption{Complexity of \univne}
    \label{table-complexity}
\end{table}

We have proposed a large study of the frontier between polynomial and \np-hard cases of \univne. These complexity results give more insights about the computational challenges of \vne. 
Since the considered network topologies correspond to real use-cases, our polynomial algorithms can be directly used in practice, particularly those on trees, such as in data-centers \cite{ballani2011datacenter}.
They are directly linked to practical applications as the considered topologies correspond to networks encountered in practice.

The problems that are left opened raise several questions on intriguing cases, like E-$\langle \Wheel_r \rightarrow \Clique_s \rangle$. Other topologies may also be of significant interest for further study, like star of stars (called 2-stars by \cite{pankratov2023tree}) or cactuses.

Another important direction is how the results evolve when several virtual graphs have to be mapped simultaneously on the substrate graph. Dynamic programming algorithms on substrate trees, for instance, might be extended for embedding several virtual paths. 

Implementing real virtualized networks requires further dedicated optimization methods. Among possible approaches, Mathematical Programming has been shown to be efficient for solving network design problems \cite{kerivin2005design, benhamiche2020capacitated}.
To this aim, finding polyhedral characterizations for the polynomial cases we have proposed could help to improve branch-and-cut approaches. Another perspective is to use decomposition schemes for large integer programs for which the pricing subproblems can be solved in polynomial time.

\bibliographystyle{abbrv}
\bibliography{bibliography}

\begin{thebibliography}{10}

\bibitem{addis2025federatedlearning}
B.~Addis, S.~Boumerdassi, R.~Riggio, S.~Secci, et~al.
\newblock Function placement for in-network federated learning.
\newblock {\em Computer Networks}, 256:110900, 2025.

\bibitem{ahuja1988flow}
R.~K. Ahuja, T.~L. Magnanti, and J.~B. Orlin.
\newblock {\em Network flows}.
\newblock Cambridge, Mass.: Alfred P. Sloan School of Management,
  Massachusetts, 1988.

\bibitem{tieves2016complexity}
E.~Amaldi, S.~Coniglio, A.~M. Koster, and M.~Tieves.
\newblock On the computational complexity of the virtual network embedding
  problem.
\newblock {\em Electronic Notes in Discrete Mathematics}, 52:213--220, 2016.

\bibitem{anderson2005virtualization}
T.~Anderson, L.~Peterson, S.~Shenker, and J.~Turner.
\newblock Overcoming the internet impasse through virtualization.
\newblock {\em Computer}, 38(4):34--41, 2005.

\bibitem{ballani2011datacenter}
H.~Ballani, P.~Costa, T.~Karagiannis, and A.~Rowstron.
\newblock Towards predictable datacenter networks.
\newblock In {\em Proceedings of the ACM SIGCOMM 2011 Conference}, pages
  242--253, 2011.

\bibitem{benhamiche2020capacitated}
A.~Benhamiche, A.~R. Mahjoub, N.~Perrot, and E.~Uchoa.
\newblock Capacitated multi-layer network design with unsplittable demands:
  Polyhedra and branch-and-cut.
\newblock {\em Discrete Optimization}, 35:100555, 2020.

\bibitem{cosares1994sonet}
S.~Cosares and I.~Saniee.
\newblock An optimization problem related to balancing loads on sonet rings.
\newblock {\em Telecommunication Systems}, 3:165--181, 1994.

\bibitem{wesley20205G}
W.~da~Silva~Coelho, A.~Benhamiche, N.~Perrot, and S.~Secci.
\newblock Network function mapping: From 3g entities to 5g service-based
  functions decomposition.
\newblock {\em IEEE Communications Standards Magazine}, 4(3):46--52, 2020.

\bibitem{wesley2021slicing}
W.~da~Silva~Coelho, A.~Benhamiche, N.~Perrot, and S.~Secci.
\newblock Function splitting, isolation, and placement trade-offs in network
  slicing.
\newblock {\em IEEE Transactions on Network and Service Management},
  19(2):1920--1936, 2021.

\bibitem{eppstein2009clique}
D.~Eppstein.
\newblock Finding large clique minors is hard.
\newblock {\em Journal of Graph Algorithms and Applications}, 13(2):197--204,
  2009.

\bibitem{figiel2021tree}
A.~Figiel, L.~Kellerhals, R.~Niedermeier, M.~Rost, S.~Schmid, and P.~Zschoche.
\newblock Optimal virtual network embeddings for tree topologies.
\newblock In {\em Proceedings of the 33rd ACM Symposium on Parallelism in
  Algorithms and Architectures}, pages 221--231, 2021.

\bibitem{fischer2013survey}
A.~Fischer, J.~F. Botero, M.~T. Beck, H.~De~Meer, and X.~Hesselbach.
\newblock Virtual network embedding: A survey.
\newblock {\em IEEE Communications Surveys \& Tutorials}, 15(4):1888--1906,
  2013.

\bibitem{garey1974problems}
Garey, Johnson, and Stockmeyer.
\newblock Some simplfied np-hard problems.
\newblock {\em STOC '74: Proceedings of the sixth annual ACM symposium on
  Theory of computing}, 1974.

\bibitem{garey1979guide}
M.~R. Garey and D.~S. Johnson.
\newblock {\em Computer and Intractability: A guide to the theory of
  NP-completness}.
\newblock W. H. Freeman and Company, 1979.

\bibitem{kerivin2005design}
H.~Kerivin and A.~R. Mahjoub.
\newblock Design of survivable networks: A survey.
\newblock {\em Networks: An International Journal}, 46(1):1--21, 2005.

\bibitem{monien1988mincut}
B.~Monien and I.~H. Sudborough.
\newblock Min cut is np-complete for edge weighted trees.
\newblock {\em Theoretical Computer Science}, 58(1-3):209--229, 1988.

\bibitem{nishizeki2001edpp}
T.~Nishizeki, J.~Vygen, and X.~Zhou.
\newblock The edge-disjoint paths problem is np-complete for series--parallel
  graphs.
\newblock {\em Discrete Applied Mathematics}, 115(1-3):177--186, 2001.

\bibitem{pankratov2023tree}
S.~Pankratov, V.~Aksenov, and S.~Schmid.
\newblock On the complexity of the virtual network embedding in specific tree
  topologies.
\newblock {\em arXiv preprint arXiv:2311.05474}, 2023.

\bibitem{pulleyblankSGP1979}
W.~R. Pulleyblank.
\newblock A note on graphs spanned by eulerian graphs.
\newblock {\em Journal of Graph Theory}, 3(3):309--310, 1979.

\bibitem{rost2015stars}
M.~Rost, C.~Fuerst, and S.~Schmid.
\newblock Beyond the stars: Revisiting virtual cluster embeddings.
\newblock {\em ACM SIGCOMM Computer Communication Review}, 45(3):12--18, 2015.

\bibitem{rost2020hardness}
M.~Rost and S.~Schmid.
\newblock On the hardness and inapproximability of virtual network embeddings.
\newblock {\em IEEE/ACM transactions on networking}, 28(2):791--803, 2020.

\bibitem{tootaghaj2020sdwan}
D.~Z. Tootaghaj, F.~Ahmed, P.~Sharma, and M.~Yannakakis.
\newblock Homa: An efficient topology and route management approach in sd-wan
  overlays.
\newblock In {\em IEEE INFOCOM 2020-IEEE Conference on Computer
  Communications}, pages 2351--2360. IEEE, 2020.

\bibitem{wu2020path}
H.~Wu, F.~Zhou, Y.~Chen, and R.~Zhang.
\newblock On virtual network embedding: Paths and cycles.
\newblock {\em IEEE Transactions on Network and Service Management},
  17(3):1487--1500, 2020.

\bibitem{yannakakis1985mincut}
M.~Yannakakis.
\newblock A polynomial algorithm for the min-cut linear arrangement of trees.
\newblock {\em Journal of the ACM (JACM)}, 32(4):950--988, 1985.

\end{thebibliography}

\end{document}